\newtheorem{theorem}{Theorem}
\newtheorem{lemma}{Lemma}
\newtheorem{definition}{Definition}
\newtheorem{remark}{Remark}
\newtheorem{corollary}{Corollary}
\newtheorem{proposition}{Proposition}
\newcounter{claimctr}[lemma]
\newenvironment{claim}{\medskip\par\noindent\refstepcounter{claimctr}\textbf{Claim \theclaimctr.} }{\smallskip\par}
\newtheorem{conjecture}{Conjecture}
\newcommand{\dash}[2]{\raisebox{.75mm}{\rule{#1mm}{#2mm}}}
\newcommand{\miss}{\dash{.6}{0}\dash{.75}{.2}\dash{.65}{0}\dash{.75}{.2}\dash{.65}{0}\dash{.75}{.2}\dash{.6}{0}}
\newcommand{\com}[1]{}
\newcounter{propctr}
\newenvironment{myenumerate}
{\vspace{-0.05in}
\begin{enumerate}
\itemsep -0.04in
\renewcommand{\theenumi}{\textup{($\alph{enumi}$)}}
\renewcommand{\labelenumi}{\theenumi}
}
{\end{enumerate}}
\newenvironment{proofobs}{%
  \vspace{-0.05in}
  \list{}{%
    \leftmargin0.5cm   
    \rightmargin\leftmargin
  }
  \item\relax
}
{\endlist}
\begin{document}
\title{Extending some results on the second neighborhood conjecture}
\author{Suresh Dara\footnote{Department of Mathematics, VIT Bhopal University. email: \texttt{suresh.dara@vitbhopal.ac.in}}\and Mathew C. Francis\footnote{Indian Statistical Institute, Chennai. email: \texttt{\{mathew,dalujacob\}@isichennai.res.in}}\and Dalu Jacob\footnotemark[2] \and N. Narayanan\footnote{Department of Mathematics, Indian Institute of Technology Madras. email: \texttt{naru@iitm.ac.in}}}
\date{}
\maketitle

\begin{abstract}
If in a directed graph, $v$ is an out-neighbor of $u$ and $w$ is an out-neighbor of $v$ but not of $u$, then $w$ is said to be a \emph{second out-neighbor} of $u$. A vertex in a directed graph is said to have a \emph{large second neighborhood} if it has at least as many second out-neighbors as out-neighbors. The Second Neighborhood Conjecture, first stated by Seymour, asserts that there is a vertex having a large second neighborhood in every oriented graph (a directed graph without loops or digons). It is straightforward to see that the conjecture is true for any oriented graph whose underlying undirected graph is bipartite. We extend this to show that the conjecture holds for oriented graphs whose vertex set can be partitioned into an independent set and a 2-degenerate graph. Fisher proved the conjecture for tournaments and later Havet and Thomass\'e provided a different proof for the same using median orders of tournaments. Havet and Thomass\'e in fact showed the stronger statement that if a tournament contains no sink, then it contains at least two vertices with large second neighborhoods. Using their techniques, Fidler and Yuster showed that the conjecture remains true for tournaments from which either a matching or a star has been removed. We extend this result to show that the conjecture holds even for tournaments from which both a matching and a star have been removed. This implies that a tournament from which a matching has been removed contains either a sink or two vertices with large second neighborhoods.
\end{abstract}
\section{Introduction}
Throughout this article, all graphs are finite and simple. Let $D =(V,E)$ be a digraph with vertex set $V(D)$ and arc set $E(D)$. As usual, $N_D^+(v)$ (resp. $N_D^-(v)$) denotes the out-neighborhood (resp. in-neighborhood) of a vertex $v\in V(D)$. Let $N_D^{++}(v)$ denote the second out-neighborhood of $v$, which is the set of vertices whose distance from $v$ is exactly 2, i.e. $N_D^{++}(v)=\{u\in V(D)\setminus (\{v\}\cup N^+_D(v)) \colon N^-_D(u)\cap N^+_D(v)\neq\emptyset\}$. The \emph{out-degree} of a vertex $v$ is defined to be $|N^+_D(v)|$. The minimum out-degree of $D$ is the minimum value among the out-degrees of all vertices of $D$. We omit the subscript if the digraph under consideration is clear from the context. 

A vertex $v$ in a digraph $D$ is said to have a \emph{large second neighborhood} if $|N^{++}(v)|\geq |N^+(v)|$. Oriented graphs are digraphs without loops or digons: i.e. they can be obtained by orienting the edges of a simple undirected graph. Paul Seymour conjectured the following in 1990 (see~\cite{dean1995squaring}):
\begin{conjecture}[The Second Neighborhood Conjecture]\label{ssnc}
Every oriented graph contains a vertex with a large second neighborhood.
\end{conjecture}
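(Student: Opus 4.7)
The statement above is the full Second Neighborhood Conjecture, a well-known open problem of Seymour; what follows is only a plan of attack, not a claim that it closes the conjecture. The natural framework is proof by contradiction applied to a counterexample $G$ that is minimum with respect to $|V(G)|$. Every vertex of $G$ then has positive out-degree (a sink trivially has $|N^{++}|\ge|N^{+}|$), and $G$ may be assumed weakly connected (otherwise apply minimality to a component). The main step I would attempt is a median-order argument in the spirit of Havet and Thomass\'e: fix a linear ordering $v_1,v_2,\dots,v_n$ of $V(G)$ minimizing the number of backward arcs and aim to show that a vertex near the end of the order must have a large second neighborhood.

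In a tournament, the terminal vertex $v_n$ of a median order satisfies the conjecture because every out-neighbor of $v_n$ can be charged, via the standard sedimentation/local-swap lemma, to a distinct second out-neighbor or to an earlier out-neighbor that is itself second-dominated. For a general oriented graph one would first pick a tournament completion $\tilde G$ of $G$ by orienting the missing edges, run this argument inside $\tilde G$, and then try to pull the conclusion back to $G$. The part I would concentrate on is choosing the completion so that the vertex identified in $\tilde G$ has its large second neighborhood witnessed by arcs already present in $G$, or, failing that, so that the extra arcs introduced in the completion can be paid for by a careful combinatorial accounting tied to the missing-edge structure.

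The principal obstacle is precisely this lifting step: the second out-neighborhood is \emph{not} monotone in the arc set, so a vertex that is a second out-neighbor of $v$ in $\tilde G$ may, upon deleting one arc back to $v$, become a first out-neighbor, or cease to be reachable from $v$ in two steps altogether. Controlling this non-monotonicity is essentially the reason the conjecture has resisted proof, and it is what forces the present paper to work only with missing-edge graphs from restricted classes (matching plus star, or independent set plus $2$-degenerate). If the median-order strategy does not handle the general case — as I expect it will not — I would fall back to a weighted averaging in the style of Fisher: assign weights $w(v)\ge 0$ with $\sum_v w(v)=1$ and attempt to show that $\sum_v w(v)\bigl(|N^{++}(v)|-|N^{+}(v)|\bigr)\ge 0$ for a weighting adapted to the structure of $G$, from which Seymour's statement would follow by averaging. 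Either route reduces the problem to understanding how the missing edges distort the sedimentation/averaging identity that is clean on tournaments, and that is the quantitative question I would try to make precise first.
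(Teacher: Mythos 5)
The statement you were asked to prove is Conjecture~1 of the paper---the Second Neighborhood Conjecture itself---which is an open problem; the paper offers no proof of it and only establishes special cases (missing edges partitionable into a matching and a star; vertex set partitionable into an independent set and a 2-degenerate part). Your submission is candid that it is a plan rather than a proof, and as a plan it is well-informed: the minimal-counterexample setup, the median-order/feed-vertex strategy, and the idea of completing to a tournament and pulling the conclusion back are exactly the machinery the paper uses. But there is no argument here that establishes the statement, so as a proof it has a gap that is not reparable by filling in details: the entire content of the conjecture lives in the step you defer.

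Concretely, the step that fails is the lifting step you identify. If $T$ is a completion of $G$ and $d$ is the feed vertex of a median order of $T$, Havet--Thomass\'e gives $|N^+_T(d)|\leq|N^{++}_T(d)|$, but a vertex $x\in N^{++}_T(d)$ can fall out of $N^{++}_G(d)$ whenever the two-step path $d\to a\to x$ or the non-arc $dx$ used an edge absent from $G$; the paper calls such $x$ ``special in-neighbors'' and spends all of Section~2 controlling them, which is only possible because a matching-plus-star structure forces each vertex to have very few non-neighbors (Lemmas~10--12 and the safe-completion construction depend on every vertex having at most one missing edge in $G$). With arbitrary missing edges there is no known choice of completion, median order, or Fisher-style weighting that bounds the loss, and your proposal does not supply one. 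So the proposal correctly maps the territory but does not prove the statement---nor could any write-up at present, since the statement is open.
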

The above conjecture, if true, implies a special case of another open question concerning digraphs called the Caccetta-H\"aggkvist Conjecture~\cite{caccetta1978minimal}.
Note that a \emph{sink}---a vertex with out-degree zero---trivially has a large second neighborhood and therefore the conjecture is true for any oriented graph that contains a sink. It is easy to see that there exist oriented graphs in which the only vertex with a large second neighborhood is a sink (for example, any acyclic tournament).

Conjecture~\ref{ssnc} for the special case of tournaments, known as Dean's Conjecture~\cite{dean1995squaring}, was solved by Fisher~\cite{fisher1996squaring} in 1996 using some basic linear algebraic and probabilistic arguments. Later in 2000, Havet and Thomass\'e~\cite{havet2000median} gave a short combinatorial proof of Dean's Conjecture using ``median orders'' of tournaments. They could in fact prove something stronger: in a tournament without a sink, there exist two vertices with large second neighborhoods. Using the approach of Havet and Thomass\'e, Fidler and Yuster~\cite{fidler2007remarks} in 2007 proved that the Second Neighborhood Conjecture is true for oriented graphs that can be obtained from tournaments by removing edges in some specific ways. In particular, they showed that a tournament missing a matching (an oriented graph whose missing edges form a matching), a tournament missing a star and a tournament missing a complete graph all satisfy the conjecture. As these results hold even if the missing matching (or star, or complete graph) is empty, they extend the proof of Dean's Conjecture by Havet and Thomass\'e. Using techniques from this paper, Ghazal~\cite{ghazal2012seymour} proved that the Second Neighborhood Conjecture is true for tournaments missing a ``generalized star''---a $(P_4,C_4,2K_2)$-free graph---thereby extending the theorems of Fidler and Yuster for tournaments missing a star and tournaments missing a complete graph. It has to be noted that among these results that all use the median order approach, the case of the tournament missing a matching is by far the most difficult one, requiring a complicated proof. In this paper, we introduce new ideas to refine and extend this proof, allowing us to prove the conjecture for a superclass of tournaments missing a matching: we show that oriented graphs whose missing edges can be partitioned into a (possibly empty) matching and a (possibly empty) star also satisfy the Second Neighborhood Conjecture. In fact, we prove the stronger statement that in such a graph that does not contain a sink, there exists a vertex that has a large second neighborhood and is not the center of the missing star.

Ghazal~\cite{ghazal2015remark} attempts to generalize the theorem of Havet and Thomass\'e by trying to prove that there exist two vertices with large second neighborhoods in every tournament missing a matching that does not contain a sink. He shows that if a tournament missing a matching satisfies certain additional technical conditions, then such a result can be obtained. Our result mentioned above directly yields a proof that shows that every tournament missing a matching that does not contain a sink has at least two vertices with large second neighborhoods.

Other researchers have tried to attack special cases of the Second Neighborhood Conjecture without using the median order approach.
Llad\'{o}~\cite{llado} proved the conjecture in regular oriented graphs with high connectivity. Kaneko and Locke~\cite{kaneko2001minimum} verified the conjecture for oriented graphs with minimum out-degree at most 6. We state their result below as we use it later.

\begin{theorem}[\cite{kaneko2001minimum}]\label{mindegree}
Every oriented graph with minimum out-degree less than 7 has a vertex with a large second neighborhood.
\end{theorem}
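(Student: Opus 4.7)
The plan is to argue by contradiction: suppose $G$ is an oriented graph of minimum out-degree $\delta\leq 6$ in which every vertex $v$ satisfies $|N^{++}(v)|<|N^+(v)|$. The case $\delta=0$ contains a sink, so we may assume $\delta\geq 1$. Pick $v$ with $|N^+(v)|=\delta$ and set $A=N^+(v)$, $B=N^{++}(v)$; the failure hypothesis applied to $v$ gives $|B|\leq\delta-1$. Since $G$ has no digons and no loops, for every $u\in A$ one has $N^+(u)\subseteq(A\cup B)\setminus\{u\}$, while the minimum out-degree condition forces $|N^+(u)|\geq\delta$. So all the arc activity around $v$ is trapped inside a set of at most $2\delta-1\leq 11$ vertices.

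The first step is a global density estimate: summing $\sum_{u\in A}|N^+(u)|\geq\delta^2$ and splitting the total into arcs internal to $A$ (at most $\binom{\delta}{2}$) and arcs from $A$ to $B$ (at most $\delta|B|$) yields $\delta^2\leq\binom{\delta}{2}+\delta|B|$, an inequality that is already very restrictive. The second step is to re-apply the failure hypothesis at each $u\in A$: since $|N^{++}(u)|<|N^+(u)|$ and $N^+(u)\subseteq A\cup B$, many of the arcs leaving $N^+(u)$ must ``fold back'' into $N^+(u)\cup\{u\}$ rather than hitting new vertices. Combined with the density estimate, this forces the induced oriented digraph on $A\cup B$ to be nearly complete and rigidly structured.

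The final step is a case analysis on $\delta\in\{1,\dots,6\}$. The cases $\delta\in\{1,2\}$ are immediate (for $\delta=1$ the unique element of $A$ has nowhere to send an out-arc and is a sink; for $\delta=2$ the density inequality alone is violated). For $\delta\in\{3,4,5\}$ one selects a distinguished vertex $u\in A$ — for instance one maximizing $|N^-(u)\cap A|$ or minimizing $|N^+(u)\cap B|$ — and uses the folding constraint together with the density estimate to exhibit either a large second neighborhood at $u$ itself or at one of its out-neighbors in $A\cup B$, contradicting the standing assumption.

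The principal obstacle is expected to be $\delta=6$, where the inequality $36\leq 15+6|B|$ forces only $|B|\in\{4,5\}$ and leaves very little slack in either the density or the folding bound. I would expect to need a finer case split based on $|B|$ and on the out-degree sequence of the oriented subgraph induced on $A$, eliminating each near-extremal configuration by producing some vertex whose second out-neighborhood meets or exceeds its first. Carrying out this elimination cleanly — ideally via a single structural invariant that the choice of $u\in A$ can optimize, rather than exhaustive enumeration of eleven-vertex configurations — is where the real work of the proof lies.
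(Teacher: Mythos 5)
First, note that the paper does not prove this statement at all: Theorem~\ref{mindegree} is quoted from Kaneko and Locke~\cite{kaneko2001minimum} and used as a black box, so there is no in-paper argument to compare yours against. Your proposal therefore has to stand on its own, and as it stands it is an outline rather than a proof. The parts you actually carry out are correct but easy: taking $v$ of minimum out-degree $\delta$, the containment $N^+(u)\subseteq (A\cup B)\setminus\{u\}$ for $u\in A=N^+(v)$ holds (an out-neighbor of $u$ cannot be $v$ by the no-digon condition), and the count $\delta^2\leq\binom{\delta}{2}+\delta|B|$ gives $|B|\geq\lceil(\delta+1)/2\rceil$, which together with $|B|\leq\delta-1$ disposes of $\delta\in\{1,2\}$ only. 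Everything from $\delta=3$ upward --- which is essentially the entire content of the theorem, since the conjecture for minimum out-degree at most $2$ is elementary --- is deferred to a ``folding constraint'' that is never stated precisely and a case analysis that is never performed. You say yourself that this is ``where the real work of the proof lies''; a referee has to count that as a missing proof, not a gap in exposition.

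There is also a concrete error in the framing that makes the deferred work harder than you suggest. The claim that ``all the arc activity around $v$ is trapped inside a set of at most $2\delta-1\leq 11$ vertices'' is true only of $N^+(v)\cup N^{++}(v)$. The folding step requires applying the failure hypothesis $|N^{++}(u)|<|N^+(u)|$ at vertices $u\in A$, and $N^{++}(u)$ is \emph{not} contained in $A\cup B$: the vertices of $B=N^{++}(v)$ may have out-neighbors anywhere in $G$, so second out-neighborhoods of vertices in $A$ reach into the third neighborhood of $v$ and beyond. Consequently the problem does not reduce to eliminating eleven-vertex configurations, and each application of the failure hypothesis at a new vertex drags in new, unbounded territory. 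This is precisely why the Kaneko--Locke argument is delicate; any completed write-up would need either to control these escaping arcs explicitly or to find an invariant that makes them irrelevant, and until one of those is done the cases $\delta\in\{3,4,5,6\}$ remain open in your argument.
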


It is easy to verify that in any oriented graph, a minimum out-degree vertex whose out-neighborhood is an independent set is a vertex with a large second neighborhood. Therefore, the conjecture is true for bipartite graphs (in fact, it is true if the underlying undirected graph is triangle-free). It appears difficult to prove the conjecture even for oriented graphs whose underlying undirected graph is 3-colourable.

For $S\subseteq V(D)$, we denote by $D[S]$ the graph induced in $D$ by the vertices in $S$ and by $D-S$ the graph $D[V(D)-S]$. An undirected graph $G$ is said to be \emph{2-degenerate} if every subgraph of $G$ has a vertex of degree at most two. We say that an oriented graph is 2-degenerate if its underlying undirected graph is 2-degenerate. We show that the conjecture is true for every oriented graph whose vertices can be partitioned into two sets such that one is an independent set and the other induces a 2-degenerate graph.
\medskip

\subsection*{Outline of the paper}
In Section~\ref{2-deg}, we show that the conjecture is true for any oriented graph $D$ such that $V(D)$ is the disjoint union of two sets $A$ and $B$ where $D[A]$ is 2-degenerate and $D[B]$ is an independent set. The proof relies on some counting arguments. In Section~\ref{matching}, we prove that any oriented graph whose missing edges can be partitioned into a (possibly empty) matching and a (possibly empty) star satisfies the Second Neighborhood Conjecture. This implies that a tournament missing a matching has at least two vertices with large second neighborhoods unless it contains a sink.
In conclusion, we ask whether it is true that if there is exactly one vertex with a large second neighborhood in an oriented graph, then it is a sink. We note that such a result would imply the Second Neighborhood Conjecture.

\section{Graphs that are almost bipartite}\label{2-deg}
In this section, we prove that the Second Neighborhood Conjecture is true for the class of oriented graphs whose vertex set has a partition $(A,B)$ such that $B$ is an independent set and the subgraph induced by $A$ is 2-degenerate. Note that every subgraph of a 2-degenerate graph is also 2-degenerate. 
\begin{proposition}\label{2-d}
Let $H=(V,E)$ be an oriented graph on $n$ vertices which is 2-degenerate. Then,
\begin{myenumerate}
\item\label{edges} If $n\geq 2$ then $|E(H)|\leq 2n-3$.
\item\label{outdegree} $H$ has at least one vertex with out-degree at most 1.
\end{myenumerate}
\end{proposition}
\begin{proof}
\ref{edges} We prove this by induction on $|V(H)|=n$. It is trivially true in the base case where $n=2$. Assume that the statement is true for all 2-degenerate graphs with less than $n$ vertices. As $H$ is 2-degenerate, it has a vertex of degree at most 2, say $x$. Now the subgraph $H-\{x\}$ of $H$ is itself 2-degenerate and has only $n-1$ vertices. Therefore, by the induction hypothesis, $|E(H-\{x\})|\leq 2(n-1)-3$. As $x$ has at most 2 edges incident to it, we have $|E(H)|\leq |E(H-\{x\})|+2\leq 2n-3$. 
		
\ref{outdegree} If $n=1$, then the statement is trivially true. So we can assume that $n\geq 2$. Note that $|E(H)|=\sum_{u\in V(H)} |N^+(u)|$. Therefore, if $|N^+(u)|\geq 2$ for every vertex $u\in V(H)$, then we would get $|E(H)|\geq 2n$, contradicting \ref{edges}.
\end{proof}
	
For the remainder of this section, we denote by $D=(V,E)$ an oriented graph whose vertex set has a partition $(A,B)$ such that $B$ is an independent set of $D$ and $D[A]$ is 2-degenerate.

Let $d$ be the minimum out-degree of $D$.

\begin{lemma}\label{v_in_B}
If there is a vertex in $B$ with out-degree $d$ in $D$, then $D$ has a vertex with large second neighborhood.
\end{lemma}
	\begin{proof}
		Suppose not. Let $v\in B$ be a vertex such that $|N^+(v)|=d$. If $d=0$, then $v$ is trivially a vertex with large second neighbourhood. Further, if $d=1$, then it can be seen that the single out-neighbour of $v$ has an out-neighbour that is different from $v$, and therefore $v$ is again a vertex with large second neighbourhood. Thus we can assume that $d\geq 2$. As $v\in B$ and $B$ is an independent set, we have $N^+(v)\subseteq A$. Let $N^{++}(v)=X\cup Y$, where $X\subseteq A$, $Y \subseteq B$. Also, let $|X|=x$ and $|Y|=y$. As $v$ does not have a large second neighborhood and $|N^+(v)|=d$, we have $x+y\leq d-1$. Consider the subgraph $H=D[N^+(v)\cup X\cup Y]$. As $N^+(v)\cup X\subseteq A$, $|N^+(v)\cup X|=d+x\geq 2$, and $D[A]$ is 2-degenerate, by Proposition~\ref{2-d}\ref{edges}, the maximum number of edges
        in $D[N^+(v)\cup X]$ is $2(d+x)-3$. Together with the at most $dy$ edges between $N^+(v)$ and $Y$, we get that the number of edges in $H$ that have at least one end point in $N^+(v)$ is at most $2(d+x)+dy-3$. i.e., $|\{(p,q)\in E(H)\colon \{p,q\}\cap N^+(v)\neq\emptyset\}|\leq 2(d+x)+dy-3$. Also since each vertex $u\in N^+(v)$ has out-degree at least $d$, we have that $|\{(p,q)\in E(H)\colon p\in N^+(v)\}|\geq d^2$. Therefore we can conclude that,
		\begin{equation}\label{eqn_case1}
		2d+2x+dy-3 \geq d^2
		\end{equation}
		Suppose that $y\leq d-2$. Then we have,
		$$\begin{array}{rclr}
		2d+2x+dy-3 &=& 2d+2(x+y)+(d-2)y-3&\text{(adding and subtracting }2y)\\&\leq& 2d+2(d-1)+(d-2)^2-3 &\text{(since } x+y\leq d-1\text{ and } y\leq d-2)\\ &=& d^2-1 < d^2& 
		\end{array}$$
		which is a contradiction to~\eqref{eqn_case1}. Therefore, $y\geq d-1$. Since $x+y\leq d-1$, this implies that $x=0$ and $y=d-1$. As $N^+(v)\subseteq A$, we know that $D[N^+(v)]$ is 2-degenerate. Then by Proposition~\ref{2-d}\ref{outdegree}, there exists a vertex $w\in N^+(v)$ whose out-degree in $D[N^+(v)]$ is at most 1. In fact, the out-degree of $w$ in $D[N^+(v)]$ is exactly 1, as otherwise $N^+(w)\subseteq Y$, implying that $y\geq |N^+(w)|\geq d$, which contradicts the fact that $y=d-1$. Let $w'$ be the unique out-neighbor of $w$ in $D[N^+(v)]$. Note that since $w\in N^+(v)$, we have $N^+(w)\subseteq N^+(v)\cup N^{++}(v)$, or in other words, $N^+(w)\subseteq N^+(v)\cup X\cup Y$. Then the fact that $N^+(w)\cap N^+(v)=\{w'\}$ and $x=0$ implies that $N^+(w)\subseteq\{w'\}\cup Y$. Since $y=d-1$, this further implies that $|N^+(w)|=d$; in particular, $N^+(w)=Y\cup\{w'\}$. Again, as with $w$, it can be seen that $N^+(w')\subseteq N^+(v)\cup X\cup Y$. As $x=0$ and $w'$ has at most $d-1$ out-neighbors in $N^+(v)$, it is clear that $w'$ should have at least one out-neighbor in $Y$, say $z$. Then $z\in N^+(w)\cap N^+(w')$. As $Y$ is an independent set, we have $N^+(z)\subseteq A\setminus\{w,w'\}$, implying that $N^+(z)$ is disjoint from $N^+(w)=Y\cup\{w'\}$. This means that $N^+(z)\subseteq N^{++}(w)$, which gives $|N^{++}(w)|\geq |N^+(z)|\geq d=|N^+(w)|$. Hence $w$ has a large second neighborhood in $D$, which is a contradiction. 
	\end{proof}
	\begin{lemma}\label{v_in_A}
		If every vertex in $B$ has out-degree at least $d+1$ in $D$, then $D$ has a vertex with large second neighborhood.
	\end{lemma}
	\begin{proof}
		Suppose not. Note that from Theorem~\ref{mindegree}, we have $d>6$. Clearly, there is a vertex $v\in A$ such that $|N^+(v)|=d$. Let $N^+(v)=X\cup Y$, where $X\subseteq A$ and $Y\subseteq B$, and $N^{++}(v)= X'\cup Y'$, where $X'\subseteq A$ and $Y'\subseteq B$. Also, let $|X|=x, |Y|=y, |X'|=x'$ and $|Y'|=y'$. Note that $x+y=d$, and since $v$ does not have a large second neighborhood, $x'+y'\leq d-1$. Since each vertex of $Y$ has at least $d+1$ out-neighbors, all of which lie in $X\cup X'$, we further have $x+x'\geq d+1$. 
		\begin{claim}\label{card(x)}
			$x\geq 3$.
		\end{claim}
		\begin{proofobs}
			Assume to the contrary that $x\leq 2$. Then since $x'\leq d-1$ and $x+x'\geq d+1$, it should be the case that $x=2$, $x'=d-1$, $y'=0$ and $x+x'=d+1$. This implies that $N^+(u)=X\cup X'$ for all $u\in Y$. Then, neither vertex in $X$ can have an out-neighbor in $Y$. Now if $w\in X$ is a vertex that has no out-neighbor in $X$ (clearly, such a vertex exists as $x=2$), the fact that $y'=0$ implies that $N^+(w)\subseteq X'$. But $x'=d-1$, implying that $|N^+(w)|<d$, which is a contradiction. This proves the claim. \qed
		\end{proofobs}
		
		Now, consider the subgraph $H=D[X\cup Y\cup X'\cup Y']$. As $X\cup X'\subseteq A$, $x+x'\geq d+1>7$ and $D[A]$ is 2-degenerate, by Proposition~\ref{2-d}\ref{edges}, the maximum number of edges in $D[X\cup X']$ is $2(x+x')-3$. Together with the at most $xy$ edges between $X$ and $Y$, the at most $xy'$ edges between $X$ and $Y'$ and the at most $yx'$ edges between $Y$ and $X'$, we get that the number of edges in $H$ with at least one end point in $N^+(v)= X\cup Y$ is at most $2(x+x')-3+xy+xy'+x'y$, i.e., $|\{(p,q)\in E(H)\colon \{p,q\}\cap N^+(v)\neq\emptyset\}|\leq 2(x+x')-3+xy+xy'+x'y$. There are at least $d$ edges going out from each vertex of $X$ and at least $d+1$ edges going out from each vertex of $Y$. Therefore, $|\{(p,q)\in E(H)\colon p\in X\}|\geq dx$ and $|\{(p,q)\in E(H)\colon p\in Y\}|\geq (d+1)y$. Altogether, we have $|\{(p,q)\in E(H)\colon p\in N^+(v)\}|\geq dx+(d+1)y=d^2+y$ (as $x+y=d$). Hence we can conclude that,
		\begin{equation}\label{eqn_case2}
		2(x+x')-3+xy+xy'+x'y \geq d^2+y 
		\end{equation}
		
		\begin{claim}\label{x,x'}
			At most one of $x$ and $x'$ can be greater than or equal to $\frac{d}{2}+1$.
		\end{claim}
		\begin{proofobs}
		Suppose for the sake of contradiction that $x= \frac{d}{2}+r$ and $x'=\frac{d}{2}+s$, where $r,s\geq 1$. As $x+y=d$ and $x'+y'\leq d-1$ we have $y=\frac{d}{2}-r$ and $y'\leq\frac{d}{2}-s-1$.
			By substituting these in the LHS of the equation~(\ref{eqn_case2}) we have, 
			\begin{eqnarray*}
				2(x+x')-3+xy+xy'+x'y&\leq& 2(d+r+s)-3+\left(\frac{d}{2}+r\right)\left(\frac{d}{2}-r\right) +\left(\frac{d}{2}+r\right)\left(\frac{d}{2}-s-1\right)\\
				&&\hspace{2in}+\left(\frac{d}{2}+s\right)\left(\frac{d}{2}-r\right)\\
				&\leq& \frac{3d^2}{4}+2d+r-3-r^2-\frac{d}{2}\quad(\text{as } r \geq 1 \text{ we have } rs\geq s)
			\end{eqnarray*}
			Combining the last inequality with~(\ref{eqn_case2}), we get
			\begin{eqnarray*}
				\frac{3d^2}{4}+2d+r-3-r^2-\frac{d}{2}&\geq&d^2+y\\
				6d+4r&\geq&d^2+4y+12+4r^2\\
				d^2+4r&>&d^2+4y+12+4r^2\quad(\text{as }d>6)\\
				4r&>&4y+12+4r^2
			\end{eqnarray*}
			This is a contradiction as $r\geq 1$ and $y\geq 0$. This proves the claim. \qed
		\end{proofobs}
		\medskip
		
		Now, consider the LHS of~(\ref{eqn_case2}).
		\begin{eqnarray}
		2(x+x')-3+xy+xy'+x'y &=& 2x+2x'-3+xy+y'(x+y)+x'(x+y)-xx'-yy'\nonumber\\
		&&\hspace{2in}(\text{adding and subtracting }xx'+yy')\nonumber\\
		&=& 2x+2x'-3+xy+d(x'+y')-xx'-yy' \quad(\text{as }x+y=d)\nonumber\\
		&\leq& 2x+2x'-3+xy+d(d-1)-xx'-yy' \quad(\text{as }x'+y'\leq d-1) \label{eqn_subcase}
		\end{eqnarray}
		Now, suppose that $x'\geq y+2$. Then~\eqref{eqn_subcase} implies,
		\begin{eqnarray*}
			2(x+x')-3+xy+xy'+x'y &\leq& 2x+2x'-3+xy+d(d-1)-x(y+2)-yy'\\
			&=& d^2+2x'-3-d-yy'
		\end{eqnarray*}
		Combining this inequality with~(\ref{eqn_case2}), we have
		\begin{eqnarray}
		d^2+2x'-3-d-yy'&\geq& d^2+y\nonumber\\
		2x'-3-d-yy'&\geq& y\label{y'=0}
		\end{eqnarray}
		Therefore we get,
		\begin{eqnarray*}
			2x'-3-d-yy'+2x&\geq&y+2x\\
			2(x+x')-3-d-yy'&\geq&2d-y\quad(\text{as }x+y=d)
		\end{eqnarray*}
		As $\max\{x,x'\}=d$ and by Claim~\ref{x,x'}, $\min\{x,x'\}\leq\frac{d}{2}+1$, we have $x+x'\leq \frac{3d}{2}+1$. Combining this with the above inequality, we have\\
		\begin{eqnarray*}
			3d-1-d-yy'&\geq&2d-y\\
			y&\geq&yy'+1
		\end{eqnarray*}
		This implies that $y'=0$. Then~(\ref{y'=0}) becomes
		\begin{eqnarray*}
			2x'-3-d&\geq&y\\
			2x'&\geq&x+2y+3\quad(\text{as }d=x+y)\\
			x'&\geq&y+3 \quad(\text{as }x\geq 3\text{ by Claim~\ref{card(x)}})
		\end{eqnarray*}
		Substituting this together with $y'=0$ in the RHS of~\eqref{eqn_subcase} we get,
		\begin{eqnarray*}
			2(x+x')-3+xy+xy'+x'y &\leq& 2x+2x'-3+xy+d(d-1)-x(y+3)\\
			&=& d^2+2x'-3-d-x
		\end{eqnarray*}
		Combining this with~(\ref{eqn_case2}), we have
		\begin{eqnarray*}
			d^2+2x'-3-d-x&\geq&d^2+y\\
			x'&\geq&d+\frac{3}{2}\quad(\text{as }x+y=d)
		\end{eqnarray*}
		which contradicts the fact that $x'+y'\leq d-1$. Therefore, we can assume that $x'\leq y+1$.
		In fact, $x'=y+1$, as otherwise, $x+x'<x+y+1=d+1$, which is a contradiction to our earlier observation that $x+x'\geq d+1$. Now, substituting $x'=y+1$ in the RHS of~\eqref{eqn_subcase}, we get
		\begin{eqnarray*}
			2(x+x')-3+xy+xy'+x'y &\leq& 2x+2(y+1)-3+xy+d^2-d-x(y+1)-yy'\\
			&=& d^2+d-1-x-yy'\quad(\text{as }x+y=d)
		\end{eqnarray*}
		Now, combining this with~(\ref{eqn_case2}), we have
		\begin{eqnarray*}
			d^2+d-1-x-yy'&\geq&d^2+y\\
			yy'+1&\leq&0 \quad(\text{as }x+y=d)
		\end{eqnarray*}
		which is a contradiction. This proves the lemma.
	\end{proof}
	\begin{theorem}
		Let $D=(V,E)$ be an oriented graph whose vertex set $V(D)$ has a partition $(A,B)$, such that $B$ is an independent set and $D[A]$ is 2-degenerate. Then $D$ has a vertex with a large second neighborhood.
	\end{theorem}
	\begin{proof}
		The proof is immediate from Lemma~\ref{v_in_B} and Lemma~\ref{v_in_A}.
	\end{proof}
	
\section{Graphs that are almost tournaments}\label{matching}
In this section, our main aim will be to show that Conjecture~\ref{ssnc} is true for tournaments whose missing edges can be partitioned into a matching and a star. In Section~\ref{sec:median}, we review ``median orders'' of tournaments and their properties. We then study tournaments missing a matching in Section~\ref{sec:main}, wherein we introduce the notions and structural results that we need to prove our main result. 
We prove our main result about tournaments missing a matching and a star in Section~\ref{sec:nearmatching}.

\subsection{Median orders of tournaments}\label{sec:median}

Given an ordering of the vertices of a tournament, an arc of the tournament is said to be a ``forward arc'' if the starting vertex of the arc occurs earlier than its ending vertex in the ordering. A \emph{median order} of a tournament is an ordering of its vertices with the most number of forward arcs. Formally, an ordering $(x_1,x_2,\ldots,x_n)$ of the vertices of a tournament $T$ that maximizes $|\{(x_i,x_j)\in E(T)\colon i<j\}|$ is said to be a median order of $T$. The \emph{feed vertex} of a median order $(x_1,x_2,\ldots,x_n)$ is the last vertex $x_n$ in that ordering of vertices. Havet and Thomass\'e~\cite{havet2000median} proved the following. 

\begin{theorem}[Havet-Thomass\'e]\label{thmht}
Let $T$ be a tournament and $d$ be the feed vertex of a median order of $T$. Then $|N^+(d)|\leq |N^{++}(d)|$, i.e. $d$ has a large second neighborhood.
\end{theorem}

The following properties of median orders of tournaments are not difficult to verify (see~\cite{havet2000median}).

\begin{proposition}\label{subtournament}
If $(x_1,x_2,\ldots,x_n)$ is a median order of a tournament $T$ and let $x_i$ and $x_j$ be such that $1\leq i\leq j\leq n$. Let $T'=T[\{x_i,x_{i+1},\ldots,x_j\}]$. Then:
\begin{myenumerate}
\item\label{suborder} $(x_i,x_{i+1},\ldots,x_j)$ is a median order of $T'$, and
\item\label{replace} if $(y_1,y_2,\ldots,y_{j-i+1})$ is a median order of $T'$, then $(x_1,x_2,\ldots,x_{i-1},y_1,y_2,\ldots,y_{j-i+1},x_{j+1},x_{j+2},\ldots,$ $x_n)$ is a median order of $T$.
\end{myenumerate} 
\end{proposition}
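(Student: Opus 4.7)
The plan is to prove both parts using a single structural observation: for any vertex $v$ outside the window $\{x_i,\ldots,x_j\}$ and any vertex $w$ inside it, the relative order of $v$ and $w$ in an ordering of $V(T)$ depends only on whether $v$ lies in the prefix $(x_1,\ldots,x_{i-1})$ (in which case $v$ precedes $w$) or in the suffix $(x_{j+1},\ldots,x_n)$ (in which case $v$ follows $w$), and not on how the vertices of the window are internally arranged. Consequently, if we replace the contiguous block $(x_i,\ldots,x_j)$ by any other permutation of the same set of vertices, the number of forward arcs having at least one endpoint outside the window is unchanged; only the number of forward arcs lying entirely inside $T'$ can change.

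For part (\ref{suborder}), I would argue by contradiction. If $(x_i,\ldots,x_j)$ were not a median order of $T'$, there would exist some ordering $(z_1,\ldots,z_{j-i+1})$ of $V(T')$ with strictly more forward arcs in $T'$. Substituting this ordering in place of $(x_i,\ldots,x_j)$ inside $(x_1,\ldots,x_n)$ would, by the observation above, yield an ordering of $V(T)$ with strictly more forward arcs than $(x_1,\ldots,x_n)$, contradicting the assumption that the latter is a median order of $T$. For part (\ref{replace}), part (\ref{suborder}) already gives that $(x_i,\ldots,x_j)$ is a median order of $T'$, so it and $(y_1,\ldots,y_{j-i+1})$ induce the same number of forward arcs inside $T'$. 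Invoking the observation again, replacing the window by $(y_1,\ldots,y_{j-i+1})$ preserves the total number of forward arcs in $T$, so the resulting ordering is still a median order.

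There is no real obstacle here: the entire argument is an exchange/counting step, and the only point that requires care is the invariance of the outside-inside contribution under permutations of the window, which follows immediately from the block being contiguous.
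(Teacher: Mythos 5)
Your argument is correct and is exactly the standard verification: the paper itself omits a proof of this proposition (stating only that it is ``not difficult to verify'' and citing Havet and Thomass\'e), and the intended argument is precisely your decomposition of forward arcs into those internal to the contiguous window and those with an endpoint outside, the latter count being invariant under any permutation of the window. Both parts then follow by the exchange/counting step you describe, so nothing further is needed.
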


\begin{proposition}\label{feedback}
Let $(x_1,x_2,\ldots,x_n)$ be a median order of a tournament $T$ and let $x_i$ and $x_j$ be such that $1\leq i<j\leq n$. Then:
\begin{myenumerate}
\item\label{xif} $|N^+(x_i)\cap\{x_{i+1},\ldots,x_j\}|\geq |N^-(x_i)\cap\{x_{i+1},\ldots,x_j\}|$, and
\item\label{xjf} $|N^+(x_j)\cap\{x_i,\ldots,x_{j-1}\}|\leq |N^-(x_j)\cap\{x_i,\ldots,x_{j-1}\}|$.
\end{myenumerate}
\end{proposition}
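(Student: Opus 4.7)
The plan is to establish each inequality by a local exchange argument that exploits the defining optimality of a median order. For part (a), I would suppose for contradiction that $|N^+(x_i)\cap\{x_{i+1},\ldots,x_j\}|<|N^-(x_i)\cap\{x_{i+1},\ldots,x_j\}|$ and form a new ordering by removing $x_i$ from position $i$ and reinserting it immediately after $x_j$, giving
$$(x_1,\ldots,x_{i-1},x_{i+1},\ldots,x_j,x_i,x_{j+1},\ldots,x_n).$$
The key observation is that the only arcs whose forward/backward status changes are those between $x_i$ and vertices in $\{x_{i+1},\ldots,x_j\}$: arcs not incident to $x_i$ keep the same relative order, and arcs between $x_i$ and any $x_k$ with $k<i$ or $k>j$ also preserve their status, since $x_k$ remains before (respectively, after) $x_i$ in the new ordering. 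For each $x_k\in\{x_{i+1},\ldots,x_j\}$, however, the relative order of $x_i$ and $x_k$ flips: arcs from $x_i$ to the $|N^+(x_i)\cap\{x_{i+1},\ldots,x_j\}|$ out-neighbors switch from forward to backward, while arcs from the $|N^-(x_i)\cap\{x_{i+1},\ldots,x_j\}|$ in-neighbors switch from backward to forward. The net change in the number of forward arcs is therefore $|N^-(x_i)\cap\{x_{i+1},\ldots,x_j\}|-|N^+(x_i)\cap\{x_{i+1},\ldots,x_j\}|$, and under the hypothesis this is strictly positive, contradicting the maximality of the median order.

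For part (b), I would apply the symmetric move of removing $x_j$ and reinserting it immediately before $x_i$, producing
$$(x_1,\ldots,x_{i-1},x_j,x_i,x_{i+1},\ldots,x_{j-1},x_{j+1},\ldots,x_n).$$
Exactly the same bookkeeping shows that the forward/backward status of an arc changes if and only if it lies between $x_j$ and some vertex in $\{x_i,\ldots,x_{j-1}\}$; the net change in forward arcs comes out to $|N^+(x_j)\cap\{x_i,\ldots,x_{j-1}\}|-|N^-(x_j)\cap\{x_i,\ldots,x_{j-1}\}|$. If this quantity were positive the new ordering would have more forward arcs than $(x_1,\ldots,x_n)$, again contradicting the median property, and the desired inequality follows.

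There is no genuine obstacle here. The only thing to track carefully is the claim that arcs not incident to the relocated vertex, as well as arcs incident to it whose other endpoint lies outside the shifted interval, truly retain their status; this is immediate from writing the two orderings side by side and checking that the only pairs whose relative order is altered are precisely those between the moved vertex and the vertices it is moved across.
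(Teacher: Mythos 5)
Your exchange argument is correct and complete: moving $x_i$ past the block $\{x_{i+1},\ldots,x_j\}$ (or $x_j$ before it) flips exactly the arcs between the moved vertex and that block, and the resulting gain in forward arcs under the negated hypothesis contradicts maximality. The paper states this proposition without proof (citing Havet and Thomass\'e as "not difficult to verify"), and your argument is precisely the standard verification intended there.
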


\begin{proposition}\label{movexi}
Let $(x_1,x_2,\ldots,x_n)$ be a median order of a tournament $T$ and let $x_i$ and $x_j$ be such that $1\leq i<j\leq n$. Then:
\begin{myenumerate}
\item\label{xi} If $|N^+(x_i)\cap\{x_{i+1},\ldots,x_j\}|=|N^-(x_i)\cap\{x_{i+1},\ldots,x_j\}|$, then $(x_1,x_2,\ldots,x_{i-1},x_{i+1},x_{i+2},\ldots,x_j,x_i,$ $x_{j+1},x_{j+2},\ldots,x_n)$ is also a median order of $T$, and
\item\label{xj} if $|N^+(x_j)\cap\{x_i,\ldots,x_{j-1}\}|=|N^-(x_j)\cap\{x_i,\ldots,x_{j-1}\}|$, then $(x_1,x_2,\ldots,x_{i-1},x_j,x_i,x_{i+1},\ldots,x_{j-1},$ $x_{j+1},x_{j+2},\ldots,x_n)$ is also a median order of $T$.
\end{myenumerate} 
\end{proposition}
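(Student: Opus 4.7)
The plan is to prove both parts by a direct forward-arc counting argument: since the only arcs whose ``direction'' with respect to the ordering changes under these rearrangements are the arcs incident with the vertex being moved and the block it is being moved across, the change in the total number of forward arcs can be read off exactly, and the hypothesis makes this change equal to zero.

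For part (a), I would first observe that the new ordering $L'=(x_1,\ldots,x_{i-1},x_{i+1},\ldots,x_j,x_i,x_{j+1},\ldots,x_n)$ differs from $L=(x_1,\ldots,x_n)$ only in the relative position of $x_i$ with respect to $x_{i+1},\ldots,x_j$: the relative order of every other pair of vertices is the same as in $L$. Therefore, for every arc not incident with $x_i$, or for every arc between $x_i$ and a vertex in $\{x_1,\ldots,x_{i-1}\}\cup\{x_{j+1},\ldots,x_n\}$, its status as a forward arc is identical in $L$ and in $L'$. The only arcs that can switch status are those of the form $(x_i,x_k)$ or $(x_k,x_i)$ with $k\in\{i+1,\ldots,j\}$: arcs $(x_i,x_k)$ are forward in $L$ and backward in $L'$, while arcs $(x_k,x_i)$ are backward in $L$ and forward in $L'$. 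Writing $F(L)$ for the number of forward arcs of $L$, this gives
\[
F(L')-F(L)=|N^-(x_i)\cap\{x_{i+1},\ldots,x_j\}|-|N^+(x_i)\cap\{x_{i+1},\ldots,x_j\}|,
\]
which is $0$ by the hypothesis of (a). Since $L$ is a median order, $L'$ attains the same (maximum) number of forward arcs and is therefore also a median order of $T$.

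Part (b) is essentially symmetric. In passing from $L$ to $L''=(x_1,\ldots,x_{i-1},x_j,x_i,x_{i+1},\ldots,x_{j-1},x_{j+1},\ldots,x_n)$, the only pairs whose relative order changes are those between $x_j$ and a vertex in $\{x_i,x_{i+1},\ldots,x_{j-1}\}$. An identical accounting yields
\[
F(L'')-F(L)=|N^+(x_j)\cap\{x_i,\ldots,x_{j-1}\}|-|N^-(x_j)\cap\{x_i,\ldots,x_{j-1}\}|,
\]
which vanishes under the hypothesis of (b), so $L''$ is again a median order.

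There is really no serious obstacle here; the only thing to be careful about is to check that no pair of vertices other than those listed changes its relative position, which is why I would take the trouble to state this explicitly at the start. The hypothesis already appears in Proposition~\ref{feedback} as an inequality, and the content of this proposition is simply that when that inequality is tight one may commute the vertex across the block without losing the median order property.
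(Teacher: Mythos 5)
Your proof is correct: the paper states this proposition without proof (citing Havet--Thomass\'e and calling it ``not difficult to verify''), and your forward-arc counting argument---observing that only the pairs between the moved vertex and the block it crosses change relative order, so the net change in forward arcs is exactly the difference of the two neighborhood counts, which the hypothesis makes zero---is precisely the intended verification. No gaps.
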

\begin{proposition}\label{reversing}
Let $L=(x_1,x_2,\ldots,x_n)$ be a median order of a tournament $T$ and let $(x_j,x_i)\in E(T)$, where $i<j$. Then $L$ is also a median order of the tournament $T'$ with $V(T')=V(T)$ and $E(T')=(E(T)\setminus\{(x_j,x_i)\})\cup\{(x_i,x_j)\}$.
\end{proposition}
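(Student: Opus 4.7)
The plan is to prove the proposition by a simple double counting argument, exploiting the fact that $T$ and $T'$ differ only in the orientation of the single arc between $x_i$ and $x_j$. For a tournament $H$ and an ordering $M$ of its vertex set, let $f_M(H)$ denote the number of forward arcs of $H$ under $M$. Since $i<j$, the arc $(x_j,x_i)$ is backward under $L$ in $T$, whereas the reversed arc $(x_i,x_j)$ is forward under $L$ in $T'$. All other arcs contribute identically to $f_L(T)$ and $f_L(T')$, so $f_L(T')=f_L(T)+1$.

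Next I would assume for contradiction that $L$ is not a median order of $T'$, so there exists an ordering $L'$ of $V(T')=V(T)$ with $f_{L'}(T')>f_L(T')$. The argument then splits into two cases depending on the relative positions of $x_i$ and $x_j$ in $L'$. If $x_i$ precedes $x_j$ in $L'$, then $(x_i,x_j)$ is a forward arc in $T'$ under $L'$ while $(x_j,x_i)$ is a backward arc in $T$ under $L'$, so $f_{L'}(T)=f_{L'}(T')-1$, and combined with $f_L(T')=f_L(T)+1$ and the assumed strict inequality this gives $f_{L'}(T)>f_L(T)$. If $x_j$ precedes $x_i$ in $L'$, then the same arc flips roles between the two tournaments, yielding $f_{L'}(T)=f_{L'}(T')+1>f_L(T)+2>f_L(T)$. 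In either case $L'$ has strictly more forward arcs than $L$ in $T$, contradicting the assumption that $L$ is a median order of $T$.

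There is no serious obstacle here: the proposition is essentially a bookkeeping statement recording that median orders are stable under reversal of a backward arc. The only thing to be careful about is consistently tracking the orientation of the unique arc that differs between $T$ and $T'$ when comparing $f_M(T)$ with $f_M(T')$ across the two orderings $L$ and $L'$.
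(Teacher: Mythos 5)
Your proof is correct and follows essentially the same counting argument as the paper: both compare the number of forward arcs of $T$ and $T'$ under $L$ and under a hypothetical better ordering, using that the two tournaments differ in a single arc so the counts can change by at most one. Your explicit case split on the relative positions of $x_i$ and $x_j$ in $L'$ is slightly more detailed than the paper's one-line version but amounts to the same reasoning.
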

\begin{proof}
If $L$ is not a median order of $T'$, then there exists an ordering $\hat{L}$ of $V(T')=V(T)$ such that $(T',\hat{L})$ has at least one more forward arc than $(T',L)$ and therefore at least two more forward arcs than $(T,L)$. But then $(T,\hat{L})$ has at least one more forward arc than $(T,L)$, contradicting the fact that $L$ is a median order of $T$. Therefore, $L$ is a median order of $T'$ as well.
\end{proof}

\subsubsection*{Modules}
Given an oriented graph $D$, a set $S\subseteq V(D)$ is said to be a \emph{module} in $D$, if for any two vertices $u,v\in S$, $N^+(u)\setminus S= N^+(v)\setminus S$ and $N^-(u)\setminus S= N^-(v)\setminus S$. Trivially, a single vertex is a module by itself and so is the set $V(D)$.

\begin{proposition}\label{modulesec}
Let $D$ be an oriented graph and $S$ a module in it.
\begin{myenumerate}
\item\label{removemod} For $u\in S$, let $D'=D-(S\setminus\{u\})$. Then, $N^{++}_{D'}(u)=N^{++}_D(u)\setminus S$.
\item\label{outsidemod} For $u,v\in S$, $N^{++}_D(u)\setminus S= N^{++}_D(v)\setminus S$.
\end{myenumerate}
\end{proposition}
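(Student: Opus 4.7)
The plan is to unwind the definitions and exploit the module property of $S$, which in this setting says: for any $w \notin S$ and any $x \in S$, whether $(x,w) \in E(G)$ depends only on $w$, not on the choice of $x \in S$. A useful preliminary observation, which I would record at the outset, is this: if $u \in S$, $w \notin S$, $w \notin N^+_G(u)$, and there is an arc $(x,w) \in E(G)$, then $x \notin S$. For if $x \in S$, the module property would give $w \in N^+_G(x) \setminus S = N^+_G(u) \setminus S$, contradicting $w \notin N^+_G(u)$. This single observation does most of the real work in both parts.

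For part \textup{($a$)}, I first note that $N^+_{G'}(u) = N^+_G(u) \setminus S$, which is immediate because $G'$ is obtained by deleting exactly the vertices of $S \setminus \{u\}$. For the inclusion $N^{++}_{G'}(u) \subseteq N^{++}_G(u) \setminus S$: any $w \in N^{++}_{G'}(u)$ lies in $V(G') \setminus \{u\}$, hence outside $S$; any $G'$-witness for $w$ is also a $G$-witness; and $w \notin N^+_G(u)$ follows from $w \notin N^+_{G'}(u) = N^+_G(u) \setminus S$ combined with $w \notin S$. For the reverse inclusion, take $w \in N^{++}_G(u) \setminus S$ with witness $x \in N^+_G(u)$. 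The preliminary observation forces $x \notin S$, so $x$ survives in $G'$, belongs to $N^+_{G'}(u)$, and the arc $(x,w)$ persists in $G'$; the remaining requirements $w \neq u$ and $w \notin N^+_{G'}(u)$ are immediate.

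Part \textup{($b$)} is essentially the same argument with $u$ replaced by $v$ at the right moment. Given $w \in N^{++}_G(u) \setminus S$, the condition $w \neq v$ follows from $w \notin S$; the condition $w \notin N^+_G(v)$ follows from the module identity $N^+_G(v) \setminus S = N^+_G(u) \setminus S$ together with $w \notin N^+_G(u)$ and $w \notin S$. Finally, the preliminary observation forces the $u$-witness $x$ for $w$ to lie outside $S$, so $x \in N^+_G(u) \setminus S = N^+_G(v) \setminus S$, and this same $x$ witnesses $w \in N^{++}_G(v)$. Combined with $w \notin S$, this gives $w \in N^{++}_G(v) \setminus S$, and the reverse inclusion is symmetric.

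There is no real obstacle here; the proof is a direct unpacking of definitions. The only step one must be careful to extract is the preliminary observation forcing the witness outside $S$, without which one cannot promote a $G$-witness to a $G'$-witness in \textup{($a$)} nor swap the role of $u$ and $v$ in \textup{($b$)}.
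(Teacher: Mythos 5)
Your proof is correct and follows essentially the same route as the paper's: both hinge on the single observation that, for a target vertex outside $S$ that is not an out-neighbor of $u$, any witness of a length-two path must itself lie outside $S$, and both parts then follow by unpacking definitions. No discrepancies to report.
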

\begin{proof}
Clearly, $N^{++}_{D'}(u)\subseteq N^{++}_D(u)\setminus S$.
Consider any vertex $x\in N^{++}_D(u)\setminus S$. Then $(u,x)\notin E(D)$ and there exists $w\in V(D)$ such that $(u,w),(w,x)\in E(D)$.
As we have $x\notin S$, $(w,x)\in E(D)$, $(u,x)\notin E(D)$, and $S$ is a module containing $u$, we have $w\notin S$. Then since $u,w,x\in V(D')$, we have that $(u,w),(w,x)\in E(D')$ and $(u,x)\notin E(D')$, implying that $x\in N^{++}_{D'}(u)$. Therefore, $N^{++}_D(u)\setminus S\subseteq N^{++}_{D'}(u)$, proving~\ref{removemod}.

Note that for proving~\ref{outsidemod}, we only need to prove that $N^{++}_D(u)\setminus S\subseteq N^{++}_D(v)\setminus S$, as $u$ and $v$ are symmetric. Consider any vertex $x\in N^{++}_D(u)\setminus S$. As noted above, $(u,x)\notin E(D)$ and there exists $w\in V(D)$ such that $(u,w),(w,x)\in E(D)$. Since $u$ and $v$ belong to the module $S$ in $D$, we have that $(v,w)\in E(D)$ and $(v,x)\notin E(D)$, implying that $x\in N^{++}_D(v)\setminus S$. Therefore, $N^{++}_D(u)\setminus S\subseteq N^{++}_D(v)\setminus S$.
\end{proof}
\subsubsection*{Good median orders}
We now define a special kind of median order of tournaments, along the lines of Ghazal~\cite{ghazal2015remark}.
Given a partition $\mathcal{I} = \{I_1, I_2,\ldots, I_r\}$ of $V(T)$ such that each $I_i$, $1\leq i\leq r$, is a module in $T$, we say that a median order of $T$ is a \emph{good median order with respect to $\mathcal{I}$} if for each $i\in\{1,2,\ldots,r\}$, the vertices of $I_i$ appear consecutively in it (note that this is slightly different from the ``good median orders'' defined by Ghazal~\cite{ghazal2015remark}). The following lemma is implicit in the work of Ghazal (the interested reader may refer to the Appendix for a proof).

\begin{lemma}\label{module}
Let $\mathcal{I} = \{I_1, I_2,\ldots, I_r\}$ be a partition of the vertex set of a tournament $T$ into modules and let $L$ be a median order of $T$. Then there is a good median order $L'$ of $T$ with respect to $\mathcal{I}$ such that $L$ and $L'$ have the same feed vertex.
\end{lemma}

The following fact was noted by Ghazal~\cite{ghazal2015remark}.
\begin{proposition}\label{modht}
Let $d$ be the feed vertex of a median order of a tournament $T$ and let $I$ be a module in $T$ containing $d$. Then for any vertex $v\in I$, $|N^+(v)\setminus I|\leq |N^{++}(v)\setminus I|$.
\end{proposition}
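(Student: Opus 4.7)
The plan is to reduce to Theorem~\ref{thmht} by showing that $d$ is the feed vertex of some median order of the sub-tournament $G=T-(I\setminus\{d\})$.

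First, I would apply Lemma~\ref{module} to the partition $\mathcal{I}=\{I\}\cup\{\{x\}\colon x\in V(T)\setminus I\}$ (singletons being trivially modules) to obtain a good median order $L'$ of $T$ whose feed vertex is still $d$. Since $I$ appears consecutively in $L'$ and $d\in I$ is in the final position, the block for $I$ occupies the last $|I|$ positions. Write $L'=(x_1,\ldots,x_{n-k},d_1,\ldots,d_{k-1},d)$ with $I=\{d_1,\ldots,d_{k-1},d\}$ and $k=|I|$. By Proposition~\ref{subtournament}\ref{suborder}, $(x_1,\ldots,x_{n-k})$ is a median order of $T-I$.

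The key step is to show that $L''=(x_1,\ldots,x_{n-k},d)$ is a median order of $G$. For any ordering $\hat L$ of $V(G)$, form an ordering $\tilde L$ of $V(T)$ by inserting $d_1,\ldots,d_{k-1}$ as a contiguous block immediately before $d$. Because $I$ is a module, $(y,d_i)\in E(T)$ iff $(y,d)\in E(T)$ for every $d_i\in I$ and every $y\in V(T)\setminus I$, so each of the $k$ vertices of $I$ contributes in $\tilde L$ exactly the same number $B$ of forward cross-arcs as $d$ contributes in $\hat L$. Let $A=\mathrm{FA}(\hat L|_{V(T)\setminus I},T-I)$, $A^*=\mathrm{FA}((x_1,\ldots,x_{n-k}),T-I)$, $B^*=|N^-(d)\setminus I|$ (the analogous cross-contribution for $L'$), and $F_I$ the number of forward arcs of $T[I]$ under $(d_1,\ldots,d_{k-1},d)$. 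Then $\mathrm{FA}(\tilde L,T)=A+F_I+kB$, $\mathrm{FA}(L',T)=A^*+F_I+kB^*$, $\mathrm{FA}(\hat L,G)=A+B$, and $\mathrm{FA}(L'',G)=A^*+B^*$. Medianhood of $L'$ in $T$ yields $A^*-A\ge k(B-B^*)$, while medianhood of $(x_1,\ldots,x_{n-k})$ in $T-I$ yields $A^*\ge A$; a short case split on the sign of $B-B^*$ then gives $(A^*+B^*)-(A+B)\ge 0$, so $L''$ is a median order of $G$ with feed vertex $d$.

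With this in hand, Theorem~\ref{thmht} applied to $G$ gives $|N^+_G(d)|\le|N^{++}_G(d)|$. Proposition~\ref{modulesec}\ref{removemod}, applied to the module $I$ with $u=d$, identifies $N^+_G(d)=N^+(d)\setminus I$ and $N^{++}_G(d)=N^{++}(d)\setminus I$, so $|N^+(d)\setminus I|\le|N^{++}(d)\setminus I|$. For any $v\in I$, the module property gives $N^+(v)\setminus I=N^+(d)\setminus I$, and Proposition~\ref{modulesec}\ref{outsidemod} gives $N^{++}(v)\setminus I=N^{++}(d)\setminus I$, yielding the desired inequality. The main obstacle is the median-order claim for $L''$; the module structure is exactly what makes it go through, since it forces the cross-arc contribution for each of the $k$ vertices of $I$ to be identical and therefore enters the $L'$-inequality with weight $k$ rather than $1$, which is precisely the excess weight needed to absorb the case $B>B^*$.
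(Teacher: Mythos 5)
Your proof is correct, but it reaches the reduced tournament by a genuinely different and considerably heavier route than the paper. Both arguments share the same skeleton: pass to a good median order via Lemma~\ref{module} so that $I$ occupies the last $|I|$ positions, delete all but one vertex of $I$, apply Theorem~\ref{thmht} to the survivor, and transfer the conclusion to every $v\in I$ via the module properties (Proposition~\ref{modulesec}). The difference is in which vertex of $I$ survives. You insist on keeping the feed vertex $d$, which forces you to prove from scratch that $(x_1,\ldots,x_{n-k},d)$ is a median order of $T-(I\setminus\{d\})$; your weighted counting argument for this (each vertex of the contiguous module block contributes the cross-arc count of $d$ with multiplicity $k$, so the $L'$-optimality inequality $A^*-A\ge k(B-B^*)$ absorbs the case $B>B^*$) is sound, but it is the entire technical burden of your proof. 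The paper sidesteps this completely by observing that the survivor need not be $d$: since the final inequality is transferred to all of $I$ anyway, it suffices to keep the \emph{first} vertex $x_i$ of the block $I$ in the good median order. Then $(x_1,\ldots,x_i)$ is a prefix of a median order, hence by Proposition~\ref{subtournament}\ref{suborder} it is automatically a median order of $T-(I\setminus\{x_i\})$ with feed vertex $x_i\in I$, and no counting is needed. What your approach buys is the slightly stronger (but here unnecessary) fact that $d$ itself is the feed vertex of a median order of $T-(I\setminus\{d\})$; what the paper's choice buys is a one-line proof of the key step.
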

\begin{proof}
Let $\mathcal{I}=\{I\}\cup\{\{u\}\colon u\notin I\}$. It is easy to see that $\mathcal{I}$ is a partition of $V(T)$ into modules.
By Lemma~\ref{module}, there exists a good median order $L=(x_1,x_2,\ldots,x_n=d)$ of $T$ with respect to $\mathcal{I}$. Then, there exists $i\in\{1,2,\ldots,n\}$ such that $I=\{x_i,x_{i+1},\ldots,x_n\}$. By Proposition~\ref{subtournament}\ref{suborder}, $L'=(x_1,x_2,\ldots,x_i)$ is a median order of $T'=T-(I\setminus\{x_i\})$. By Theorem~\ref{thmht}, $|N^+_{T'}(x_i)|\leq |N^{++}_{T'}(x_i)|$. Consider any $v\in I$. As $I$ is a module containing $x_i$ and $v$, $N^+_T(v)\setminus I=N^+_T(x_i)\setminus I=N^+_{T'}(x_i)$. By Proposition~\ref{modulesec}, we also have that $N^{++}_{T'}(x_i)=N^{++}_T(x_i)\setminus I=N^{++}_T(v)\setminus I$. Combining the above observations, we get $|N^+_T(v)\setminus I|\leq |N^{++}_T(v)\setminus I|$.
\end{proof}

\subsubsection*{Sedimentation of a good median order}

Havet and Thomass\'e~\cite{havet2000median} defined the process of ``sedimentation'' of a median order, using which a median order can be transformed into another median order when certain conditions are satisfied.
Ghazal modified the notion of sedimentation of median orders to apply to good median orders. We slightly modify this so as to redefine sedimentation without referring to the ``good'' and ``bad'' vertices that appear in the work of Havet and Thomass\'e and that of Ghazal.

Suppose that $L = (x_1,x_2,\ldots,x_n)$ is a good median order of a tournament $T$ with respect to $\mathcal{I}$, where $\mathcal{I}$ is a partition of $V(T)$ into modules. Let $I$ be the set in $\mathcal{I}$ containing $x_n$ and $t=|I|$. Then $I=\{x_{n-t+1},x_{n-t+2},\ldots,x_n\}$. Recall that by Proposition~\ref{modht}, $|N^+(x_n)\setminus I|\leq |N^{++}(x_n)\setminus I|$.
Then the \emph{sedimentation of $L$ with respect to $\mathcal{I}$}, denoted by $Sed_{\mathcal{I}}(L)$, is an ordering of $V(T)$ that is defined in the following way. If $|N^+(x_n)\setminus I| < |N^{++}(x_n)\setminus I|$, then $Sed_{\mathcal{I}}(L) = L$. If $|N^+(x_n)\setminus I| = |N^{++}(x_n)\setminus I|$, then $Sed_{\mathcal{I}}(L)$ is defined as follows. Let $b_1,b_2,\ldots,b_k$ be the vertices in $N^-(x_n)\setminus N^{++}(x_n)$ which are outside $I$ and $v_1,v_2,\ldots,v_{n-t-k}$ the vertices in $N^+(x_n)\cup N^{++}(x_n)$ which are outside $I$, both enumerated in the order in which they appear in $L$ (note that in any tournament, $N^{++}(u)\subseteq N^-(u)$ for any vertex $u$ in it). Then $Sed_{\mathcal{I}}(L)$ is the order $(b_1,b_2,\ldots,b_k,x_{n-t+1},x_{n-t+2},\ldots,x_n,v_1,v_2,\ldots,v_{n-t-k})$.

Given below is the main theorem that we need for sedimentation of median orders. This is a slight modification of a result of Ghazal~\cite{ghazal2015remark} to apply to our version of sedimentation (we again want to avoid using the ``good vertices'' of Havet and Thomass\'e; the interested reader may refer to the Appendix for a proof).
\begin{theorem}\label{sedimentation}
Let $T$ be a tournament. If $\mathcal{I}$ is a partition of $V(T)$ into modules and $L$ is a good median order of $T$ with respect to $\mathcal{I}$, then $Sed_{\mathcal{I}}(L)$ is also a good median order of $T$ with respect to $\mathcal{I}$.
\end{theorem}

Following Ghazal and Havet and Thomass\'e, we inductively define $Sed^0_{\mathcal{I}}(L) = L$ and for integer $q\geq 1$, $Sed^q_{\mathcal{I}}(L)=Sed_{\mathcal{I}}(Sed^{q-1}_{\mathcal{I}}(L))$. A good median order $L$ of $T$ with respect to some $\mathcal{I}$ is said to be \emph{stable} if there exists integer $q\geq 0$ such that $Sed^{q+1}_{\mathcal{I}}(L) = Sed^q_{\mathcal{I}}(L)$. Otherwise, $L$ is \textit{periodic}.

\subsection{Tournaments missing a matching}\label{sec:main}
In this section, we outline the proof of Fidler and Yuster that shows that the Second Neighborhood Conjecture is true for tournaments missing a matching.
Throughout this section, we denote by $D$ an oriented graph that can be obtained from a tournament by removing a (possibly empty) matching.

For a vertex $u\in V(D)$, we say that the vertices in $N^+_D(u)\cup N^-_D(u)$ are the \emph{neighbors} of $u$ and that the vertices in $V(D)\setminus (N^+_D(u)\cup N^-_D(u))$ are the \emph{non-neighbors} of $u$. It is easy to see that every vertex in $D$ has at most one non-neighbor. If there is no edge between two distinct vertices $x$ and $y$ in $D$, i.e., $x$ is a non-neighbor of $y$ (and vice versa), then we say that $\{x,y\}$ is a \emph{missing edge} in $D$. We denote this missing edge as $x \miss y$ (or, equivalently $y \miss x$). For an arc $(x,y)\in E(D)$, we use the notation $x\rightarrow y$ (in other words, $y\in N^+_D(x)$). If $(x,y)\in E(D)$ is an arc with the additional property that $x\notin N_D^{++}(y)$, then we say that $(x,y)$ is a \emph{special arc}, and denote it as $x\twoheadrightarrow y$. Note that there can be no directed triangle in $D$ containing a special arc.

\begin{lemma}\label{lemcycle}
	Let $C = a_0\rightarrow a_1\twoheadrightarrow a_2\twoheadrightarrow a_3
	\twoheadrightarrow\cdots\twoheadrightarrow a_{k-1}\rightarrow a_0$ be a cycle in $D$. Then:
	\begin{myenumerate}
		\item\label{missone} $a_0$ has a non-neighbor in $C$, and
		\item\label{inout} if $a_0\miss a_i$, then for $j\in\{1,\ldots,i-1\}$, $a_0\rightarrow a_j$ and for $j\in\{i+1,\ldots,k-1\}$, $a_j\rightarrow a_0$.
	\end{myenumerate}
\end{lemma}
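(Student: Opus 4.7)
The whole argument rests on the single observation already noted in the paper: no directed triangle in $G$ can contain a special arc. Indeed, if $x \twoheadrightarrow y$ and $y \to z \to x$, then $z \in N^+(y)$ and $x \in N^+(z)$, whereas $x \notin N^+(y) \cup \{y\}$ (there are no digons), so $x \in N^{++}(y)$, contradicting that the arc $(x,y)$ is special. Both parts of the lemma are proved by producing such a forbidden triangle.

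For part~\ref{missone} I would argue by contradiction: assume $a_0$ is adjacent to every $a_j$, $1\leq j\leq k-1$. Since $a_0 \to a_1$ but $a_{k-1} \to a_0$, there is a smallest index $i \in \{2,\ldots,k-1\}$ with $a_i \to a_0$. By minimality, $a_0 \to a_{i-1}$ (using $a_0 \to a_1$ when $i=2$). Because $i\geq 2$, the arc $a_{i-1} \twoheadrightarrow a_i$ is one of the special arcs of $C$, so
\[
a_0 \to a_{i-1} \twoheadrightarrow a_i \to a_0
\]
is a directed triangle containing a special arc, contradiction.

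For part~\ref{inout} the same trick is applied twice. Let $a_0 \miss a_i$; note $2 \leq i \leq k-2$ because $a_0 \to a_1$ and $a_{k-1}\to a_0$. Since every vertex has at most one non-neighbor, every $a_j$ with $j\neq i$ is joined to $a_0$ in exactly one direction. Suppose some $j \in \{1,\ldots,i-1\}$ satisfies $a_j \to a_0$, and pick the smallest such; then $j\geq 2$ and $a_0 \to a_{j-1}$, so the special arc $a_{j-1}\twoheadrightarrow a_j$ completes the triangle $a_0 \to a_{j-1} \twoheadrightarrow a_j \to a_0$, a contradiction. Symmetrically, if some $j\in\{i+1,\ldots,k-1\}$ satisfies $a_0 \to a_j$, take the largest such; then $j\leq k-2$ and $a_{j+1}\to a_0$, and the special arc $a_j \twoheadrightarrow a_{j+1}$ yields the forbidden triangle $a_0 \to a_j \twoheadrightarrow a_{j+1} \to a_0$.

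I do not anticipate a real obstacle here; the only care needed is the bookkeeping that ensures the chosen extremal index lies strictly between $1$ and $k-1$, so that the middle arc of the triangle is genuinely one of the special arcs $a_t \twoheadrightarrow a_{t+1}$ (with $t \geq 1$) rather than the two exceptional arcs $a_0 \to a_1$ or $a_{k-1} \to a_0$. The hypotheses $a_0 \to a_1$ and $a_{k-1}\to a_0$ built into the cycle handle exactly this edge case.
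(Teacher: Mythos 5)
Your proof is correct and rests on the same key fact as the paper's---that no directed triangle of $G$ contains a special arc---with your part~(a) being the contrapositive (minimal-counterexample) form of the paper's forward propagation $a_0\rightarrow a_i\Rightarrow a_0\rightarrow a_{i+1}$. The only cosmetic difference is in part~(b), where the paper reduces to part~(a) via a shorter cycle while you exhibit the forbidden triangle directly through an extremal index; your bookkeeping that the middle arc is genuinely special is handled correctly.
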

\begin{proof}
	Since $D$ is an oriented graph that has no directed triangle containing a special arc, we have that $k\geq 4$.
	
	\ref{missone} Assume to the contrary that $a_0$ has no non-neighbor in $C$, i.e., $\forall i\neq 0, a_0\rightarrow a_i$ or $a_i\rightarrow a_0$. For some $i\neq 0$, if $a_0\rightarrow a_i$, then $a_0\rightarrow a_{i+1}$, because otherwise, $a_0\rightarrow a_i\twoheadrightarrow a_{i+1}\rightarrow a_0$ forms a directed triangle containing a special arc. Now since $a_0\rightarrow a_1$, applying this observation repeatedly gives us $a_0\rightarrow a_2$, $a_0\rightarrow a_3$, \ldots, $a_0\rightarrow a_{k-1}$, which is a contradiction to the fact that $a_{k-1}\rightarrow a_0$.
	
	\ref{inout} Let $a_0\miss a_i$. As $a_i$ is the only non-neighbor of $a_0$ in $D$, for each $j\notin\{0,i\}$, we have either $a_0\rightarrow a_j$ or $a_j\rightarrow a_0$. Suppose that for some $j\in\{1,\ldots,i-1\}$, we have $a_j\rightarrow a_0$, then consider the cycle $C'=a_0\rightarrow a_1\twoheadrightarrow \cdots\twoheadrightarrow a_j\rightarrow a_0$. Then $a_0$ has no non-neighbor in $C'$, which is a contradiction to~\ref{missone}. Similarly, if there is some $j\in\{i+1,\ldots,k-1\}$ such that $a_0\rightarrow a_j$, then there is no non-neighbor of $a_0$ in the cycle $a_0\rightarrow a_j\twoheadrightarrow a_{j+1}\twoheadrightarrow\cdots\twoheadrightarrow a_{k-1}\rightarrow a_0$, again contradicting~\ref{missone}.
\end{proof}

\subsubsection*{Special cycles}

We call a cycle in $D$ a \emph{special cycle} if it consists only of special arcs. It is easy to see that any special cycle contains at least 4 vertices.
The following corollary is an immediate consequence of Lemma~\ref{lemcycle}.

\begin{corollary}\label{propcycle}
	Let $C = a_0\twoheadrightarrow a_1\twoheadrightarrow a_2\twoheadrightarrow a_3\twoheadrightarrow\cdots\twoheadrightarrow a_{k-1}\twoheadrightarrow a_0$ be a special cycle in $D$. Then:
	\begin{myenumerate}
		\item\label{missonespecial} Each vertex in $C$ has a non-neighbor in $C$, 
		\item\label{inoutspecial} if $a_i\miss a_j$, then $N^+_D(a_i)\cap V(C)=\{a_{i+1},a_{i+2},\ldots,a_{j-1}\}$ and $N^-_D(a_i)\cap V(C)=\{a_{j+1},a_{j+2},\ldots,a_{i-1}\}$, where subscripts are modulo $k$.
	\end{myenumerate}
\end{corollary}

\begin{lemma}\label{cycle_cross_mod}
	Let $C = a_0\twoheadrightarrow a_1\twoheadrightarrow a_2\twoheadrightarrow a_3\twoheadrightarrow\cdots\twoheadrightarrow a_{k-1}\twoheadrightarrow a_0$ be a special cycle in $D$. Then:
	\begin{myenumerate}
		\item\label{kiseven} $k$ is even,
		\item\label{cross} For each vertex $a_i\in V(C)$, $a_i\miss a_{i+\frac{k}{2}}$ (subscripts modulo $k$),
		\item\label{module_cycle} $V(C)$ forms a module in $D$.
	\end{myenumerate}
\end{lemma}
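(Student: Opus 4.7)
The plan is to define, for each $i$, the cyclic offset $m_i$ from $a_i$ to its unique non-neighbor in $C$, show via the special-arc constraint that all $m_i$ coincide, use uniqueness of non-neighbors to deduce that this common value must be $k/2$, and then handle (c) by a short sign argument for vertices outside $C$.

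By Corollary~\ref{corcycle}(a), every $a_i$ has a non-neighbor in $V(C)$, and since the missing edges of $G$ form a matching this non-neighbor is unique; let $m_i \in \{1,\ldots,k-1\}$ be the integer with $a_i \miss a_{i+m_i}$ (indices mod $k$). The special arcs $a_{i-1}\twoheadrightarrow a_i\twoheadrightarrow a_{i+1}$ force $a_i$ to be adjacent to both $a_{i-1}$ and $a_{i+1}$, so $2 \leq m_i \leq k-2$. To show $m_{i+1}\leq m_i$, I would use the special arc $a_i\twoheadrightarrow a_{i+1}$: no $w$ satisfies $a_{i+1}\to w\to a_i$. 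Taking $w = a_j$ for $j$ in $N^+(a_{i+1})\cap V(C) = \{a_{i+2},\ldots,a_{i+m_{i+1}}\}$ (Corollary~\ref{corcycle}(b) applied to $a_{i+1}$), each such $a_j$ must fail to be an in-neighbor of $a_i$. Applying Corollary~\ref{corcycle}(b) to $a_i$, $N^-(a_i)\cap V(C) = \{a_{i+m_i+1},\ldots,a_{i-1}\}$, so $\{i+2,\ldots,i+m_{i+1}\}$ must be disjoint from $\{i+m_i+1,\ldots,i-1\}$ mod $k$, which forces $m_{i+1}\leq m_i$. Iterating cyclically gives $m_0 = m_1 = \cdots = m_{k-1} =: m$.

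From $a_i\miss a_{i+m}$ and $a_{i+m}\miss a_{i+2m}$, together with the uniqueness of the non-neighbor of $a_{i+m}$ (which is $a_i$), we get $a_{i+2m}=a_i$, i.e., $k\mid 2m$. Combined with $2\leq m\leq k-2$ this forces $2m = k$, yielding (a) and (b). For (c), any $v\notin V(C)$ cannot be a non-neighbor of any $a_i$, since the unique non-neighbor of $a_i$ is $a_{i+k/2}\in V(C)$; so there is a definite arc between $v$ and each $a_i$. Assign $\epsilon_i = +$ if $v\to a_i$ and $\epsilon_i = -$ if $a_i\to v$. The special arc $a_{i-1}\twoheadrightarrow a_i$ forbids $a_i\to v\to a_{i-1}$, which is exactly the transition $(\epsilon_{i-1},\epsilon_i)=(+,-)$. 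A cyclic $\pm$-sequence that never transitions from $+$ to $-$ must be constant (once a $+$ appears, every subsequent entry is forced to be $+$, and wrapping around fixes the whole sequence at $+$; otherwise the sequence is all $-$). Hence $v$ relates identically to every vertex of $C$, so $V(C)$ is a module.

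I expect the main bookkeeping obstacle to be the modular index arithmetic in showing $m_{i+1}\leq m_i$; once that inequality is obtained cleanly, the remaining parts (a), (b), and (c) all fall out of the uniqueness of non-neighbors and the short cyclic sign argument.
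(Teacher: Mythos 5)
Your proof is correct, but it reaches parts (a) and (b) by a genuinely different route than the paper. The paper gets (a) immediately from the observation that the non-neighbor relation is a perfect matching on $V(C)$, and proves (b) by contradiction: if the non-neighbor of $a_i$ is not antipodal, the longer of the two arcs of $C$ between $a_i$ and its non-neighbor must (by pigeonhole, since the missing edges pair up $V(C)$) contain a matched pair $a_p\miss a_q$, and the sub-cycle $a_q\twoheadrightarrow\cdots\twoheadrightarrow a_i\rightarrow a_q$ then violates Lemma~\ref{lemcycle}\ref{missone}. You instead show that the offset function $i\mapsto m_i$ is weakly decreasing around the cycle --- since $N^+(a_{i+1})\cap V(C)$ and $N^-(a_i)\cap V(C)$ must be disjoint by the special arc $a_i\twoheadrightarrow a_{i+1}$, and these sets are pinned down by Corollary~\ref{corcycle}\ref{inoutspecial} --- hence constant, and then the involution property of the matching forces $2m\equiv 0\pmod k$ and $2m=k$, delivering (a) and (b) in one stroke. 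Your argument is purely local and avoids both the pigeonhole step and the construction of an auxiliary cycle; the paper's version buys a one-line proof of (a) and a shorter path to (b) at the cost of invoking Lemma~\ref{lemcycle} a second time on a new cycle. Your part (c) is the paper's argument in different notation: the forbidden sign transition $(+,-)$ is exactly the paper's chain of inclusions $N^-(a_i)\setminus V(C)\subseteq N^-(a_{i+1})\setminus V(C)$ wrapped around the cycle. All the index bookkeeping you flagged as the main risk does check out: if $m_{i+1}>m_i$ then the vertex at offset $m_i+1$ from $a_i$ lies in both sets, using only $2\le m_i,m_{i+1}\le k-2$.
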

\begin{proof}
	Using Corollary~\ref{propcycle}\ref{missonespecial}, we have that every vertex of $C$ has exactly one non-neighbor in $C$. This proves~\ref{kiseven}.
	
	\ref{cross} Let $a_j$ be the non-neighbor of $a_i$ in $C$. Suppose that $j\neq i+\frac{k}{2}$ (modulo $k$). Then one of the sets $\{a_{i+1},a_{i+2},\ldots,a_{j-1}\},\{a_{j+1},a_{j+2},\ldots,a_{i-1}\}$ (subscripts modulo $k$) is larger than the other. We shall assume without loss of generality that $|\{a_{i+1},a_{i+2},\ldots,a_{j-1}\}|>|\{a_{j+1},a_{j+2},\ldots,a_{i-1}\}|$. This means that there exists $a_p,a_q\in\{a_{i+1},a_{i+2},\ldots,a_{j-1}\}$ such that $a_p\miss a_q$, where $a_p$ occurs before $a_q$ in the ordering $a_{i+1},a_{i+2},\ldots,a_{j-1}$. By Corollary~\ref{propcycle}\ref{inoutspecial}, we know that $a_i\rightarrow a_q$. Now consider the cycle $C'=a_q\twoheadrightarrow a_{q+1}\twoheadrightarrow\cdots\twoheadrightarrow a_{i-1}\twoheadrightarrow a_i\rightarrow a_q$ (subscripts modulo $k$). There is no non-neighbor of $a_q$ in $C'$ (as $a_p$ is the only non-neighbor of $a_q$), which contradicts Lemma~\ref{lemcycle}\ref{missone}.
	\ref{module_cycle} Since every vertex of $C$ has a non-neighbor in $C$, for any $x\in V(D)\setminus V(C)$, $x$ is a neighbor of every vertex in $V(C)=\{a_0,a_1,\ldots,a_{k-1}\}$. This implies that if $x\rightarrow a_i$ for any $i\in\{0,1,\ldots,k-1\}$, then we also have $x\rightarrow a_{i+1}$ as otherwise, $x\rightarrow a_i\twoheadrightarrow a_{i+1}\rightarrow x$ would be a directed triangle containing a special arc (subscripts modulo $k$). Therefore applying this observation repeatedly starting from $a_0$, we get $N^-_D(a_0)\setminus V(C)\subseteq N^-_D(a_1)\setminus V(C)\subseteq  N^-_D(a_2)\setminus V(C)\subseteq \cdots \subseteq N^-_D(a_{k-2})\setminus V(C)\subseteq N^-_D(a_{k-1})\setminus V(C)\subseteq N^-_D(a_0)\setminus V(C)$. Similarly, if $a_i\rightarrow x$ for any $i\in\{0,1,\ldots,k-1\}$, then we also have $a_{i-1}\rightarrow x$, as otherwise $x\rightarrow a_{i-1}\twoheadrightarrow a_i\rightarrow x$ would be a directed triangle containing a special arc (subscripts modulo $k$). Again applying this observation repeatedly starting from $a_0$, we get $N^+_D(a_0)\setminus V(C)\subseteq N^+_D(a_{k-1})\setminus V(C)\subseteq N^+_D(a_{k-2})\setminus V(C)\subseteq\cdots \subseteq N^+_D(a_2)\setminus V(C)\subseteq N^+_D(a_1)\setminus V(C)\subseteq N^+_D(a_0)\setminus V(C)$. This shows that for any two vertices $a_i,a_j\in V(C)$, we have $N^+_D(a_i)\setminus V(C)=N^+_D(a_j)\setminus V(C)$ and $N^-_D(a_i)\setminus V(C)=N^-_D(a_j)\setminus V(C)$, implying that $V(C)$ forms a module in $D$.
\end{proof}

\subsubsection*{The relation $R$ and the digraph $\Delta_D$}
Let $M$ be the set $\{(x,y)\in V(D)\times V(D)\colon x \miss y\}$. We define a relation $R$ on $M$ as follows. For distinct $(a,b),(c,d)\in M$, we say that $(a,b)R(c,d)$ if and only if there exists the four cycle $a\rightarrow c\twoheadrightarrow b\rightarrow d\twoheadrightarrow a$ in $D$ (refer Figure~\ref{fig:relation}). Note that $(a,b)R(c,d)$ if and only if $(b,a)R(d,c)$. Following Fidler and Yuster~\cite{fidler2007remarks}, we now define an auxiliary digraph $\Delta_D$ whose vertices are the missing edges of $D$. This graph has the vertex set $V(\Delta_D)=\{\{a,b\}\colon a,b\in V(D)$ and $a\miss b\}$ and arc set $E(\Delta_D)=\{(\{a,b\},\{c,d\})\colon (a,b)R(c,d)\}$. In other words, there is an arc from the vertex $\{a,b\}$ to the vertex $\{c,d\}$ in $\Delta_D$ if and only if either $(a,b)R(c,d)$ or $(a,b)R(d,c)$. Note that from the definition of $R$, we cannot have both $(a,b)R(c,d)$ and $(a,b)R(d,c)$.

\begin{figure}[h]
\newcommand{\myptr}{{\arrow{>};}}
\newcommand{\spec}{{\arrow{>>};}}
\renewcommand{\arrowplacement}{0.5}
\renewcommand{\vertexset}{(a,0,0),(b,1,0),(c,3,0),(d,4,0)}
\renewcommand{\edgeset}{(a,b,,,,dashed),(c,d,,,,dashed),(a,c,,,30,,myptr),(b,d,,,30,,myptr),(c,b,,,20,,spec),(d,a,,,40,,spec)}
\renewcommand{\defradius}{.1}
\begin{center}
\begin{tikzpicture}
\drawgraph
\node[left] at (\xy{a}) {$a$};
\node[right] at (\xy{b}) {$b$};
\node[left] at (\xy{c}) {$c$};
\node[right] at (\xy{d}) {$d$};
\end{tikzpicture}
\end{center}
\caption{Situation that leads to $(a,b)R(c,d)$.}\label{fig:relation}
\end{figure}
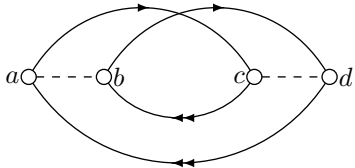

\begin{lemma}[Fidler-Yuster~\cite{fidler2007remarks}]\label{indegree}
For any vertex $e\in V(\Delta_D)$, $|N^+(e)|\leq1$ and $|N^-(e)|\leq1$.
\end{lemma}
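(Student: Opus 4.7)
The plan is to prove $|N^+(e)|\le 1$ by a short case analysis, and then observe that $|N^-(e)|\le 1$ follows by an entirely analogous argument on the ``in-neighbor'' $4$-cycles of the form $c\to a\twoheadrightarrow d\to b\twoheadrightarrow c$. So I fix $e=\{a,b\}$, suppose for contradiction that $e$ has two distinct out-neighbors $\{c,d\}$ and $\{c',d'\}$ in $\Delta(G)$, and, after swapping labels within each pair if needed, assume $(a,b)R(c,d)$. Two cases then arise for the second pair: either $(a,b)R(c',d')$ (the \emph{parallel} case) or $(a,b)R(d',c')$ (the \emph{crossed} case).

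The first preparatory step I would carry out is to check that the four vertices $c,d,c',d'$ are pairwise distinct and that each is adjacent in $G$ to the others that are not its missing-edge partner. This relies on the matching structure of the missing edges: any collision among these four vertices would either force some vertex to have two non-neighbors (violating the matching condition) or force two arcs from the two $4$-cycles to form a digon. Once this bookkeeping is done, in each case I would pick a single edge between a vertex of $\{c,d\}$ and a vertex of $\{c',d'\}$ and observe that neither orientation of it is possible. In the parallel case the two cycles in play are $a\to c\twoheadrightarrow b\to d\twoheadrightarrow a$ and $a\to c'\twoheadrightarrow b\to d'\twoheadrightarrow a$, and the edge to examine is $c$--$d'$: orienting $c\to d'$ would give $a\to c\to d'$, contradicting $d'\twoheadrightarrow a$, while orienting $d'\to c$ would give $b\to d'\to c$, contradicting $c\twoheadrightarrow b$. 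In the crossed case the cycles are $a\to c\twoheadrightarrow b\to d\twoheadrightarrow a$ and $a\to d'\twoheadrightarrow b\to c'\twoheadrightarrow a$, and the edge to examine is $c$--$c'$: orienting $c'\to c$ gives $b\to c'\to c$ against $c\twoheadrightarrow b$, while orienting $c\to c'$ gives $a\to c\to c'$ against $c'\twoheadrightarrow a$.

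The step I expect to require the most care is the opening bookkeeping, in particular ruling out every possible collision among $c,d,c',d'$ using only the matching hypothesis and the absence of digons in $G$; the deductions within each case, by contrast, are one-line applications of the definition of a special arc. For the in-degree statement, the same template applied to the in-neighbor $4$-cycles (with $a,b$ now the ``internal'' vertices of each cycle and the four candidate vertices playing dual roles) produces contradictions in exactly the same way, so the full lemma follows.
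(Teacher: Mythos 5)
Your proof is correct and is essentially the paper's own argument: in both treatments one fixes a single edge of $G$ joining the two $4$-cycles (your $c$--$d'$, resp.\ $c$--$c'$) and observes that either orientation of it creates a directed triangle containing a special arc, with the matching condition guaranteeing that this edge is present. The only cosmetic difference is that the paper collapses your ``parallel'' and ``crossed'' cases into one by relabelling the second pair so that $(a,b)R(c_2,d_2)$ always holds, whereas you treat them separately; your extra bookkeeping on the distinctness and adjacency of $c,d,c',d'$ is sound (indeed immediate, since distinct edges of a matching are vertex-disjoint) and only makes explicit what the paper leaves implicit.
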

\begin{proof}
	Let $e=\{a,b\}$. Suppose that it has two out-neighbors in $\Delta_D$, say $e_1 = \{c_1,d_1\}$, $e_2= \{c_2,d_2\}$. Recalling the definition of $\Delta_D$, we can assume without loss of generality that $(a,b)R(c_1,d_1)$ and $(a,b)R(c_2,d_2)$. That is, we have $a\rightarrow c_1\twoheadrightarrow b\rightarrow d_1\twoheadrightarrow a$ and $a\rightarrow c_2\twoheadrightarrow b\rightarrow d_2\twoheadrightarrow a$ in $D$. As $d_1$ is already a non-neighbor of $c_1$, we cannot have $c_1\miss d_2$. Now if $c_1\rightarrow d_2$ then we have the directed triangle $a\rightarrow c_1\rightarrow d_2 \twoheadrightarrow a$ containing a special arc, which is a contradiction. Similarly, if $d_2\rightarrow c_1$ then $b\rightarrow d_2\rightarrow c_1\twoheadrightarrow b$ is a directed triangle containing a special arc, which is again a contradiction. Thus, $|N^+(e)|\leq1$.
	
	Now suppose $e=\{a,b\}$ has two in-neighbors in $\Delta_D$, say $e_1 = \{c_1,d_1\}$, $e_2= \{c_2, d_2\}$. Again, we can assume without loss of generality that $(c_1,d_1)R(a,b)$ and $(c_2,d_2)R(a,b)$. Then we have $c_1\rightarrow a\twoheadrightarrow d_1\rightarrow b\twoheadrightarrow c_1$ and $c_2\rightarrow a\twoheadrightarrow d_2\rightarrow b\twoheadrightarrow c_2$ in $D$. As before, we cannot have $c_1\miss d_2$. If $c_1\rightarrow d_2$ then we have the directed triangle $c_1\rightarrow d_2\rightarrow b \twoheadrightarrow c_1$ containing a special arc and if $d_2\rightarrow c_1$, we have another directed triangle $d_2\rightarrow c_1\rightarrow a \twoheadrightarrow d_2$ containing a special arc. Since we have a contradiction in both cases, we conclude that $|N^-(e)|\leq1$.
\end{proof}

Therefore, $\Delta_D$ is a disjoint union of directed paths and directed cycles. Let $\mathcal{P}$ denote the collection of these directed paths and $\mathcal{C}$ denote the collection of these directed cycles.

For a cycle $Q\in\mathcal{C}$, we let $\Gamma(Q)=\bigcup_{\{u,v\}\in V(Q)} \{u,v\}$. That is, if $Q=\{a_1,b_1\}\{a_2,b_2\}\cdots\{a_t,b_t\}\{a_1,b_1\}$, then $\Gamma(Q)=\{a_1,b_1,a_2,b_2,\ldots,a_t,b_t\}$. Since $D$ is a tournament missing a matching, it is clear that the sets in $\{\Gamma(Q):Q\in\mathcal{C}\}$ are all pairwise disjoint.

\begin{lemma}\label{special_cycle}
	Let $Q\in\mathcal{C}$. Then there exists a special cycle $C$ in $D$ such that $V(C)=\Gamma(Q)$. 
\end{lemma}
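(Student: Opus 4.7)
I would write $Q=\{a_1,b_1\}\{a_2,b_2\}\cdots\{a_t,b_t\}\{a_1,b_1\}$ and set $e_i=\{a_i,b_i\}$. The plan is to propagate orientations of the missing edges $e_1,\ldots,e_t$ around $Q$, collect the resulting special arcs into a permutation $\sigma$ of $\Gamma(Q)$, and prove that $\sigma$ consists of a single cycle of length $2t$, which will be the desired special cycle $C$.

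First I would pick any orientation of $e_1$, written $(u_1,v_1)$, and walk around $Q$. Given $(u_i,v_i)$ and the directed edge $e_i\to e_{i+1}$ of $\Delta(G)$, the observation in the paragraph introducing $\Delta(G)$ that we cannot have both $(a,b)R(c,d)$ and $(a,b)R(d,c)$ produces a unique orientation $(u_{i+1},v_{i+1})$ of $e_{i+1}$ satisfying $(u_i,v_i)R(u_{i+1},v_{i+1})$. The closing edge $e_t\to e_1$ then supplies some orientation $(u_1',v_1')$ of $e_1$ with $(u_t,v_t)R(u_1',v_1')$, and by the same uniqueness $(u_1',v_1')\in\{(u_1,v_1),(v_1,u_1)\}$. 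Each of the $t$ relations (with the last one using $(u_1',v_1')$) contributes the two special arcs $u_{i+1}\twoheadrightarrow v_i$ and $v_{i+1}\twoheadrightarrow u_i$, and a direct tally shows every vertex of $\Gamma(Q)$ is the tail of exactly one of the resulting $2t$ arcs. I would then define $\sigma\colon\Gamma(Q)\to\Gamma(Q)$ by letting $\sigma(x)$ be the head of the outgoing special arc at $x$; this $\sigma$ is a permutation whose orbits are cycles of special arcs, hence special cycles in $G$.

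The key unifying observation is that, regardless of which value $(u_1',v_1')$ takes, $\sigma$ induces exactly the same ``edge transition'' on $Q$: it sends each vertex of $e_j$ with $j\neq 1$ to a vertex of $e_{j-1}$, and each vertex of $e_1$ to a vertex of $e_t$. Consequently $\sigma^k(u_1)$ lies in $e_{1-k\bmod t}$, which forces the orbit size $k=|\mathcal{O}|$ of $u_1$ under $\sigma$ to be a multiple of $t$. Since $\sigma$ is a permutation of the $2t$-element set $\Gamma(Q)$, we get $k\in\{t,2t\}$, and the entire argument reduces to ruling out $k=t$.

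The main obstacle I expect is excluding $k=t$. If $k=t$ then $\mathcal{O}$ contains exactly one vertex from each missing edge $e_i$, so in particular $v_1\notin\mathcal{O}$. For $t=2$, $\mathcal{O}$ would amount to a digon in $G$, contradicting that $G$ is oriented; for $t=3$, $\mathcal{O}$ would be a triangle containing special arcs, which is ruled out in the remark immediately following the definition of a special arc. For $t\geq 4$, $\mathcal{O}$ is a bona fide special cycle and Corollary~\ref{corcycle}\ref{missonespecial} forces $u_1$ to have a non-neighbor inside $\mathcal{O}$; but $v_1$ is the only non-neighbor of $u_1$ in $G$ (since the missing edges form a matching), and $v_1\notin\mathcal{O}$, a contradiction. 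Therefore $k=2t$, and $\mathcal{O}$ is the required special cycle with $V(\mathcal{O})=\Gamma(Q)$.
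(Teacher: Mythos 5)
Your proof is correct and follows essentially the same route as the paper's: you propagate a consistent orientation of the missing edges around $Q$, observe that the resulting special arcs either close up into a single cycle on all of $\Gamma(Q)$ or into shorter cycles meeting each missing edge exactly once, and eliminate the latter possibility using Corollary~\ref{corcycle} (every vertex of a special cycle has a non-neighbor on it), which is exactly the paper's contradiction. The permutation/orbit packaging and your separate treatment of $t=2,3$ via the digon/triangle observations are only cosmetic differences from the paper's explicit case analysis on whether the closing relation is $(a_t,b_t)R(a_1,b_1)$ or $(a_t,b_t)R(b_1,a_1)$.
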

\begin{proof}
	Let $Q=\{a_1,b_1\}\{a_2,b_2\}\cdots\{a_k,b_k\}\{a_1,b_1\}$. Note that $a_i\miss b_i$, for $1\leq i\leq k$. We shall assume that $k$ is even as the case when $k$ is odd is similar. Also, we can assume without loss of generality that for every $i\in\{1,2,\ldots, k-1\}$, $(a_i,b_i)R(a_{i+1},b_{i+1})$ (since we can always exchange the labels of $a_i$ and $b_i$, if required, so that this condition is satisfied). Then by the definition of $R$, we have $a_i\rightarrow a_{i+1}\twoheadrightarrow b_i\rightarrow b_{i+1}\twoheadrightarrow a_i$ for each $i\in\{1,2,\ldots,k-1\}$. Now if $(a_k,b_k)R (a_1,b_1)$ then we have $a_k\rightarrow a_1\twoheadrightarrow b_k\rightarrow b_1\twoheadrightarrow a_k$ (so $k>2$, implying that $k\geq 4$). This together with the previous observation implies that $C=a_1\twoheadrightarrow b_k\twoheadrightarrow a_{k-1}\twoheadrightarrow b_{k-2}\twoheadrightarrow a_{k-3}\twoheadrightarrow\cdots\twoheadrightarrow b_2\twoheadrightarrow a_1$ (as $k$ is even) is a special cycle in $D$, which contains only those $a_i$'s where $i$ is odd and those $b_i$'s where $i$ is even. This contradicts Corollary~\ref{propcycle}\ref{missone}, as for any odd $i$, the only non-neighbor $b_i$ of $a_i$ is not contained in $C$. Therefore, we have $(a_k,b_k)R (b_1,a_1)$. Then, $a_k\rightarrow b_1\twoheadrightarrow b_k\rightarrow a_1\twoheadrightarrow a_k$, which when combined with the previous observations gives us that $C= a_1\twoheadrightarrow a_k\twoheadrightarrow b_{k-1}\twoheadrightarrow a_{k-2}\twoheadrightarrow b_{k-3}\twoheadrightarrow\cdots \twoheadrightarrow a_2\twoheadrightarrow b_1\twoheadrightarrow b_k\twoheadrightarrow a_{k-1}\twoheadrightarrow b_{k-2}\twoheadrightarrow a_{k-3}\twoheadrightarrow\cdots\twoheadrightarrow b_2\twoheadrightarrow a_1$ is a special cycle in $D$ with $V(C)=\Gamma(Q)$.
\end{proof}

\begin{corollary}\label{gamma}
	Let $Q\in\mathcal{C}$. Then:
	\begin{myenumerate}
		\item\label{missingingamma} For each $u\in\Gamma(Q)$, there exists $v\in\Gamma(Q)$ such that $u\miss v$, and
		\item\label{gammamodule} $\Gamma(Q)$ forms a module in $D$.
	\end{myenumerate}
\end{corollary}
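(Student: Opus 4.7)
The plan is to derive both parts as quick consequences of Lemma~\ref{special_cycle} combined with the structural results we already proved about special cycles.

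First I invoke Lemma~\ref{special_cycle} on the given cycle $Q\in\mathcal{C}$ to obtain a special cycle $C$ in $G$ with $V(C)=\Gamma(Q)$. Since $u\in\Gamma(Q)=V(C)$, part~\ref{missingingamma} follows immediately from Corollary~\ref{corcycle}\ref{missonespecial}, which guarantees that every vertex of a special cycle has a non-neighbor within the special cycle; that non-neighbor $v$ lies in $V(C)=\Gamma(Q)$ and satisfies $u\miss v$.

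For part~\ref{gammamodule}, I again use the same special cycle $C$ with $V(C)=\Gamma(Q)$ and apply Lemma~\ref{cycle_cross_mod}\ref{module_cycle}, which states that the vertex set of any special cycle is a module in $G$. This directly yields that $\Gamma(Q)$ is a module.

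There is no real obstacle here since all the work has been done in Lemma~\ref{special_cycle}, Corollary~\ref{corcycle}, and Lemma~\ref{cycle_cross_mod}; the corollary is essentially a repackaging of those facts in terms of $\Gamma(Q)$ rather than a special cycle $C$, so the proof amounts to two short invocations.
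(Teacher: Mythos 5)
Your proof is correct and is exactly the argument the paper gives: apply Lemma~\ref{special_cycle} to obtain a special cycle $C$ with $V(C)=\Gamma(Q)$, then deduce part (a) from Corollary~\ref{corcycle}\ref{missonespecial} and part (b) from Lemma~\ref{cycle_cross_mod}\ref{module_cycle}. No differences worth noting.
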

\begin{proof}
	The proof of~\ref{missingingamma} is immediate from Lemma~\ref{special_cycle} and Corollary~\ref{propcycle}\ref{missonespecial}.
	Similarly,~\ref{gammamodule} is a direct consequence of Lemma~\ref{special_cycle} and Lemma~\ref{cycle_cross_mod}\ref{module_cycle}.
\end{proof}
\begin{lemma}\label{snc_cycle}
	Let $Q\in \mathcal{C}$. Then for each $u\in\Gamma(Q)$, we have $|N^+_D(u)\cap\Gamma(Q)|=|N^{++}_D(u)\cap\Gamma(Q)|$.
\end{lemma}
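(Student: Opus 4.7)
The plan is to apply Lemma~\ref{special_cycle} to reduce the claim to a direct computation on a special cycle, and then use the module property from Lemma~\ref{cycle_cross_mod}\ref{module_cycle} to confine the search for second-out-neighbor witnesses to the cycle itself. So first I would fix $u \in \Gamma(Q)$ and take the special cycle $C = a_0 \twoheadrightarrow a_1 \twoheadrightarrow \cdots \twoheadrightarrow a_{k-1} \twoheadrightarrow a_0$ with $V(C) = \Gamma(Q)$ guaranteed by Lemma~\ref{special_cycle}. By Lemma~\ref{cycle_cross_mod}\ref{kiseven}, $k$ is even, and by \ref{cross}, the unique non-neighbor of each $a_i$ in $V(C)$ is $a_{i+k/2}$ (indices mod $k$). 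Writing $u = a_i$, Corollary~\ref{corcycle}\ref{inoutspecial} then yields
\[
N^+_G(a_i) \cap V(C) = \{a_{i+1}, a_{i+2}, \ldots, a_{i+k/2-1}\}, \qquad N^-_G(a_i) \cap V(C) = \{a_{i+k/2+1}, \ldots, a_{i-1}\},
\]
so that $|N^+_G(u) \cap \Gamma(Q)| = k/2 - 1$.

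Next I would show that any witness $v$ for a second out-neighbor of $a_i$ lying inside $V(C)$ must itself lie in $V(C)$. Indeed, suppose $w \in V(C) \setminus (N^+_G(a_i) \cup \{a_i\})$ and there exists $v$ with $a_i \to v \to w$. If $v \notin V(C)$, then since $V(C)$ is a module in $G$ (Lemma~\ref{cycle_cross_mod}\ref{module_cycle}) and $a_i \to v$, the module property forces $a_j \to v$ for every $a_j \in V(C)$; in particular $w \to v$, contradicting $v \to w$ in an oriented graph. Hence every witness $v$ for $w \in N^{++}_G(a_i) \cap V(C)$ must satisfy $v \in N^+_G(a_i) \cap V(C)$ and $v \in N^-_G(w) \cap V(C)$.

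I would then enumerate the candidates $w = a_{i+k/2+t}$ for $t \in \{0, 1, \ldots, k/2-1\}$ (the non-neighbor together with the in-neighbors of $a_i$ in $V(C)$). Applying Corollary~\ref{corcycle}\ref{inoutspecial} to the cross-pair $a_{i+k/2+t} \miss a_{i+t}$ gives
\[
N^-_G(a_{i+k/2+t}) \cap V(C) = \{a_{i+t+1}, \ldots, a_{i+k/2+t-1}\},
\]
and intersecting with $N^+_G(a_i) \cap V(C) = \{a_{i+1}, \ldots, a_{i+k/2-1}\}$ gives $\{a_{i+t+1}, \ldots, a_{i+k/2-1}\}$, which is nonempty precisely when $t \leq k/2 - 2$. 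Therefore $N^{++}_G(a_i) \cap V(C) = \{a_{i+k/2}, a_{i+k/2+1}, \ldots, a_{i-2}\}$, which has exactly $k/2 - 1$ elements, matching $|N^+_G(u) \cap \Gamma(Q)|$.

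The one subtle point, and the main obstacle, is the boundary case $w = a_{i-1}$ (corresponding to $t = k/2-1$): the edge $a_{i-1} \twoheadrightarrow a_i$ is a special arc, so intuitively one might expect $a_{i-1}$ to show up somewhere; the module argument in the second paragraph is exactly what is needed to rule out an external witness and thereby exclude $a_{i-1}$ from $N^{++}_G(a_i)$. Once this is handled, the equality of the two cardinalities is immediate from the counts above.
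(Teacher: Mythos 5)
Your proof is correct and follows essentially the same route as the paper's: reduce to the special cycle via Lemma~\ref{special_cycle}, then compute both $N^+_G(a_i)\cap V(C)$ and $N^{++}_G(a_i)\cap V(C)$ explicitly using Lemma~\ref{cycle_cross_mod}\ref{cross} and Corollary~\ref{corcycle}\ref{inoutspecial}, arriving at the same count $k/2-1$ on both sides. The only divergence is the boundary case $a_{i-1}$, which the paper dispatches in one line by the definition of a special arc ($a_{i-1}\twoheadrightarrow a_i$ means precisely that $a_{i-1}\notin N^{++}_G(a_i)$), so your module-based exclusion of external witnesses, though valid, is not needed.
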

\begin{proof}
	As $Q\in \mathcal{C}$, by Lemma~\ref{special_cycle} there exists a special cycle $C$ in $D$ such that $V(C)=\Gamma(Q)$. Let this cycle be $C= a_0\twoheadrightarrow a_1\twoheadrightarrow a_2\twoheadrightarrow \cdots\twoheadrightarrow a_{2l-1}\twoheadrightarrow a_0$ (note that by Lemma~\ref{cycle_cross_mod}\ref{kiseven}, $C$ has even length; also note that $l\geq 2$). Consider a vertex $a_i\in V(C)$. By Lemma~\ref{cycle_cross_mod}\ref{cross}, we have $a_i\miss a_{i+l}$ and by Corollary~\ref{propcycle}\ref{inoutspecial}, $N^+_D(a_i)\cap V(C)=\{a_{i+1},a_{i+2},\ldots,a_{i+l-1}\}$ (subscripts modulo $2l$). Recalling that $V(C)=\Gamma(Q)$, we now get $|N^+_D(a_i)\cap\Gamma(Q)|= l-1$. Now, consider any $a_p\in\{a_{i+l},a_{i+l+1},\ldots,a_{i+2l-2}=a_{i-2}\}$. Clearly, $a_p\notin N_D^+(a_i)$. Note that for any choice of $a_p$, the vertex $a_{p+l+1}\in N^+_D(a_i)\cap V(C)$. By Lemma~\ref{cycle_cross_mod}\ref{cross}, we have that $a_p\miss a_{p+l}$. Now applying Corollary~\ref{propcycle}\ref{inoutspecial} to $a_p$, we have that $a_{p+l+1}\in N^-_D(a_p)\cap V(C)$. This gives us that $a_p\in N^{++}_D(a_i)\cap\Gamma(Q)$ for each choice of $a_p\in\{a_{i+l},a_{i+l+1},\ldots,a_{i+2l-2}=a_{i-2}\}$, implying that
	$|N^{++}_D(a_i)\cap\Gamma(Q)|\geq l-1$. Noting that the vertex $a_{i-1}\notin N^{++}_D(a_i)$ (as $a_{i-1}\twoheadrightarrow a_i$), we can now conclude $|N^{++}_D(a_i)\cap\Gamma(Q)|=l-1=|N^+_D(a_i)\cap\Gamma(Q)|$.
\end{proof}
\subsubsection*{Unforced and singly-forced missing edges}
We now label some missing edges of $D$ as unforced and some others as singly-forced.
\begin{definition}\label{singly_forced}
	A missing edge $e = a\miss b$ is said to be \emph{singly-forced} if exactly one of the following conditions hold.
	\vspace{-0.04in}
	\begin{enumerate}
		\itemsep -0.04in
		\renewcommand{\theenumi}{\textup{($\arabic{enumi}$)}}
		\renewcommand{\labelenumi}{\theenumi}
		\item There exists $v\in V(D)$ such that $b\twoheadrightarrow v\rightarrow a$ in $D$.
		\item There exists $u\in V(D)$ such that $a\twoheadrightarrow u\rightarrow b$ in $D$.
	\end{enumerate}
	If $(1)$ holds then we say that $e$ is \emph{forced in the direction $b$ to $a$}, and if $(2)$ holds then we say that $e$ is \emph{forced in the direction $a$ to $b$}. If neither $(1)$ nor $(2)$ hold, then $e$ is \emph{unforced}. Note that it is possible for a missing edge to be forced in both directions.
\end{definition}
\begin{lemma}\label{duallyforced}
	Let $e=a\miss b$. If there exist $u,v\in V(D)$ such that $b\twoheadrightarrow v\rightarrow a$ and $a\twoheadrightarrow u\rightarrow b$, then $(u,v)R(b,a)$. Consequently, if any missing edge is forced in both directions in $D$, then it has an in-neighbor in $\Delta_D$. 
\end{lemma}
\begin{proof}
	Note that $u\neq v$. Now, if $v\rightarrow u$ or $u\rightarrow v$, then $u\rightarrow b\twoheadrightarrow v\rightarrow u$ or $v\rightarrow a\twoheadrightarrow u\rightarrow v$ would form a directed triangle containing a special arc, which is a contradiction. Therefore, $u\miss v$. Then, the fact that $u\rightarrow b\twoheadrightarrow v\rightarrow a\twoheadrightarrow u$ implies that $(u,v)R(b,a)$ and hence $\{u,v\}$ is an in-neighbor of $e$ in $\Delta_D$. \end{proof}

\begin{lemma}\label{strategy}
	Every singly-forced missing edge is the starting vertex of some path in $\mathcal{P}$.
\end{lemma}
\begin{proof}
	Let $a\miss b$ be a singly-forced missing edge. It is enough to prove that $\{a,b\}$ doesn't have any in-neighbor in $\Delta_D$. Assume to the contrary that $\{a,b\}$ has an in-neighbor $\{c,d\}$ in $\Delta_D$. Then by definition of $\Delta_D$ we can assume without loss of generality that $(c,d)R(a,b)$, i.e., there exists a cycle $c\rightarrow a\twoheadrightarrow d\rightarrow b\twoheadrightarrow c$ in $D$. Note that now we have both $b\twoheadrightarrow c \rightarrow a$ and $a\twoheadrightarrow d \rightarrow b$, implying that both conditions (1) and (2) of Definition~\ref{singly_forced} hold. This contradicts  the fact that $a\miss b$ is a singly-forced missing edge.
\end{proof}
\subsubsection*{Completions and special in-neighbors}
A tournament $T$ is said to be a \emph{completion} of $D$ if $V(D)=V(T)$ and $E(D)\subseteq E(T)$. It is easy to see that a completion of $D$ can be obtained by ``orienting'' every missing edge of $D$, i.e., by adding an oriented edge in place of each missing edge of $D$.
When a completion $T$ of $D$ is specified, a missing edge $a\miss b$ of $D$ that has been oriented from $a$ to $b$ in $T$ is denoted by $a\dashrightarrow b$.

\begin{definition}\label{property}
	Given a completion $T$ of $D$ and a vertex $v\in V(T)$, we say that an in-neighbor $b$ of $v$ is a \emph{special in-neighbor} if $b\twoheadrightarrow v$ and $b\in N^{++}_T(v)$. Further, we say that a special in-neighbor $b$ of $v$ is of \emph{Type-I} if there exists $a\in V(T)$ such that $v\rightarrow a\dashrightarrow b\twoheadrightarrow v$. Similarly, we say that a special in-neighbor $b$ of $v$ is of \emph{Type-II} if there exists $a\in V(T)$ such that $v\dashrightarrow a\rightarrow b\twoheadrightarrow v$. Note that any special in-neighbor of $v$ is either Type-I or Type-II or both.
\end{definition}

\begin{lemma}\label{good_vetices}
	Let $T$ be a completion of $D$. For any vertex $v\in V(T)$, every vertex in $N^{++}_T(v)\setminus N^{++}_D(v)$ is a special in-neighbor of $v$.
\end{lemma}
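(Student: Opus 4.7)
The plan is to unpack both hypotheses carefully and then pin down that the arc $(x,v)$ actually lives inside $G$ itself, using the fact that each vertex of $G$ has at most one non-neighbor. First, from $x \in N^{++}_T(v)$ I would extract a witness $w$ with $v \to w \to x$ in $T$ and $x \notin N^+_T(v) \cup \{v\}$. Since $T$ is a tournament extending $G$, the condition $(v,x) \notin E(T)$ forces $(x,v) \in E(T)$, so $x$ is automatically an in-neighbor of $v$ in $T$. It also forces $(v,x) \notin E(G)$, and combined with the hypothesis $x \notin N^{++}_G(v)$, this tells me that no directed $2$-path from $v$ to $x$ exists in $G$ at all.

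Next, since the $2$-path $v \to w \to x$ does live in $T$, at least one of its two arcs must be an oriented missing edge (a dashed arc) rather than an honest arc of $G$. I would observe that both cannot be oriented missing edges simultaneously, because then $w$ would have two non-neighbors $v$ and $x$ in $G$, contradicting the fact that every vertex of $G$ has at most one non-neighbor. This leaves two symmetric subcases. If $v \dashrightarrow w \to x$, then $w$ is the unique non-neighbor of $v$ in $G$; since $x \neq w$, the vertex $x$ must be a neighbor of $v$ in $G$, and combined with $(x,v)\in E(T)$ this gives $(x,v)\in E(G)$. If instead $v \to w \dashrightarrow x$, then $w$ is the unique non-neighbor of $x$ in $G$; since $v \neq w$, $v$ is a neighbor of $x$ in $G$, and again $(x,v)\in E(G)$.

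In either subcase I have $(x,v)\in E(G)$, and together with the hypothesis $x\notin N^{++}_G(v)$ this means $x\twoheadrightarrow v$ in $G$ by the definition of special arc. Combined with $x\in N^{++}_T(v)$, this is exactly the definition of $x$ being a special in-neighbor of $v$. The main obstacle is to resist conflating arcs of $T$ with arcs of $G$: the hypothesis $x\in N^{++}_T(v)$ is a statement inside $T$, whereas being a special in-neighbor demands $x\twoheadrightarrow v$ inside $G$, and the bookkeeping with unique non-neighbors is the only tool needed to bridge this gap. As a side remark, the two subcases above match the Type-II and Type-I configurations of Definition~\ref{property} respectively, so the argument in fact shows that the resulting special in-neighbor is always of a definite type, which may be useful later on.
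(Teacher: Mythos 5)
Your proof is correct and follows essentially the same route as the paper: observe that at least one arc of the $T$-path $v\to w\to x$ must be an oriented missing edge, use the matching property of the missing edges to conclude that the arc from $x$ to $v$ is a genuine arc of $G$, and then invoke $x\notin N^{++}_G(v)$ to get $x\twoheadrightarrow v$. The paper compresses your two subcases into a single sentence, but the underlying argument is identical, and your closing remark correctly identifies the correspondence with the Type-I/Type-II classification.
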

\begin{proof}
	Consider $x\in N^{++}_T(v)\setminus N^{++}_D(v)$. As $x\in N^{++}_T(v)$, $x\in N^-_T(v)$, implying that we have either $x\rightarrow v$ or $x\dashrightarrow v$. Furthermore, there exists $a\in V(T)$ such that $a\in N^+_T(v)\cap N^-_T(x)$. Since $x\notin N^{++}_D(v)$, we know that either $v\dashrightarrow a$ or $a\dashrightarrow x$. As the missing edges of $D$ form a matching, this implies that $x\rightarrow v$. Again using the fact that $x\notin N^{++}_D(v)$, we conclude that $x\twoheadrightarrow v$. This shows that $x$ is a special in-neighbor of $v$.
\end{proof}

\begin{lemma}\label{structure1}
	Let $T$ be a completion of $D$ and $L$ a median order of $T$ such that the feed vertex $d$ of $L$ does not have a special in-neighbor of Type-I. Then $d$ is a vertex with large second neighborhood in $D$.
\end{lemma}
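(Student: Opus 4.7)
The plan is to reduce, via a judicious choice of completion, to a situation where $d$ has no special in-neighbor at all, so that Lemma~\ref{good_vetices} forces the second out-neighborhoods in the completion and in $G$ to coincide, whereupon Theorem~\ref{thmht} directly yields the conclusion.

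First, I would construct an auxiliary completion $T'$ as follows. If $d$ has no outgoing missing edge in $T$, let $T' = T$. Otherwise, let $d \dashrightarrow a_0$ be the unique outgoing missing edge at $d$ in $T$ and obtain $T'$ from $T$ by reversing this single arc to $a_0 \dashrightarrow d$. Since $d = x_n$ is the last vertex of $L$, the arc $(d, a_0)$ is backward in $L$, so by Proposition~\ref{reversing} the ordering $L$ remains a median order of $T'$, still with feed vertex $d$. In either case, $T'$ is a completion of $G$ in which $d$ has no outgoing missing edge.

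Next, I would argue that $d$ has no special in-neighbor at all in $T'$. The absence of an outgoing missing edge at $d$ in $T'$ rules out every Type-II special in-neighbor, so by Definition~\ref{property} it is enough to rule out Type-I ones. Suppose for contradiction that $b$ is a Type-I special in-neighbor of $d$ in $T'$, witnessed by $d \rightarrow a \dashrightarrow b \twoheadrightarrow d$ in $T'$. Since $a \in N^+_G(d)$ and $b \in N^-_G(d)$, and since $a_0$ is the unique non-neighbor of $d$ in $G$ (missing edges form a matching), neither $a$ nor $b$ equals $a_0$. Because $T$ and $T'$ differ only on the arc between $d$ and $a_0$, both $d \rightarrow a$ and $a \dashrightarrow b$ also hold in $T$, and together with $b \twoheadrightarrow d$ (which depends only on $G$) this exhibits $b$ as a Type-I special in-neighbor of $d$ in $T$, contradicting the hypothesis.

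With no special in-neighbors of $d$ in $T'$, Lemma~\ref{good_vetices} yields $N^{++}_{T'}(d) \subseteq N^{++}_G(d)$; the reverse inclusion is immediate from $N^+_{T'}(d) = N^+_G(d)$, since any $G$-witness for $x \in N^{++}_G(d)$ remains a $T'$-witness. Combining these with Theorem~\ref{thmht} applied to $T'$ and $L$ gives $|N^+_G(d)| = |N^+_{T'}(d)| \leq |N^{++}_{T'}(d)| = |N^{++}_G(d)|$, as required. The main obstacle is the transport step in the third paragraph: one must check that the two witness vertices $a$ and $b$ of a putative Type-I special in-neighbor in $T'$ cannot coincide with the reoriented endpoint $a_0$, and this is precisely where the matching structure of the missing edges of $G$ is essential.
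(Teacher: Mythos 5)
Your proposal is correct and follows essentially the same route as the paper: reverse the unique outgoing missing edge at $d$ (if any) to obtain a completion $T'$ in which $d$ has no special in-neighbors, invoke Proposition~\ref{reversing} to keep $L$ a median order, and conclude via Lemma~\ref{good_vetices} and Theorem~\ref{thmht}. The only difference is that you spell out the transport of a putative Type-I witness from $T'$ back to $T$ (using that $a,b\neq a_0$ because the missing edges form a matching), a step the paper dismisses as ``easily seen.''
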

\begin{proof}
	We claim that there exists a completion $T'$ of $D$ such that $L$ is a median order of $T'$ and $d$ has no special in-neighbors in $T'$. If there does not exist a vertex $a\in V(T)$ such that $d\dashrightarrow a$, then clearly $T'=T$ is a completion of $D$ satisfying our requirements. So we shall assume that there exists $a\in V(T)$ with $d\dashrightarrow a$. Now, consider the completion $T'$ of $D$ obtained from $T$ by reorienting the missing edge $d\dashrightarrow a$ as $a\dashrightarrow d$. 
	By Proposition~\ref{reversing}, $L$ is a median order of $T'$ as well. Further, it can be easily seen that $d$ does not have any special in-neighbors of Type-I in $T'$ either. As the only missing edge incident on $d$ is oriented towards $d$ in $T'$, $d$ does not have any special in-neighbors of Type-II in $T'$. This proves our claim.
	
	By Lemma~\ref{good_vetices} applied on $T'$ and $L$, we have $N^{++}_{T'}(d) \subseteq N^{++}_D(d)$. By Theorem~\ref{thmht}, $|N^+_D(d)| = |N^+_{T'}(d)| \leq |N^{++}_{T'}(d)|$ (the first equality is because $a\dashrightarrow d$ in $T'$). Combining this with the previous observation, we have $|N^+_D(d)|\leq |N^{++}_D(d)|$.
\end{proof}
\subsubsection*{Safe completions}

\begin{definition}
	A completion $T$ of $D$ is said to be \emph{safe}
	if it satisfies the following two conditions:
	\vspace{-0.04in}
	\begin{enumerate}
	\itemsep -0.04in
	\renewcommand{\theenumi}{\textup{($\arabic{enumi}$)}}
	\renewcommand{\labelenumi}{\theenumi}
		\item If $a\miss b$ is a singly-forced missing edge that is forced in the direction from $a$ to $b$, then $a\dashrightarrow b$ in $T$, and
		\item if $a\miss b$ is a missing edge such that $\{a,b\}$ does not lie in any cycle in $\mathcal{C}$, $(c,d)R(a,b)$ and $c\dashrightarrow d$ in $T$, then $a\dashrightarrow b$ in $T$.
	\end{enumerate}
\end{definition}

Recall that $(c,d)R(a,b)$ if and only if $(d,c)R(b,a)$. Therefore, if $\{a,b\},\{c,d\}$ are two missing edges that do not lie on any cycle in $\mathcal{C}$ and $(c,d)R(a,b)$, then in any safe completion, $c\dashrightarrow d$ if and only if $a\dashrightarrow b$. The following is an easy consequence of Lemma~\ref{strategy}.


\begin{remark}\label{safeexists}
	Every oriented graph whose missing edges form a matching has a safe completion.
\end{remark}

\begin{lemma}\label{uniqueab}
	Let $T$ be a safe completion of $D$. Let $v\in V(T)$ and $b$ be a Type-I special in-neighbor of $v$. Then there exist $a,u\in V(T)$ such that $v\rightarrow a\dashrightarrow b\twoheadrightarrow v$, $a\twoheadrightarrow u\rightarrow b$ and $u\miss v$. Moreover, $b$ is the only Type-I special in-neighbor of $v$.
\end{lemma}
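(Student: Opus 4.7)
The plan is to first establish the existence of $u$ and then deduce the uniqueness of $b$. The key observation opening the existence argument is that the given configuration $b\twoheadrightarrow v\to a$ already witnesses, via condition~(1) of Definition~\ref{singly_forced}, that the missing edge $\{a,b\}$ is forced in the direction from $b$ to $a$. So it would suffice to show that $\{a,b\}$ is also forced in the direction from $a$ to $b$: the witness of that second forcing is the required $u$, and Proposition~\ref{duallyforced} applied to our $v$ and this $u$ would yield $u\miss v$ for free.

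To establish the dual-direction forcing, I would split on whether $\{a,b\}$ belongs to a cycle of $\mathcal{C}$. The easy case is when it does not: if $\{a,b\}$ were not forced in the direction $a$ to $b$, then combined with the already-established forcing in the direction $b$ to $a$, it would be singly-forced in direction $b$ to $a$. But condition~(1) of the safe-completion definition would then require the orientation $b\dashrightarrow a$ in $T$, contradicting the given $a\dashrightarrow b$.

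The main obstacle is the case when $\{a,b\}$ lies on a cycle $Q\in\mathcal{C}$, since here the safe-completion strategy orients $\{a,b\}$ arbitrarily and provides no direct leverage. I would handle this case using the rigid structure of special cycles. Let $C=a_0\twoheadrightarrow a_1\twoheadrightarrow\cdots\twoheadrightarrow a_{k-1}\twoheadrightarrow a_0$ be the special cycle supplied by Lemma~\ref{special_cycle} with $V(C)=\Gamma(Q)$; after relabeling via Lemma~\ref{cycle_cross_mod}\ref{cross} one may assume $b=a_0$ and $a=a_{k/2}$. The module property of $\Gamma(Q)$ from Corollary~\ref{gamma}\ref{gammamodule} places $v$ inside $V(C)$ (otherwise $v\to a$ would propagate to $v\to b$, yielding a digon with $b\to v$). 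The in/out neighborhood description of $a_m$ on $C$ given by Corollary~\ref{corcycle}\ref{inoutspecial}, combined with the specialness of $b\twoheadrightarrow v$, then pins $v$ down to $a_1$ by a short index check (the module property rules out intermediate vertices outside $V(C)$). Finally $u=a_{k/2+1}$ works: $u\miss v$ is immediate from Lemma~\ref{cycle_cross_mod}\ref{cross}, $a\twoheadrightarrow u$ is an arc of $C$, and $u\to b$ is an out-arc of $a_{k/2+1}$ in $V(C)$ by Corollary~\ref{corcycle}\ref{inoutspecial}.

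Uniqueness will follow quickly from the existence part. If $b'\neq b$ were also a Type-I special in-neighbor of $v$ with witness $a'$, applying the existence part to $b'$ would produce $u'$ with $u'\miss v$; the matching property (each vertex has at most one non-neighbor) then forces $u=u'$. Since $a$'s only non-neighbor in $G$ is $b\neq b'$, the pair $a,b'$ is joined by an arc in $G$, and both orientations of that arc lead to a contradiction: $a\to b'$ would place $b'$ in $N^{++}_G(v)$ via $v\to a\to b'$, contradicting $b'\twoheadrightarrow v$; and $b'\to a$ would place $a$ in $N^{++}_G(u)$ via $u\to b'\to a$ (using $a\to u$ from $a\twoheadrightarrow u$), contradicting $a\twoheadrightarrow u$.
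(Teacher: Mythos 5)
Your proof is correct, and its core coincides with the paper's: the configuration $b\twoheadrightarrow v\rightarrow a$ shows that $a\miss b$ is forced in the direction $b$ to $a$, safeness of $T$ then forces it to be forced in the direction $a$ to $b$ as well (yielding the witness $u$), and Proposition~\ref{duallyforced} gives $(u,v)R(b,a)$ and hence $u\miss v$. The one structural difference on the existence side is that your case split on whether $\{a,b\}$ lies on a cycle of $\mathcal{C}$ is unnecessary: condition (1) in the definition of a safe completion applies to every singly-forced missing edge, with no proviso about $\mathcal{C}$, so your ``easy case'' argument (being forced only from $b$ to $a$ would force the orientation $b\dashrightarrow a$, contradicting the given $a\dashrightarrow b$) already covers all cases. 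Indeed, Proposition~\ref{strategy} shows that a missing edge lying on a cycle of $\mathcal{C}$ can never be singly-forced, so your special-cycle analysis (pinning $v$ to $a_1$ and taking $u=a_{k/2+1}$), while correct, is work the paper's proof avoids entirely. For uniqueness, the paper observes that a second Type-I special in-neighbor $b'$ would give $(u,v)R(b,a)$ and $(u,v)R(b',a')$ with $\{a,b\}\neq\{a',b'\}$, so that $\{u,v\}$ would have two out-neighbors in $\Delta(G)$, contradicting Lemma~\ref{indegree}; your direct argument --- orienting the $G$-arc between $a$ and $b'$ either way produces a directed triangle containing a special arc --- is an equally valid, self-contained rederivation of essentially the same contradiction.
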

\begin{proof}
	As $b$ is a Type-I special in-neighbor of $v$, there exists $a\in V(T)$ such that $v\rightarrow a\dashrightarrow b\twoheadrightarrow v$ in $T$. Then by Definition~\ref{singly_forced}, $a\miss b$ is forced in the direction $b$ to $a$. But as we have $a\dashrightarrow b$ in $T$, and every singly-forced missing edge of $D$ was oriented in $T$ in the direction in which it was forced (as $T$ is a safe completion), it must be the case that $a\miss b$ is also forced in the direction $a$ to $b$. That is, there exists $u\in V(T)$ such that $a\twoheadrightarrow u\rightarrow b$ (refer Definition~\ref{singly_forced}). Using Lemma~\ref{duallyforced}, we can now conclude that $(u,v)R(b,a)$, which further implies that $u\miss v$. If there exists a Type-I special in-neighbor $b'$ of $d$ such that $b'\neq b$, then the same arguments can be used to infer that there exist $a',u'\in V(T)$ such that $(u',v)R(b',a')$ (which means that $u'\miss v$). Since $v$ has at most one non-neighbor, we have that $u'=u$, which gives $(u,v)R(b',a')$. As it can be easily seen that $\{a',b'\}\neq\{a,b\}$, the missing edge $\{u,v\}$ has more than one out-neighbor in $\Delta_D$, which is a contradiction to Lemma~\ref{indegree}. Hence $b$ is the only Type-I special in-neighbor of $v$.
\end{proof}
\begin{lemma}\label{outmissing}
	Let $T$ be a safe completion of $D$ and let $L$ be a median order of $T$ with feed vertex $d$. If $d$ has a Type-I special in-neighbor $b$ and there exists $w\in V(T)$ such that $d\dashrightarrow w$, then:
	\begin{myenumerate}
		\item\label{claim} $N^{++}_T(d)\setminus\{b\}\subseteq N^{++}_D(d)$, and
		\item\label{disgood} $d$ is a vertex with large second neighborhood in $D$.
	\end{myenumerate}
\end{lemma}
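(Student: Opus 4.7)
The plan is to apply Lemma~\ref{uniqueab} with $v=d$ to extract vertices $a$ and $u$ satisfying $d\rightarrow a\dashrightarrow b\twoheadrightarrow d$, $a\twoheadrightarrow u\rightarrow b$, and $u\miss d$. Because the missing edges of $G$ form a matching, $u$ is the unique non-neighbor of $d$, so the given arc $d\dashrightarrow w$ must have $w=u$. Thus the only missing edge of $G$ incident to $d$ is $d\miss u$, oriented in $T$ as $d\dashrightarrow u$.

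For~\ref{claim}, pick any $x\in N^{++}_T(d)\setminus N^{++}_G(d)$; the goal is to show $x=b$. By Lemma~\ref{good_vetices}, $x$ is a special in-neighbor of $d$, and hence (by Definition~\ref{property}) is of Type-I or of Type-II. The Type-I case immediately gives $x=b$ via Lemma~\ref{uniqueab}. Assume then that $x$ is of Type-II only, so in particular $x\neq b$. There exists $a'\in V(T)$ with $d\dashrightarrow a'\rightarrow x\twoheadrightarrow d$, and since $u$ is the only missing-edge partner of $d$, $a'=u$, yielding $u\rightarrow x\twoheadrightarrow d$ in $G$. Since $b$ is the unique non-neighbor of $a$ and $x\neq b$, the vertices $a$ and $x$ are adjacent in $G$. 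If $a\rightarrow x$, the path $d\rightarrow a\rightarrow x$ in $G$ witnesses $x\in N^{++}_G(d)$ (using $x\neq d$ and $x\notin N^+_G(d)$), contradicting the choice of $x$. Otherwise $x\rightarrow a$, and combining with $a\twoheadrightarrow u$ and $u\rightarrow x$ produces the directed triangle $x\rightarrow a\twoheadrightarrow u\rightarrow x$ on three distinct vertices, which contains a special arc---impossible in $G$. So the Type-II-only case cannot arise, and~\ref{claim} holds.

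Finally,~\ref{disgood} is a short count. Theorem~\ref{thmht} gives $|N^+_T(d)|\leq|N^{++}_T(d)|$; since $b\in N^{++}_T(d)$ (as a special in-neighbor),~\ref{claim} yields $|N^{++}_G(d)|\geq|N^{++}_T(d)|-1$; and $|N^+_G(d)|=|N^+_T(d)|-1$ because $d\dashrightarrow u$ contributes to $N^+_T(d)$ while $u\notin N^+_G(d)$. Chaining these gives $|N^+_G(d)|\leq|N^{++}_G(d)|$. The main obstacle is the Type-II-only sub-case of~\ref{claim}: it is the only place where the hypothesis $d\dashrightarrow w$ is genuinely used (to force $a'=u$), and where the structural arcs from Lemma~\ref{uniqueab} have to be combined with the impossibility of a directed triangle containing a special arc in $G$.
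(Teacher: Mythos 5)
Your proposal is correct and follows essentially the same route as the paper: invoke Lemma~\ref{uniqueab} to get $a,u$ with $u=w$, use Lemma~\ref{good_vetices} plus the uniqueness of the Type-I special in-neighbor to reduce to the Type-II case, and rule that out by examining the arc between $a$ and $x$ (your $a\rightarrow x$ sub-case concludes $x\in N^{++}_G(d)$ directly, while the paper notes the triangle $d\rightarrow a\rightarrow x\twoheadrightarrow d$ contains a special arc --- the same contradiction in different clothing). The final count for~\ref{disgood} is identical to the paper's.
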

\begin{proof}
	By Lemma~\ref{uniqueab}, there exist $a,u\in V(T)$ such that $d\rightarrow a\dashrightarrow b\twoheadrightarrow d$, $a\twoheadrightarrow u\rightarrow b$ and $u\miss d$. As the only non-neighbor of $d$ is $w$, we have $u=w$.
	
	\ref{claim} Consider a vertex $x\in N^{++}_T(d)\setminus\{b\}$. Suppose for the sake of contradiction that $x\notin N^{++}_D(d)$. Then by Lemma~\ref{good_vetices}, we know that $x$ is a special in-neighbor of $d$. Since $x\neq b$, we know by Lemma~\ref{uniqueab} that $x$ cannot be a Type-I special in-neighbor of $d$. Therefore, $x$ is a Type-II special in-neighbor of $d$, i.e., $d\dashrightarrow w\rightarrow x\twoheadrightarrow d$ (as $w$ is the only non-neighbor of $d$). It is easily verified that $a\neq x$. Further, $\{a,x\}$ cannot be a missing edge since $a\miss b$ and $x\neq b$. If $x\rightarrow a$ or $a\rightarrow x$, then either $a\twoheadrightarrow u=w\rightarrow x\rightarrow a$ or $d\rightarrow a\rightarrow x\twoheadrightarrow d$ would be a directed triangle containing a special arc, which is a contradiction. This proves~\ref{claim}.
	
	\ref{disgood} We have $|N_D^+(d)|= |N_T^+(d)| - 1\leq |N^{++}_T(d)|-1 = |N^{++}_T(d)\setminus \{b\}|\leq |N^{++}_D(d)|$ (the first equality is because $d\dashrightarrow w$, the second inequality by Theorem~\ref{thmht}, the third equality is because $b\in N^{++}_T(d)$, and the fourth inequality by~\ref{claim}).
\end{proof}

Consider a module $I$ in $D$ such that $|I|\geq 2$ and a vertex $v\in I$.
Clearly, any non-neighbor of $v$ outside $I$ has to be a non-neighbor of every vertex in $I$. As $|I|\geq 2$ and the missing edges of $D$ form a matching, this can only mean that $v$ has no non-neighbors outside $I$. We thus have the following observation.

\begin{remark}\label{missinginmod}
If $I$ is a module in $D$ such that $|I|\geq 2$ and $v\in I$, then $v$ has no non-neighbors outside $I$.
\end{remark}

\begin{lemma}\label{feedcycle}
Let $T$ be a safe completion of $D$ and let $I$ be a module in $D$ with $|I|\geq 2$. Then for any $v\in I$, $N^{++}_T(v)\setminus I \subseteq N^{++}_D(v)\setminus I$.
\end{lemma}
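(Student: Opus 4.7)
My plan is to prove the lemma by contradiction. Suppose $x \in N^{++}_T(v) \setminus I$ with $x \notin N^{++}_G(v)$. Applying Lemma~\ref{good_vetices}, $x$ is a special in-neighbor of $v$, so $x \twoheadrightarrow v$ in $G$; in particular $(x,v) \in E(G)$ is a special arc. Since $x \in N^{++}_T(v)$, pick $a \in N^+_T(v) \cap N^-_T(x)$. The argument then splits on whether $a \in I$, and in the case $a \notin I$ on the status of the pair $\{a,x\}$ in $G$.

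If $a \in I$, then a short argument using Remark~\ref{missinginmod} and the module property of $I$ dispatches this case. Indeed, $\{a, x\}$ cannot be a missing edge: otherwise Remark~\ref{missinginmod} applied to $a \in I$ (using $|I| \geq 2$) would place $x$ in $I$, contradicting $x \notin I$. Hence $a \rightarrow x$ in $G$, and the module property applied to $a, v \in I$ with $x \notin I$ then yields $v \rightarrow x$, contradicting $x \rightarrow v$. If instead $a \notin I$, then Remark~\ref{missinginmod} forbids $\{v,a\}$ from being missing, so $v \rightarrow a$ in $G$; and if also $a \rightarrow x$ in $G$, the path $v \rightarrow a \rightarrow x$ gives $x \in N^{++}_G(v)$, again a contradiction.

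The main obstacle is the remaining sub-case: $a \notin I$ and $\{a, x\}$ is missing with $a \dashrightarrow x$ in $T$. The key observation is that the chain $x \twoheadrightarrow v \rightarrow a$ is precisely condition $(1)$ of Definition~\ref{singly_forced} for the missing edge $\{a,x\}$, with $v$ as the witness, so $\{a, x\}$ is forced in the direction from $x$ to $a$. If $\{a, x\}$ is not also forced in the direction from $a$ to $x$, then it is singly-forced from $x$ to $a$, and safety of $T$ would orient it as $x \dashrightarrow a$, contradicting $a \dashrightarrow x$. Otherwise condition $(2)$ of Definition~\ref{singly_forced} produces $u$ with $a \twoheadrightarrow u \rightarrow x$. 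I would then verify that $u \notin I$ (else $u \rightarrow x$ combined with the module property would force $v \rightarrow x$, contradicting $x \rightarrow v$) and that $v \rightarrow u$ in $G$ (else $u \rightarrow v$ would create the directed triangle $v \rightarrow a \rightarrow u \rightarrow v$ containing the special arc $(a, u)$, which the paper forbids). This yields $v \rightarrow u \rightarrow x$ and hence $x \in N^{++}_G(v)$, the final contradiction.
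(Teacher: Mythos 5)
Your proof is correct and follows essentially the same route as the paper: both arguments reduce the claim to showing that $v$ has no special in-neighbor outside $I$, using Lemma~\ref{good_vetices}, Remark~\ref{missinginmod}, the module property of $I$, and the fact that safety of $T$ forces the relevant missing edge to be forced in both directions. The only difference is organizational: the paper splits on Type-I versus Type-II special in-neighbors and delegates the hard case to Lemma~\ref{uniqueab}, whereas you split on the location of the witness $a$ and re-derive the content of Lemma~\ref{uniqueab} inline, closing the hard case with the path $v\rightarrow u\rightarrow x$ rather than with a module violation.
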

\begin{proof}
First, suppose that there exists a Type-I special in-neighbor $x$ of $v$ outside $I$. By Lemma~\ref{uniqueab}, there exists a vertex $u\in V(T)$ such that $u\miss v$ and $u\rightarrow x$. By Remark~\ref{missinginmod}, $u\in I$. Now we have $x\rightarrow v$ and $u\rightarrow x$, which contradicts the fact that $u$ and $v$ belong to the module $I$ in $D$ and $x$ is outside that module. Therefore, $v$ has no Type-I special in-neighbors outside $I$. 
Next, suppose that there exists a Type-II special in-neighbor $x$ of $v$ outside $I$. Then, there exists a vertex $y$ such that $v\dashrightarrow y\rightarrow x\twoheadrightarrow v$. By Remark~\ref{missinginmod}, we know that $y\in I$. Then we have $x\rightarrow v$ and $y\rightarrow x$, which contradicts the fact that $v$ and $y$ belong to the module $I$ in $D$ (recall that $x$ is outside $I$). Therefore, we can conclude that $v$ has no special in-neighbors outside $I$. This implies, by Lemma~\ref{good_vetices}, that $N^{++}_T(v)\setminus I\subseteq N^{++}_D(v)\setminus I$.
\end{proof}

\begin{corollary}\label{mainresult}
Let $T$ be a safe completion of $D$ and let $d$ be the feed vertex of some median order of $T$. Let $I$ be a module in $D$ containing $d$ where $|I|\geq 2$. Then for any $v\in I$, $|N^+_D(v)\setminus I|\leq |N^{++}_D(v)\setminus I|$.
\end{corollary}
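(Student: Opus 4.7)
The plan is to chain together three earlier results: Remark~\ref{missinginmod} to promote the module structure from $G$ to $T$, Proposition~\ref{modht} to get the desired inequality inside the tournament, and Lemma~\ref{feedcycle} to pull the second-neighborhood conclusion back down from $T$ to $G$.

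First I would verify that $I$ is not just a module in $G$ but also a module in the safe completion $T$. The potential obstruction is that orienting a missing edge between some $v \in I$ and some $x \notin I$ could differ between two vertices of $I$, destroying the module property. But Remark~\ref{missinginmod} tells us that since $|I|\geq 2$, no vertex of $I$ has any non-neighbor outside $I$; hence there are no such cross missing edges to orient, and for every $v \in I$ we have $N^+_T(v)\setminus I = N^+_G(v)\setminus I$ and $N^-_T(v)\setminus I = N^-_G(v)\setminus I$. In particular all vertices of $I$ agree on these sets, so $I$ is a module in $T$ as well.

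Now I would apply Proposition~\ref{modht} to the tournament $T$, the median order having $d$ as its feed vertex, and the module $I \ni d$: for every $v \in I$,
\[
|N^+_T(v)\setminus I|\leq |N^{++}_T(v)\setminus I|.
\]
The left-hand side equals $|N^+_G(v)\setminus I|$ by the previous paragraph. For the right-hand side, Lemma~\ref{feedcycle} gives $N^{++}_T(v)\setminus I\subseteq N^{++}_G(v)\setminus I$, so $|N^{++}_T(v)\setminus I|\leq |N^{++}_G(v)\setminus I|$. Chaining these yields
\[
|N^+_G(v)\setminus I|\leq |N^{++}_G(v)\setminus I|,
\]
which is the claim.

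I do not expect any real obstacle here: the corollary is essentially the statement ``Proposition~\ref{modht} survives the passage from the safe completion $T$ back to $G$ as long as we only look outside $I$,'' and the two ingredients that make this work (modules of size at least two have no external missing edges, and a safe completion controls the potentially lost second neighbors outside a module) are exactly Remark~\ref{missinginmod} and Lemma~\ref{feedcycle}, both already in hand.
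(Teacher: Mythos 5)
Your proposal is correct and follows essentially the same route as the paper: establish that $I$ remains a module in the safe completion $T$ (the paper notes this follows from the missing edges forming a matching, which for $|I|\geq 2$ is exactly the content of Remark~\ref{missinginmod}), then chain Proposition~\ref{modht} with Lemma~\ref{feedcycle}. The only cosmetic difference is that you observe the equality $N^+_T(v)\setminus I = N^+_G(v)\setminus I$, whereas the paper only needs the inequality $|N^+_G(v)\setminus I|\leq |N^+_T(v)\setminus I|$ coming from $E(G)\subseteq E(T)$.
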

\begin{proof}
It is easy to see that as the missing edges of $D$ form a matching, every module in $D$ is also a module in $T$. Therefore $I$ is a module in $T$ containing $d$. Then we have from Proposition~\ref{modht} and Lemma~\ref{feedcycle} that $|N^+_D(v)\setminus I|\leq |N^+_T(v)\setminus I|\leq |N^{++}_T(v)\setminus I|\leq |N^{++}_D(v)\setminus I|$.
\end{proof}
\subsubsection*{Prime vertices}
We define $$I(u)=\left\{\begin{array}{ll}\Gamma(Q)&\mbox{if }\exists Q\in\mathcal{C}\mbox{ such that }u\in\Gamma(Q)\mbox{,}\\\{u\}&\mbox{otherwise.}\end{array}\right.$$ Note that as any vertex $u$ can be a part of at most one missing edge, there can be at most one cycle $Q\in\mathcal{C}$ such that $u\in\Gamma(Q)$, and therefore $I(u)$ is well defined. We define a vertex $u$ in $D$ to be \emph{prime}, if $I(u)=\{u\}$; in other words, a vertex $u$ is said to be prime if $u\notin\Gamma(Q)$ for any $Q\in\mathcal{C}$.

Note that if $u$ is prime, we have $I(u)=\{u\}$ and therefore, $|N^+_D(u)\cap I(u)|=|N^{++}_D(u)\cap I(u)|=0$. On the other hand, if $u\in\Gamma(Q)$ for some $Q\in\mathcal{C}$, then $I(u)=\Gamma(Q)$, and by Lemma~\ref{snc_cycle}, we get that $|N^+_D(u)\cap I(u)|=|N^{++}_D(u)\cap I(u)|$. We thus have the following.

\begin{remark}\label{sncmodule}
	For any vertex $u\in V(D)$, $|N^+_D(u)\cap I(u)|=|N^{++}_D(u)\cap I(u)|$.
\end{remark}
The following result is implicit in the work of Ghazal~\cite{ghazal2015remark}.

\begin{theorem}\label{mainthm}
Let $d$ be the feed vertex of some median order of a safe completion $T$ of $D$. Then every vertex in $I(d)$ has a large second neighborhood in $D$.
\end{theorem}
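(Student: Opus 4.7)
The plan is to split on whether the feed vertex $d$ is prime. If $d\in\Gamma(Q)$ for some $Q\in\mathcal{C}$, then $I(d)=\Gamma(Q)$ is a module in $G$ containing $d$ of size at least $2$, by Corollary~\ref{gamma}\ref{gammamodule}. Applying Corollary~\ref{mainresult} with $I=\Gamma(Q)$ yields $|N^+_G(v)\setminus\Gamma(Q)|\leq|N^{++}_G(v)\setminus\Gamma(Q)|$ for every $v\in\Gamma(Q)$, and Lemma~\ref{snc_cycle} (equivalently, Remark~\ref{sncmodule}) gives the matching equality $|N^+_G(v)\cap\Gamma(Q)|=|N^{++}_G(v)\cap\Gamma(Q)|$. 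Summing these gives $|N^+_G(v)|\leq|N^{++}_G(v)|$ for every $v\in I(d)$, as required.

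In the prime case, $I(d)=\{d\}$ and it suffices to show that $d$ itself has a large second neighborhood in $G$. Most configurations can be dispatched by invoking existing lemmas: Lemma~\ref{structure1} handles the case where $d$ has no Type-I special in-neighbor, and Lemma~\ref{outmissing} handles the case where $d$ has a Type-I special in-neighbor and some missing edge $d\dashrightarrow w$. The only remaining possibility---$d$ has a Type-I special in-neighbor $b$, and the unique missing edge at $d$ (if any) is oriented into $d$---I would rule out by contradiction.

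For this last sub-case, Lemma~\ref{uniqueab} furnishes $a,u\in V(T)$ with $d\rightarrow a\dashrightarrow b\twoheadrightarrow d$, $a\twoheadrightarrow u\rightarrow b$, and $u\miss d$. Since missing edges form a matching, $\{u,d\}$ is the only missing edge at $d$, and by assumption it is not oriented outward, so $u\dashrightarrow d$ in $T$. Proposition~\ref{duallyforced} then gives $(u,d)R(b,a)$, so $\{u,d\}$ is an in-neighbor of $\{a,b\}$ in $\Delta(G)$. If $\{a,b\}$ lies on no cycle of $\mathcal{C}$, then clause (2) of the safe-completion definition, together with $u\dashrightarrow d$, forces $b\dashrightarrow a$ in $T$, contradicting $a\dashrightarrow b$. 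If instead $\{a,b\}$ lies on a cycle $Q'\in\mathcal{C}$, then Lemma~\ref{indegree} says the unique in-neighbor of $\{a,b\}$ in $\Delta(G)$ is its predecessor on $Q'$; hence $\{u,d\}\in V(Q')$ and $d\in\Gamma(Q')$, contradicting the primality of $d$. Either branch yields a contradiction, so the sub-case does not occur and the proof is complete.

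The main obstacle is this last sub-case: one must use the safe-completion rule in its contrapositive form on the missing edge $\{u,d\}$, and simultaneously exploit the uniqueness of in-neighbors in $\Delta(G)$ to close off the cycle-containing branch by forcing $d$ into some $\Gamma(Q')$. Every other step simply plugs directly into a lemma that has already been established.
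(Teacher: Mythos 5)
Your proof is correct and follows essentially the same route as the paper: the non-prime case is handled identically via Corollary~\ref{mainresult} and Remark~\ref{sncmodule}, and your prime case is the paper's argument with the case split reorganized (the paper derives $d\dashrightarrow u$ by the same safe-completion/primality contradiction you use to rule out $u\dashrightarrow d$, then invokes Lemma~\ref{outmissing}). Your use of Lemma~\ref{indegree} to place $\{u,d\}$ on the cycle through $\{a,b\}$ is just a slightly more explicit rendering of the paper's observation that adjacent vertices of $\Delta(G)$ lie in the same component.
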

\begin{proof}
Suppose that $d$ is prime. Then, $I(d)=\{d\}$. If $d$ has no special in-neighbor of Type-I in $T$, then we are done by Lemma~\ref{structure1}. So let us assume that $d$ has a special in-neighbor $b$ of Type-I in $T$. Then by Lemma~\ref{uniqueab}, there exist $a,u\in V(T)$ such that $d\rightarrow a\dashrightarrow b\twoheadrightarrow d$, $a\twoheadrightarrow u\rightarrow b$, where $u\miss d$. This means that $(u,d)R(b,a)$. If $u\dashrightarrow d$, then since $T$ is a safe completion of $D$, the fact that $a\dashrightarrow b$ implies that $\{u,d\}$ and $\{b,a\}$ lie in some cycle in $\mathcal{C}$, contradicting the assumption that $d$ is prime. Therefore, we have $d\dashrightarrow u$. Then we are done by Lemma~\ref{outmissing}\ref{disgood}.

Next, consider the case when $d$ is not prime, i.e. $d\in\Gamma(Q)$ for some $Q\in\mathcal{C}$. Note that we then have $I(d)=\Gamma(Q)$ and therefore, $|I(d)|\geq 2$.	Consider any vertex $v\in I(d)$. As $I(d)=\Gamma(Q)$ is a module (Corollary~\ref{gamma}\ref{gammamodule}), we have by Corollary~\ref{mainresult} that $|N^+_D(v)\setminus I(d)|\leq |N^{++}_D(v)\setminus I(d)|$. By Remark~\ref{sncmodule}, $|N^+_D(v)\cap I(d)|= |N^{++}_D(v)\cap I(d)|$. We now have $|N^+_D(v)| = |N^+_D(v)\setminus I(d)|+ |N^+_D(v)\cap I(d)| \leq |N^{++}_{D}(v)\setminus I(d)|+|N^{++}_D(v)\cap I(d)|=|N^{++}_D(v)|$. Hence the theorem.
\end{proof}


\subsubsection*{Reverse special arcs}
We now state a property of special arcs that are ``reverse arcs'' in a median order, which will be useful for deriving the results in the next section.
\begin{definition}
	Given a median order $L=(x_1,x_2,\ldots,x_n)$ of any completion $T$ of $D$, a special arc $x_j\twoheadrightarrow x_i$ is said to be a \emph{reverse special arc} in $(T,L)$ if $i<j$.
\end{definition}
\begin{lemma}\label{rev_specl_arc}
	Let $L=(x_1,x_2,\ldots,x_n)$ be a median order of a completion $T$ of $D$ and $x_j\twoheadrightarrow x_i$ be a reverse special arc in $(T,L)$. Then at least one of the following conditions hold:
	\begin{myenumerate}
		\item\label{mis_out} There exists $x_k$ such that $x_i\dashrightarrow x_k\rightarrow x_j$, where $i<k<j$, or
		\item\label{mis_in} There exists $x_l$ such that $x_i\rightarrow x_l\dashrightarrow x_j$, where $i<l<j$.
	\end{myenumerate}
	Moreover, if exactly one of the above conditions holds,  then $L'= (x_1,x_2,\ldots, x_{i-1},x_{i+1},\ldots,x_j,x_i,x_{j+1},\ldots,$ $ x_n)$ is also a median order of $T$.
\end{lemma}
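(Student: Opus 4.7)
The plan is to analyze the interval $A=\{x_{i+1},\ldots,x_{j-1}\}$ using the feedback property of median orders and then exploit the special-arc condition to force a dashed edge. Since $T$ is a tournament, every $y\in A$ is adjacent in $T$ to both $x_i$ and $x_j$, so I will partition $A$ into four classes $A_1,A_2,A_3,A_4$ according to the orientations of these two arcs in $T$, with $A_1$ denoting those $y\in A$ satisfying $x_i\rightarrow y$ and $y\rightarrow x_j$ in $T$.

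First I would apply Proposition~\ref{feedback}\ref{xif} to $x_i$ (noting that $x_j\in N^-_T(x_i)$ since $x_j\twoheadrightarrow x_i$) and Proposition~\ref{feedback}\ref{xjf} to $x_j$ (noting $x_i\in N^+_T(x_j)$) to obtain the inequalities $|A_1|+|A_2|\geq|A_3|+|A_4|+1$ and $|A_1|+|A_3|\geq|A_2|+|A_4|+1$. Summing these collapses to $|A_1|\geq|A_4|+1\geq 1$, so some $x_k\in A_1$ exists, yielding a path $x_i\rightarrow x_k\rightarrow x_j$ in $T$ with $i<k<j$. If both arcs of this path lay in $E(G)$, then $x_j$ would belong to $N^{++}_G(x_i)$ (using that $x_j\neq x_i$ and that $(x_i,x_j)\notin E(G)$ because $G$ has no digons), contradicting the hypothesis that $x_j\twoheadrightarrow x_i$ is a special arc. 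Hence at least one of the two arcs is dashed; since the missing edges of $G$ form a matching, $x_k$ is incident to at most one missing edge, so in fact exactly one arc is dashed. This immediately produces condition~\ref{mis_out} (when $x_i\dashrightarrow x_k$) or condition~\ref{mis_in} (when $x_k\dashrightarrow x_j$).

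For the ``moreover'' part, the key observation is that any witness $x_k\in A_1$ of~\ref{mis_out} must be the unique non-neighbor of $x_i$ in $G$, and any witness of~\ref{mis_in} must be the unique non-neighbor of $x_j$ in $G$. So if only one of the two conditions holds, every vertex of $A_1$ witnesses that same condition, and the uniqueness of the missing-edge partner forces $|A_1|\leq 1$; combined with $|A_1|\geq 1$ this yields $|A_1|=1$. Plugging $|A_1|=1$ back into the two feedback inequalities gives $|A_2|\geq|A_3|+|A_4|$ and $|A_3|\geq|A_2|+|A_4|$; summing these forces $|A_4|=0$, and then $|A_2|=|A_3|$. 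This is precisely the equality $|N^+_T(x_i)\cap\{x_{i+1},\ldots,x_j\}|=|N^-_T(x_i)\cap\{x_{i+1},\ldots,x_j\}|$ required by Proposition~\ref{movexi}\ref{xi}, which then delivers that $L'$ is a median order of $T$.

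The main subtlety I anticipate is recognising that the ``exactly one condition holds'' hypothesis translates, through the matching structure of the missing edges together with the uniqueness of $x_i$'s and $x_j$'s missing partners, into the numerical equality $|A_1|=1$; this is exactly the tight case of the feedback inequality that unlocks Proposition~\ref{movexi}.
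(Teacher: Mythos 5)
Your proof is correct and follows essentially the same route as the paper: the set $A_1$ you isolate is exactly the paper's $N^+_T(x_i)\cap N^-_T(x_j)\cap\{x_{i+1},\ldots,x_{j-1}\}$, the observation that each of its elements forces exactly one dashed arc (via the no-special-triangle property and the matching structure) is the paper's key step, and the counting via the feedback property and the final appeal to Proposition~\ref{movexi}\ref{xi} match the paper's argument, with your four-class partition being only a cosmetic repackaging of the paper's comparison of $|N^+_{i,j}(x_i)|$ and $|N^+_{i,j}(x_j)|$ against $(j-i)/2$.
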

\begin{proof}
	For the purposes of this proof, for $u\in\{x_i,x_{i+1},\ldots,x_j\}$, we shall abbreviate $N^+_T(u)\cap\{x_i,x_{i+1},\ldots,x_j\}$ and $N^-_T(u)\cap\{x_i,x_{i+1},\ldots,x_j\}$ to just $N^+_{i,j}(u)$ and $N^-_{i,j}(u)$ respectively.
	By Lemma~\ref{feedback}, we have
	\begin{equation}\label{local_x_i}
		\left|N^+_{i,j}(x_i)\right| \geq \frac{j-i}{2}\quad \text{and}\quad \left |N^-_{i,j}(x_j)\right| \geq \frac{j-i}{2}
	\end{equation}
	Alternatively,
	\begin{equation}\label{local_x_j}
		\left|N^-_{i,j}(x_i)\right| \leq \frac{j-i}{2}\quad \text{and}\quad \left|N^+_{i,j}(x_j)\right| \leq \frac{j-i}{2}
	\end{equation}
	
	We shall first make an observation about any vertex $x_p\in N^+_{i,j}(x_i)\setminus N^+_{i,j}(x_j)$. Clearly, $x_p\in N^+_{i,j}(x_i)\cap N^-_{i,j}(x_j)$ (recall that $x_j\twoheadrightarrow x_i$). Note that either $x_i\dashrightarrow x_p$ or $x_p\dashrightarrow x_j$, as otherwise $x_i\rightarrow x_p\rightarrow x_j\twoheadrightarrow x_i$ would form a directed triangle in $D$ containing a special arc, which is a contradiction. Since the missing edges of $D$ form a matching, this implies that either $x_i\dashrightarrow x_p\rightarrow x_j$ or $x_i\rightarrow x_p\dashrightarrow x_j$. 
	
	Suppose that neither of the conditions in the lemma hold. Then from the above observation, it is clear that $N^+_{i,j}(x_i)\subseteq N^+_{i,j}(x_j)$. Note that $x_i\notin N^+_{i,j}(x_i)$ but $x_i\in N^+_{i,j}(x_j)$. Therefore we have, $\left|N^+_{i,j}(x_i)\right|< \left|N^+_{i,j}(x_j)\right|\leq \frac{j-i}{2}$ (by \eqref{local_x_j}), which contradicts \eqref{local_x_i}. Therefore at least one of the conditions \ref{mis_out} or \ref{mis_in} should hold.
	
	Now suppose that exactly one of the conditions~\ref{mis_out} or~\ref{mis_in} holds.
	Note first that from the previous observation and the fact that the missing edges of $D$ form a matching, it follows that if there exist two distinct vertices $x_p,x_q$ in $N^+_{i,j}(x_i)\setminus N^+_{i,j}(x_j)$, then $x_i\dashrightarrow x_p\rightarrow x_j$ and $x_i\rightarrow x_q\dashrightarrow x_j$, implying that both conditions hold. Therefore, there is exactly one vertex in $N^+_{i,j}(x_i)\setminus N^+_{i,j}(x_j)$, i.e., $|N^+_{i,j}(x_i)\setminus N^+_{i,j}(x_j)|=1$. Since $x_i\in N^+_{i,j}(x_j)\setminus N^+_{i,j}(x_i)$, we have that $|N^+_{i,j}(x_i)\setminus (N^+_{i,j}(x_j)\setminus\{x_i\})|=1$. This means that $|N^+_{i,j}(x_i)|-(|N^+_{i,j}(x_j)|-1)\leq 1$, implying that $|N^+_{i,j}(x_i)|\leq |N^+_{i,j}(x_j)|$.
	Hence, $\frac{j-i}{2} \leq \left|N^+_{i,j}(x_i)\right|\leq \left|N^+_{i,j}(x_j)\right|\leq \frac{j-i}{2}$ (from \eqref{local_x_i} and \eqref{local_x_j}). Therefore, we have $\left|N^+_{i,j}(x_i)\right|= \frac{j-i}{2}= \left|N^-_{i,j}(x_i)\right|$ (which means that $j-i$ is even). Then by Lemma~\ref{movexi}\ref{xi}, $L'= (x_1,x_2,\ldots, x_{i-1},x_{i+1},\ldots,x_j,x_i,x_{j+1},\ldots, x_n)$ is also a median order of $T$.
\end{proof}

We now prove another lemma that will be needed later.

\begin{lemma}\label{noin-neighbor}
Let $T$ be a safe completion of $D$ and let $L$ be a median order of $T$ whose feed vertex $d$ is prime. Suppose that there exists $d'\in V(T)$ such that $d\dashrightarrow d'$ in $T$. Then the missing edge $\{d,d'\}$ does not have an in-neighbor in $\Delta_D$.
\end{lemma}
\begin{proof}
Suppose not. Let $\{a,a'\}$ be an in-neighbor of $\{d,d'\}$ in $\Delta_D$. Then, without loss of generality, by the definition of $\Delta_D$, we can assume that $(a,a')R(d,d')$, and therefore there exists the four cycle $a\rightarrow d\twoheadrightarrow a'\rightarrow d'\twoheadrightarrow a$.  As $d$ is prime, $d$ does not belong to $\Gamma(Q)$ for any $Q\in\mathcal{C}$. This means that $\{d,d'\}$ does not lie in any cycle in $\mathcal{C}$. Then as $T$ is a safe completion, $d\dashrightarrow d'$ and $(a,a')R(d,d')$, we have that $a\dashrightarrow a'$ in $T$. As $d$ is the feed vertex of $L$, $d\twoheadrightarrow a'$ is a reverse special arc in $(T,L)$. Note that the only missing edge $d\miss d'$ incident on $d$ is oriented as $d\dashrightarrow d'$, and the only missing edge $a\miss a'$ incident on $a'$ is oriented as $a\dashrightarrow a'$. This implies that neither of the conditions \ref{mis_out} or \ref{mis_in} of Lemma~\ref{rev_specl_arc} hold, which is a contradiction.
\end{proof}


\subsection{Tournaments missing a matching and a star}\label{sec:nearmatching}

In this section, we shall show that if the missing edges of an oriented graph can be partitioned into a matching and a star, then it contains a vertex with a large second neighborhood. As noted in the beginning, any sink in an oriented graph is a vertex with a large second neighborhood. Therefore, we only need to show the result for graphs that contain no sink. In fact, we show the following stronger result.

\begin{theorem}\label{finalthm}
Let $H$ be an oriented graph that does not contain a sink and $z\in V(H)$ such that $D=H-\{z\}$ is a tournament missing a matching. Then there exists a vertex in $V(D)$ that has a large second neighborhood in both $D$ and $H$.
\end{theorem}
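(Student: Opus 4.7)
The plan is to lift the safe-completion machinery of Section~\ref{sec:main} from $G$ to $H$, using sedimentation of good median orders to place the feed vertex in $V(G)$. The key elementary reduction is that for every $v\in V(G)$ we have $N^+_H(v)\subseteq N^+_G(v)\cup\{z\}$ and $N^{++}_G(v)\subseteq N^{++}_H(v)$; consequently any $v\in V(G)$ that has a large second neighborhood in $G$ and satisfies $v\not\to z$ in $H$ automatically has a large second neighborhood in $H$. So it suffices to produce some $v\in V(G)$ with a large second neighborhood in $G$ that is either not an in-neighbor of $z$ in $H$ or has a compensating second neighbor in $N^{++}_H(v)\setminus N^{++}_G(v)$.

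I would first fix a safe completion $T$ of $G$ (Remark~\ref{safeexists}) and extend $T$ to a completion $\hat T$ of $H$ by orienting the star of missing edges at $z$. Next, let $\mathcal I$ be the partition of $V(H)$ consisting of $\{z\}$ together with the sets $I(u)$ for $u\in V(G)$, refined into singletons wherever $z$'s adjacencies break the module property of some $\Gamma(Q)$ in $\hat T$. By Lemma~\ref{module}, $\hat T$ admits a good median order with respect to $\mathcal I$; iterating sedimentation via Theorem~\ref{sedimentation} yields a stable (or periodic) such order $L^\star$, whose feed vertex $d$ I would argue lies in $V(G)$ rather than at $z$: since $H$ has no sink, $z$ has an out-neighbor in $\hat T$, and Proposition~\ref{modht} combined with a sedimentation step on the module $\{z\}$ pushes $z$ out of the feed position. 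With $d\in V(G)$, Theorem~\ref{mainthm} applied to $T$ (via the median order on $V(G)$ obtained from $L^\star$) gives that every vertex of $I(d)$ has a large second neighborhood in $G$.

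If some $v\in I(d)$ satisfies $v\not\to z$ in $H$, the elementary reduction concludes the proof. The hard case is when every vertex of $I(d)$ is an in-neighbor of $z$ in $H$; here I would treat the prime case ($I(d)=\{d\}$) and the cycle case ($I(d)=\Gamma(Q)$ for some $Q\in\mathcal C$) separately. In the prime case, the fact that $d\to z$, the special-in-neighbor analysis of Lemmas~\ref{uniqueab}--\ref{outmissing}, and the no-sink assumption on $H$ (so $z$ has an out-neighbor $u\in V(G)$) should together force $u\notin N^+_G(d)\cup N^{++}_G(d)\cup\{d\}$, placing $u\in N^{++}_H(d)\setminus N^{++}_G(d)$ and compensating for the extra arc $d\to z$. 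In the cycle case, the rigid special-cycle structure of $\Gamma(Q)$ given by Lemmas~\ref{special_cycle}--\ref{snc_cycle} constrains $z$'s adjacencies enough to produce such a compensating vertex for some $v\in\Gamma(Q)$. The main obstacle is ruling out the pathological configurations of $z$'s star and $Q$ in which no compensating vertex exists; this is where the extra flexibility afforded by the choice of $\hat T$ and by sedimentation should be essential.
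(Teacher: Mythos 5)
There is a genuine gap, and it sits exactly where you flag it yourself: the ``hard case'' in which every candidate vertex is an in-neighbor of $z$. Your proposal correctly isolates the elementary reduction (a vertex $v$ with a large second neighborhood in $G$ and $v\not\to z$ is automatically good in $H$, which is Lemma~\ref{twovertices}\ref{xnoutneigh} in disguise), but the resolution of the hard case is left as a hope that ``the special-in-neighbor analysis \ldots should together force'' a compensating vertex, and that the special-cycle structure ``constrains $z$'s adjacencies enough.'' No such purely structural argument is given, and the paper does not proceed that way. The actual mechanism is the periodic/stable dichotomy for sedimentation of good median orders of $T$ (a safe completion of $G$ only---$z$ is never added to any tournament). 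In the periodic case, the out-neighbor $u$ of $z$ must move backward at some sedimentation step, which forces $u\in N^-_T(d_q)\setminus N^{++}_T(d_q)$ for some feed vertex $d_q$; then $u$ is the compensating second neighbor via $d_q\to z\to u$ (Lemma~\ref{periodic} together with Lemma~\ref{twovertices}\ref{xjbadvertex}). In the stable case, one reaches a feed vertex $d$ with the \emph{strict} inequality $|N^{++}_T(d)\setminus I(d)|>|N^+_T(d)\setminus I(d)|$, and that extra unit is precisely what absorbs the additional out-neighbor $z$ (Lemmas~\ref{stable1} and~\ref{stable2}). The stable case moreover needs $T$ to be chosen as a safe completion of \emph{maximum value}, which feeds Lemma~\ref{unforced} and hence Lemma~\ref{dingamma}; your proposal fixes an arbitrary safe completion and so loses access to that argument.

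Your setup also introduces technical obstructions the paper deliberately avoids. If you complete $H$ to a tournament $\hat T$ and take a (good) median order $L^\star$ of $\hat T$, the order obtained by deleting $z$ from $L^\star$ is in general \emph{not} a median order of $T=\hat T-\{z\}$: Proposition~\ref{subtournament}\ref{suborder} only applies to contiguous segments, so Theorem~\ref{mainthm} cannot be invoked ``via the median order on $V(G)$ obtained from $L^\star$'' as you claim. Refining the sets $\Gamma(Q)$ into singletons wherever $z$ breaks modularity destroys the $I(d)$ machinery (Remark~\ref{sncmodule} and Corollary~\ref{mainresult}) that Theorem~\ref{mainthm} relies on. And there is no argument that $z$ cannot remain permanently at the feed position: if $|N^+_{\hat T}(z)|<|N^{++}_{\hat T}(z)|$ then sedimentation with respect to the module $\{z\}$ fixes the order, and your procedure never produces a vertex of $V(G)$. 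All of these issues disappear in the paper's approach precisely because all median orders live on $V(G)$ and $z$ enters only through the counting in Lemma~\ref{twovertices}.
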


When $H$ is a tournament missing a matching and a star, and $H$ does not contain a sink, we can apply the above theorem taking $z$ to be the center of the star, to obtain the result that there is a vertex other than $z$ having a large second neighborhood in $H$.

For the remainder of this section, we assume that $H$ is an oriented graph without a sink containing a vertex $z\in V(H)$ such that $D=H-\{z\}$ is a tournament missing a matching.

\begin{lemma}\label{twovertices}
Let $d\in V(D)$ be a vertex that has a large second neighborhood in $D$. If $d$ does not have a large second neighborhood in $H$, then:
\begin{myenumerate}
\item\label{xnoutneigh} $z\in N^+_H(d)$, and
\item\label{xjbadvertex} $N^+_H(z)\subseteq N^+_D(d)\cup N^{++}_D(d)$.
\end{myenumerate}
\end{lemma}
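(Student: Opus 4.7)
The plan hinges on applying Theorem~\ref{mainthm}: since $T$ is a safe completion of $G$ and $d$ is the feed vertex of the median order $L$ of $T$, the vertex $d$ has a large second neighborhood in $G$, that is, $|N^+_G(d)|\leq |N^{++}_G(d)|$. The hypothesis that $d$ does not have a large second neighborhood in $H$ will then produce a tight counting constraint among the four quantities $|N^+_G(d)|$, $|N^+_H(d)|$, $|N^{++}_G(d)|$ and $|N^{++}_H(d)|$, which I intend to exploit in both parts.

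For part~\ref{xnoutneigh}, I argue by contradiction. If $z\notin N^+_H(d)$, then deleting $z$ from $H$ does not change the out-neighborhood of $d$, so $N^+_H(d)=N^+_G(d)$. Moreover, any $x\in N^{++}_G(d)$ remains in $N^{++}_H(d)$: a $G$-witness $w$ of $x$ lies in $V(G)$, so $(d,w),(w,x)\in E(G)\subseteq E(H)$, and $(d,x)\notin E(H)$ because $(d,x)\notin E(G)$ and $d,x\neq z$. Thus $N^{++}_G(d)\subseteq N^{++}_H(d)$, and combining this with Theorem~\ref{mainthm} yields $|N^+_H(d)|=|N^+_G(d)|\leq|N^{++}_G(d)|\leq|N^{++}_H(d)|$, contradicting the failure of the large second neighborhood property for $d$ in $H$.

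For part~\ref{xjbadvertex}, part~\ref{xnoutneigh} gives $z\in N^+_H(d)$, so $|N^+_H(d)|=|N^+_G(d)|+1$. Combining $|N^+_H(d)|>|N^{++}_H(d)|$, Theorem~\ref{mainthm}, and the inclusion $N^{++}_G(d)\subseteq N^{++}_H(d)$ established above produces the chain $|N^{++}_H(d)|\leq|N^+_G(d)|\leq|N^{++}_G(d)|\leq|N^{++}_H(d)|$, forcing equality throughout and hence $N^{++}_G(d)=N^{++}_H(d)$. Now assume, for contradiction, that some $u\in N^+_H(z)$ lies in $N^-_T(d)\setminus N^{++}_T(d)$. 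Since $u\in V(T)=V(G)$ we have $u\neq z$; and since $(d,z)\in E(H)$ and $H$ is oriented, $(z,d)\notin E(H)$, so $u\neq d$. The arcs $(d,z),(z,u)\in E(H)$ then force either $u\in N^+_H(d)$ or $u\in N^{++}_H(d)$. The first alternative would give $(d,u)\in E(G)\subseteq E(T)$, contradicting $u\in N^-_T(d)$ in the oriented graph $T$; the second places $u$ in $N^{++}_G(d)$, so some $w\in N^+_G(d)\subseteq N^+_T(d)$ satisfies $(w,u)\in E(G)\subseteq E(T)$, and since $u\neq d$ and $u\notin N^+_T(d)$, this gives $u\in N^{++}_T(d)$, again a contradiction.

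I do not foresee a serious obstacle: the argument is essentially a counting/bookkeeping exercise once the tight chain of inequalities is set up. The care needed is (i) checking that second-out-neighborhood witnesses in $G$ transfer correctly to $T$ (via $E(G)\subseteq E(T)$) and to $H$ (via $V(G)\subseteq V(H)$ and $E(G)\subseteq E(H)$), and (ii) verifying $u\neq z$ and $u\neq d$ before appealing to the dichotomy $u\in N^+_H(d)$ or $u\in N^{++}_H(d)$. It is worth noting that the maximum-value choice of $T$ is not used here---only that $T$ is a safe completion, so that Theorem~\ref{mainthm} applies.
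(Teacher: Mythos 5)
Your proof is correct and follows essentially the same route as the paper: part (a) is the identical contradiction via Theorem~\ref{mainthm} and the inclusion $N^{++}_G(d)\subseteq N^{++}_H(d)$, and part (b) uses the same key facts (the path $d\rightarrow z\rightarrow u$ placing $u$ in $N^{++}_H(d)$, and $u\notin N^{++}_G(d)$ because $u\in N^-_T(d)\setminus N^{++}_T(d)$), merely packaged as a forced equality $N^{++}_G(d)=N^{++}_H(d)$ rather than the paper's direct inequality chain $|N^+_H(d)|\leq|N^{++}_G(d)\cup\{u\}|\leq|N^{++}_H(d)|$. Your closing remark that only safeness of $T$ (not its maximum value) is used here is accurate.
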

\begin{proof}
	Since $d$ has a large second neighborhood in $D$, we have $|N^+_D(d)|\leq |N^{++}_D(d)|$.\smallskip

	\ref{xnoutneigh} If $z\notin N^+_H(d)$, then since $N^{++}_D(d)\subseteq N^{++}_H(d)$, we have $|N^+_H(d)| = |N^+_D(d)|\leq |N^{++}_D(d)|\leq |N^{++}_D(d)|$. This contradicts the assumption that $d$ does not have a large second neighborhood in $H$.\smallskip
	
	\ref{xjbadvertex} Suppose for the sake of contradiction that there exists $u\in N^+_H(z)\setminus (N^+_D(d)\cup N^{++}_D(d))$. From~\ref{xnoutneigh}, $z\in N^+_H(d)$. As $z\in N^+_H(d)\cap N^-_H(u)$ and $u\notin N^+_D(d)$, we get $u\in N^{++}_H(d)$. 
	Combining all these together we get,
	\begin{eqnarray*}
		|N^+_H(d)| &=& |N^+_D(d)|+1\hspace{.25in}\mbox{ (as $z\in N^+_H(d)$)}\\
		&\leq& |N^{++}_D(d)|+1\\
		&=& |N^{++}_D(d)\cup\{u\}|\hspace{.25in}\mbox{ (as $u\notin N^{++}_D(d)$)}\\
		&\leq& |N^{++}_H(d)|\hspace{.25in}\mbox{ (as $N^{++}_D(d)\subseteq N^{++}_H(d)$ and $u\in N^{++}_H(d)$)}
	\end{eqnarray*} 
	and therefore $d$ has a large second neighborhood in $H$, which is a contradiction.
\end{proof}

Define $\mathcal{I}(D)=\{I(u)\colon u\in V(D)\}=\{\Gamma(Q)\colon Q\in\mathcal{C}\}\cup\{\{u\}\colon u$ is prime$\}$. By Corollary~\ref{gamma}\ref{gammamodule}, $\mathcal{I}(D)$ is a partition of $V(D)$ into modules of $D$. It is easy to verify that since the missing edges of $D$ form a matching, every module in $D$ is also a module in any completion $T$ of $D$. This implies that $\mathcal{I}(D)$ is a partition of $V(T)$ into modules of $T$ as well. Therefore, by Lemma~\ref{module}, there exists a good median order of $T$ with respect to $\mathcal{I}(D)$.
In fact, Lemma~\ref{module} gives the following stronger observation.
It is easy to see that as the missing edges of $D$ form a matching, every module in $D$ is also a module in $T$. Since by Corollary~\ref{gamma}\ref{gammamodule}, for each $Q\in\mathcal{C}$, $\Gamma(Q)$ is a module in $D$, we have the following observation.
\begin{remark}\label{gammaT}
For each $Q\in\mathcal{C}$, $\Gamma(Q)$ is a module in $T$.
\end{remark}

\begin{remark}\label{goodorder}
	If $L$ is a median order of any completion $T$ of $D$, then $T$ has a good median order with respect to $\mathcal{I}(D)$ with the same feed vertex as $L$.
\end{remark}

\begin{lemma}\label{periodic}
	Let $T$ be a safe completion of $D$.
	If there exists a good median order $L$ of $T$ with respect to $\mathcal{I}(D)$ which is periodic, then there exists $x\in V(D)$ such that $x$ has a large second neighborhood in both $D$ and $H$.
\end{lemma}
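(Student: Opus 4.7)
The plan is to argue by contradiction: assume that no vertex of $V(G)$ has a large second neighborhood in both $G$ and $H$. Let $L^{(q)}=Sed^{q}_{\mathcal{I}(G)}(L)$ for $q\ge 0$ with feed vertex $d^{(q)}$. Since $L$ is periodic, the sequence cycles with some period $p\ge 2$ and $L^{(q+1)}\neq L^{(q)}$ for every $q$; in particular $d^{(q+1)}\notin I(d^{(q)})$, because $d^{(q+1)}$ is the last good vertex of $L^{(q)}$ by the definition of sedimentation. By Theorem~\ref{mainthm}, every vertex of every $I(d^{(q)})$ has a large second neighborhood in $G$, so under our contradiction hypothesis none of them has one in $H$.

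I would first show that the conclusions of Lemma~\ref{twovertices} extend from the feed vertices $d^{(q)}$ to every $v\in\bigcup_{q}I(d^{(q)})$. Using Proposition~\ref{subtournament}\ref{replace} to splice a median order of $T[I(d^{(q)})]$ ending at $v$ into the last block of $L^{(q)}$, one produces a median order of $T$ with feed $v$; when $|I(d^{(q)})|>1$, the rotational structure of the special cycle (Lemma~\ref{special_cycle}, Lemma~\ref{cycle_cross_mod}) ensures that such a sub-median order ending at any $v\in I(d^{(q)})$ exists. Applying Lemma~\ref{twovertices} to $v$ then yields $v\in N^{-}_H(z)$ and $N^{+}_H(z)\cap(N^{-}_T(v)\setminus N^{++}_T(v))=\emptyset$. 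Since $H$ is oriented, this gives $N^{+}_H(z)\cap\bigcup_{q}I(d^{(q)})=\emptyset$.

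Because $H$ has no sink, pick any $u\in N^{+}_H(z)$; by the above, $u\notin I(d^{(q)})$ for any $q$, while by the second condition above $u\in N^{+}_T(d^{(q)})\cup N^{++}_T(d^{(q)})$ for every $q$. In the language of sedimentation, $u$ is always a ``good'' vertex outside the current module, yet never the last good vertex (which is the next feed $d^{(q+1)}\in N^{-}_H(z)$, distinct from $u\in N^{+}_H(z)$ because $H$ is oriented).

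The main obstacle is to convert this to a contradiction. The plan is to track $u$'s position through one period: each sedimentation step places the good vertices (in their $L^{(q)}$-order) in a contiguous block at the end of $L^{(q+1)}$ right after the module $I(d^{(q)})$, and the next module $I(d^{(q+1)})$ must then appear consecutively at the very end (since $L^{(q+1)}$ is a good median order with respect to $\mathcal{I}(G)$). Iterating around the cycle and using $L^{(q+p)}=L^{(q)}$, the aim is to count positions and combine with the feedback property (Proposition~\ref{feedback}) and the maximality of the value of $T$ (Lemma~\ref{unforced}) to show that $u$'s orbit must either land in some $I(d^{(q_0)})$ or force $u$ to coincide with some $d^{(q_0+1)}$, either of which contradicts the preceding conclusions. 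The technical subtlety is that the good/bad partition depends on the current feed vertex, so this counting must be carried out step by step across the cycle.
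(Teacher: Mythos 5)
Your overall strategy---apply Lemma~\ref{twovertices} to the successive feed vertices and track a fixed out-neighbour $u$ of $z$ through the sedimentation process until periodicity yields a contradiction---is the same as the paper's, but two of your steps do not go through as written. First, the claim that every $v\in I(d^{(q)})$ is the feed vertex of some median order of $T$, obtained by splicing a median order of $T[I(d^{(q)})]$ ending at $v$ into $L^{(q)}$, is false: the ``rotational structure'' you invoke is a property of $G[\Gamma(Q)]$, not of $T[\Gamma(Q)]$, because the antipodal missing edges inside $\Gamma(Q)$ are oriented arbitrarily in the safe completion and can destroy the symmetry. For instance, if $\Gamma(Q)=\{a_0,a_1,a_2,a_3\}$ with $a_0\rightarrow a_1\rightarrow a_2\rightarrow a_3\rightarrow a_0$, $a_0\miss a_2$, $a_1\miss a_3$, and $T$ orients $a_0\dashrightarrow a_2$ and $a_1\dashrightarrow a_3$, then $T[\Gamma(Q)]$ has the unique median order $(a_0,a_1,a_2,a_3)$, so no median order of it ends at $a_0$, $a_1$ or $a_2$, and Proposition~\ref{subtournament}\ref{replace} cannot produce a median order of $T$ with feed $a_0$. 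Fortunately the conclusion you want from this step does not need it: Theorem~\ref{mainthm} already guarantees that \emph{every} $v\in I(d^{(q)})$ satisfies $|N^+_G(v)|\leq|N^{++}_G(v)|$, and if such a $v$ were in $N^+_H(z)$ then $z\notin N^+_H(v)$, whence $|N^+_H(v)|=|N^+_G(v)|\leq|N^{++}_G(v)|\leq|N^{++}_H(v)|$ and $v$ would already be the desired vertex. (The same remark repairs your appeal to Lemma~\ref{twovertices} for non-feed vertices of $I(d^{(q)})$: its proof uses only that the vertex in question has a large second neighbourhood in $G$, which Theorem~\ref{mainthm} supplies for all of $I(d^{(q)})$, not merely for $d^{(q)}$.)

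Second, the ``main obstacle'' you defer is never resolved, and the tools you propose for it (the feedback property, Lemma~\ref{unforced}, the maximality of the value of $T$) are not what is needed. Once you know that for every $q$ the vertex $u$ is neither equal to $d^{(q)}$, nor in $I(d^{(q)})$, nor in $N^-_T(d^{(q)})\setminus N^{++}_T(d^{(q)})$, the contradiction is immediate and requires no step-by-step bookkeeping around the cycle: at every sedimentation step $u$ is one of the vertices $v_1,\ldots,v_{n-t-k}$, and the position of such a vertex strictly increases (it moves from a position at most $k+j$ to position $k+t+j$, where $j$ is its rank among the $v_i$ and $t=|I(d^{(q)})|\geq 1$), which is impossible for a sequence of orders that returns to itself. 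This is essentially how the paper closes the argument, and in fact the paper dispenses with the per-module analysis altogether: either $u$ equals some feed vertex $d_i$, in which case Lemma~\ref{twovertices}\ref{xnoutneigh} applies directly to $d_i$, or by periodicity the index of $u$ must strictly decrease at some step $q$, which places $u$ in $N^-_T(d_q)\setminus N^{++}_T(d_q)$ so that Lemma~\ref{twovertices}\ref{xjbadvertex} applies to $d_q$. As it stands, your write-up asserts a false structural lemma and leaves the concluding counting argument as an unexecuted plan, so it is not yet a proof.
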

\begin{proof}
    For the purposes of this proof, given an ordering of vertices $\hat{L}=(x_1,x_2,\ldots,x_n)$ and a vertex $v\in\{x_1,x_2,\ldots,x_n\}$, we define the ``index of $v$ in $\hat{L}$'' to be the integer $i$ such that $x_i=v$.

	Let us denote the feed vertex of $Sed^i_{\mathcal{I}(D)}(L)$ by $d_i$. 
	By Theorem~\ref{sedimentation}, we know that for any integer $i\geq 0$, $Sed^i_{\mathcal{I}(D)}(L)$ is a good median order of $T$ with respect to $\mathcal{I}(D)$.
	 
	As $H$ does not have any sink, there exists $v\in V(D)$ such that $v\in N^+_H(z)$. Suppose that there exists an integer $i\geq 0$ such that $v\in I(d_i)$. By Theorem~\ref{mainthm}, $v$ has a large second neighborhood in $D$. Further, as $z\notin N^+_H(v)$, by Lemma~\ref{twovertices}\ref{xnoutneigh}, $v$ has a large second neighborhood in $H$ too, and we are done. Thus we can assume that $v\notin I(d_i)$ for any integer $i\geq 0$. Since this implies that $v\neq d_i$ for any positive integer $i\geq 0$, there exists an integer $q\geq 0$ such that the index of $v$ in $Sed^{q+1}_{\mathcal{I}(D)}(L)$ is less than or equal to its index in $Sed^q_{\mathcal{I}(D)}(L)$ (recall that $L$ is periodic). Since $v\notin I(d_q)$, this means that $v\in N^-_T(d_q)\setminus N^{++}_T(d_q)$, which implies that $v\notin N^+_D(d_q)\cup N^{++}_D(d_q)$. Notice that by Theorem~\ref{mainthm}, $d_q$ has a large second neighborhood in $D$. Since $v\in N^+_H(z)\setminus (N^+_D(d_q)\cup N^{++}_D(d_q))$, we have by Lemma~\ref{twovertices}\ref{xjbadvertex} that $d_q$ has a large second neighborhood in $H$ too. Thus the vertex $d_q$ has a large second neighborhood in both $D$ and $H$.
\end{proof}

By the above lemma, if any safe completion of $D$ has a good median order with respect to $\mathcal{I}(D)$ that is periodic, then we are done. In order to complete the proof of Theorem~\ref{finalthm}, we shall show that if a special kind of safe completion $T$ of $D$ has some median order that is stable, then again there will exist a vertex that has a large second neighborhood in both $D$ and $H$. The remainder of this section is devoted proving this fact.
By the above lemma, henceforth we can focus our attention on the case when for any safe completion $T$ of $D$, every good median order of $T$ with respect to $\mathcal{I}(D)$ is stable. That is, if $L$ is a good median order of a safe completion $T$ of $D$ with respect to $\mathcal{I}(D)$, there exists an integer $q\geq 0$ such that the feed vertex $d$ of the median order $Sed^q_{\mathcal{I}(D)}(L)$ satisfies $|N^{++}_T(d)\setminus I(d)|> |N^+_T(d)\setminus I(d)|$. Therefore, to complete the proof of Theorem~\ref{finalthm}, we only need to show that if there exists some safe completion $T$ of $D$ having a median order with feed vertex $d$ such that $|N^{++}_T(d)\setminus I(d)|> |N^+_T(d)\setminus I(d)|$, then $d$ has a large second neighborhood in $H$ (by Theorem~\ref{mainthm}, it anyway has a large second neighborhood in $D$). If $D$ has no missing edges, then since $T=D$ and $I(d)=\{d\}$, it is straightforward to see that $d$ has a large second neighborhood in $H$. But to prove that $d$ has a large second neighborhood in $H$ even if $D$ contains some missing edges requires some more work. The remainder of the section is devoted to proving this fact, which we state as Lemma~\ref{stable2}.
\medskip

\begin{lemma}\label{stable1}
Let $T$ be a safe completion of $D$ and let $L$ be a median order of $T$ having feed vertex $d$ such that $|N^{++}_T(d)\setminus I(d)|> |N^+_T(d)\setminus I(d)|$. If either $d$ has no special in-neighbors or $d$ has a special in-neighbor of Type-I, then $d$ has a large second neighborhood in $H$.
\end{lemma}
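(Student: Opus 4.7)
The plan is to reduce the conclusion to the strict inequality $|N^+_G(d)| < |N^{++}_G(d)|$. This suffices: if $d\not\rightarrow z$ in $H$, then $|N^+_H(d)|=|N^+_G(d)|\leq|N^{++}_G(d)|\leq|N^{++}_H(d)|$ already from Theorem~\ref{mainthm} and $N^{++}_G(d)\subseteq N^{++}_H(d)$; and if $d\rightarrow z$, then the strict inequality gives $|N^+_H(d)|=|N^+_G(d)|+1\leq|N^{++}_G(d)|\leq|N^{++}_H(d)|$.

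I would split according to whether $|I(d)|\geq 2$ or $d$ is prime. In the first case the analysis is uniform across the two hypotheses of the lemma. Since $d\in\Gamma(Q)$ for some $Q\in\mathcal{C}$, Corollary~\ref{gamma}\ref{missingingamma} combined with Remark~\ref{missinginmod} forces the unique non-neighbor of $d$ (if any) to lie inside $I(d)$, whence $N^+_T(d)\setminus I(d)=N^+_G(d)\setminus I(d)$. Lemma~\ref{feedcycle} supplies $N^{++}_T(d)\setminus I(d)\subseteq N^{++}_G(d)\setminus I(d)$. Combining these two facts with the hypothesis $|N^{++}_T(d)\setminus I(d)|>|N^+_T(d)\setminus I(d)|$ and with the equality $|N^+_G(d)\cap I(d)|=|N^{++}_G(d)\cap I(d)|$ from Remark~\ref{sncmodule} gives the desired strict inequality.

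In the prime case $I(d)=\{d\}$, the hypothesis becomes $|N^{++}_T(d)|>|N^+_T(d)|$. If $d$ has no special in-neighbors, Lemma~\ref{good_vetices} gives $N^{++}_T(d)\subseteq N^{++}_G(d)$; since $|N^+_T(d)|\geq|N^+_G(d)|$ holds trivially, the chain $|N^+_G(d)|\leq|N^+_T(d)|<|N^{++}_T(d)|\leq|N^{++}_G(d)|$ closes this subcase. If instead $d$ has a Type-I special in-neighbor $b$, then Lemma~\ref{uniqueab} yields $a,u\in V(T)$ with $d\rightarrow a\dashrightarrow b\twoheadrightarrow d$, $a\twoheadrightarrow u\rightarrow b$, and $u\miss d$.

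The main obstacle I expect is to show that $d\dashrightarrow u$ in $T$, which is needed to invoke Lemma~\ref{outmissing}. I would argue this by contradiction: assuming $u\dashrightarrow d$, the relation $(u,d)R(b,a)$ places an arc from $\{u,d\}$ to $\{b,a\}$ in $\Delta(G)$, and by Lemma~\ref{indegree} this $\{u,d\}$ is the unique in-neighbor of $\{b,a\}$. If $\{b,a\}$ did not lie in any cycle of $\mathcal{C}$, the safe-completion propagation rule would force $b\dashrightarrow a$ in $T$, contradicting $a\dashrightarrow b$. Hence $\{b,a\}\in V(Q)$ for some $Q\in\mathcal{C}$, and since the unique in-neighbor of $\{b,a\}$ in $\Delta(G)$ must be the preceding vertex of $Q$, we get $\{u,d\}\in V(Q)$, so $d\in\Gamma(Q)$, contradicting that $d$ is prime. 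With $d\dashrightarrow u$ now in hand, Lemma~\ref{outmissing}\ref{claim} gives $|N^{++}_T(d)|\leq|N^{++}_G(d)|+1$ (using that $b\in N^{++}_T(d)$) while $|N^+_T(d)|=|N^+_G(d)|+1$, and plugging these into the hypothesis yields $|N^+_G(d)|<|N^{++}_G(d)|$, completing the proof.
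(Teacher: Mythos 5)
Your proof is correct and follows essentially the same route as the paper: the non-prime case via Lemma~\ref{feedcycle} and Remark~\ref{sncmodule}, the prime case with no special in-neighbors via Lemma~\ref{good_vetices}, and the prime case with a Type-I special in-neighbor via Lemma~\ref{uniqueab}, the safe-completion argument forcing $d\dashrightarrow u$, and Lemma~\ref{outmissing}. The only difference is cosmetic — you split on primality first and phrase the goal as the strict inequality $|N^+_G(d)|<|N^{++}_G(d)|$, whereas the paper splits on the special in-neighbor hypothesis first — but the lemmas invoked and the inequality chains are the same.
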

\begin{proof}
By Theorem~\ref{mainthm}, $d$ has a large second neighborhood in $D$.
If $z\notin N^+_H(d)$, then we are done by Lemma~\ref{twovertices}\ref{xnoutneigh}. So we can assume that $z\in N^+_H(d)$.
	
Suppose that $d$ has no special in-neighbors. Then, by Lemma~\ref{good_vetices}, we have $N^{++}_T(d)\subseteq N^{++}_D(d)$. Consequently, $N^{++}_T(d)\setminus I(d)\subseteq N^{++}_D(d)\setminus I(d)$.  
	
	Now suppose that $d$ has a special in-neighbor of Type-I and $d$ is not prime. Then there exists $Q\in \mathcal{C}$ such that $I(d)=\Gamma(Q)$. By Lemma~\ref{feedcycle}, we have $N^{++}_T(d)\setminus I(d)\subseteq N^{++}_D(d)\setminus I(d)$.
	
	Therefore, if $d$ has no special in-neighbors or if $d$ has a special in-neighbor of Type-I but $d$ is not prime, we have $N^{++}_T(d)\setminus I(d)\subseteq N^{++}_D(d)\setminus I(d)$. In that case, we get,
	\begin{eqnarray*}
		|N^+_H(d)| &=& |N^+_D(d)|+1\quad (\text{as }z\in N^+_H(d))\\
		&=& |N^+_D(d)\setminus I(d)|+ |N^+_D(d)\cap I(d)|+1 \\
		&\leq& |N^+_T(d)\setminus I(d)|+ |N^{++}_D(d)\cap I(d)|+1\quad (\text{since }N^+_D(d)\subseteq N^+_T(d)\text{ and by Remark~\ref{sncmodule}})\\
		&\leq& |N^{++}_T(d)\setminus I(d)| + |N^{++}_D(d)\cap I(d)|\quad (\text{as }|N^{++}_T(d)\setminus I(d)|> |N^+_T(d)\setminus I(d)|)\\
		&\leq& |N^{++}_D(d)\setminus I(d)|+ |N^{++}_D(d)\cap I(d)|\quad (\text{as }N^{++}_T(d)\setminus I(d)\subseteq N^{++}_D(d)\setminus I(d))\\
		&=& |N^{++}_D(d)|\\
		&\leq& |N^{++}_H(d)|
	\end{eqnarray*}
	and hence $d$ has a large second neighborhood in $H$. Now, to prove the lemma, it only remains to consider the case when $d$ has a special in-neighbor $b$ of Type-I and $d$ is prime. Then by Lemma~\ref{uniqueab}, there exist $a,u\in V(T)$ such that $d\rightarrow a\dashrightarrow b\twoheadrightarrow d$ and $a\twoheadrightarrow u\rightarrow b$, where $u\miss d$. This means that $(u,d)R(b,a)$. If $u\dashrightarrow d$, then since $T$ is a safe completion of $D$, the fact that $a\dashrightarrow b$ implies that $\{u,d\}$ and $\{b,a\}$ lie in some cycle $Q$ in $\mathcal{C}$. But then $d\in\Gamma(Q)$, which contradicts the fact that $d$ is prime. Therefore, we have $d\dashrightarrow u$. Then by Lemma~\ref{outmissing}\ref{claim}, we have $N^{++}_T(d)\setminus\{b\} \subseteq N^{++}_D(d)$. Combining all these together, we have 
	\begin{eqnarray*}
		|N^+_H(d)| &=& |N^+_D(d)|+1\quad (\text{as }z\in N^+_H(d))\\
		&=& |N^+_T(d)| -1 +1 \quad(\text{as } d\dashrightarrow u \text{ in }T)\\
		&\leq& |N^{++}_T(d) \setminus\{b\}| \quad (\text{as } I(d)= \{d\} \text{, we have }|N^{++}_T(d)|> |N^+_T(d)|)\\
		&\leq& |N^{++}_D(d)|\quad (\text{as }N^{++}_T(d)\setminus\{b\} \subseteq N^{++}_D(d))\\
		&\leq& |N^{++}_H(d)|
	\end{eqnarray*}
	Hence the lemma.
	\medskip
\end{proof}

\subsubsection*{The relation $F$}
Define a relation $F$ on $V(D)$ as follows. For $x,y\in V(D)$ such that $x$ is prime, we say that $xFy$ if and only if $x\twoheadrightarrow y$ and there exists $x'\in V(D)$ such that $y\rightarrow x'$ and $x \miss x'$ is a singly-forced missing edge in $D$. Note that if $xFy$, then the missing edge $x \miss x'$ is forced in the direction $x$ to $x'$, and the condition that $x\miss x'$ is singly-forced ensures that it is not forced in the direction $x'$ to $x$. 

\begin{lemma}\label{relation_cycle}
	The relation $F$ is not cyclic, i.e. there does not exist vertices $x_1,x_2,\ldots, x_k\in V(D)$ such
	that $x_1 F x_2 F\cdots$ $Fx_k F x_1$.
\end{lemma}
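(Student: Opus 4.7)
The plan is to argue by contradiction: suppose $x_1 F x_2 F \cdots F x_k F x_1$ for some $x_1, \ldots, x_k \in V(G)$. First I would rule out small $k$. The case $k = 1$ is impossible since $F$ requires a special arc from $x$ to $y$ (and $G$ has no loops); $k = 2$ would force both $x_1 \twoheadrightarrow x_2$ and $x_2 \twoheadrightarrow x_1$, producing a digon; and $k = 3$ would give a directed triangle containing special arcs, which $G$ does not admit. So $k \geq 4$. Unpacking the definition of $F$, for every $i$ (indices modulo $k$) we have $x_i \twoheadrightarrow x_{i+1}$, and therefore $C := x_1 \twoheadrightarrow x_2 \twoheadrightarrow \cdots \twoheadrightarrow x_k \twoheadrightarrow x_1$ is a special cycle in $G$.

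Next, I would apply Lemma~\ref{cycle_cross_mod} to conclude that $k$ is even and that $x_i \miss x_{i+k/2}$ for every $i$. Since the missing edges of $G$ form a matching, $x_{i+k/2}$ is the unique non-neighbor of $x_i$, so the witness $x'$ appearing in the definition of $x_i F x_{i+1}$ must be $x_{i+k/2}$. At this point, the only hypothesis from the definition of $F$ that has not yet been exploited is that the missing edge $x_i \miss x_{i+k/2}$ is singly-forced, and the plan is to contradict exactly that.

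The key step is to show that $x_i \miss x_{i+k/2}$ is in fact forced in \emph{both} directions. One direction comes for free from $F$: $x_i \twoheadrightarrow x_{i+1} \rightarrow x_{i+k/2}$ forces the edge from $x_i$ to $x_{i+k/2}$. For the opposite direction I would use the structure of $C$. The special arc $x_{i+k/2} \twoheadrightarrow x_{i+k/2+1}$ belongs to $C$, and Corollary~\ref{corcycle}\ref{inoutspecial} applied to the missing edge $x_{i+k/2+1} \miss x_{i+1}$ (note that $(i + k/2 + 1) + k/2 \equiv i+1 \pmod{k}$) places $x_i$ in $N^+_G(x_{i+k/2+1}) \cap V(C)$. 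Hence $x_{i+k/2} \twoheadrightarrow x_{i+k/2+1} \rightarrow x_i$ also forces the missing edge, now in the direction $x_{i+k/2}$ to $x_i$.

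I expect the main obstacle to be the cyclic index bookkeeping needed to identify the witness for the second forcing direction; once both directions of forcing are in hand, the contradiction is immediate either directly from Definition~\ref{singly_forced} (singly-forced demands that exactly one of the two forcing conditions hold) or, equivalently, via Propositions~\ref{strategy} and~\ref{duallyforced}: a singly-forced missing edge must be the starting vertex of a path in $\mathcal{P}$ and therefore has no in-neighbor in $\Delta(G)$, whereas a missing edge forced in both directions does have one.
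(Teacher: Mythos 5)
Your proof is correct, and it reaches the same final contradiction as the paper --- the missing edge between $x_1$ and its non-neighbor on the special cycle is forced in both directions, violating the single-forcedness built into the definition of $F$ --- but it gets there by a noticeably different route. The paper first proves a chain lemma (if $a_1Fa_2F\cdots Fa_t$ then $a_1$ is not a non-neighbor of any intermediate $a_i$), whose proof extracts the second forcing direction directly from the link $a_iFa_{i+1}$ of the $F$-chain: the witness $a_i'$ must equal $a_1$ because the missing edges form a matching. It then only needs the bare existence of a non-neighbor of $x_1$ on the special cycle (Corollary~\ref{corcycle}\ref{missonespecial}). You instead invoke the heavier structural facts about special cycles --- Lemma~\ref{cycle_cross_mod} to pin the non-neighbor down as the antipode $x_{i+k/2}$, and Corollary~\ref{corcycle}\ref{inoutspecial} to certify the arc $x_{i+k/2+1}\rightarrow x_i$ --- so your second forcing witness comes from the cycle geometry rather than from the relation $F$ itself. (Amusingly, the witness vertex is the same in both arguments, namely the successor of the non-neighbor on the special cycle.) Your index computations check out: the out-neighborhood of $x_{i+k/2+1}$ on $C$ is $\{x_{i+k/2+2},\ldots,x_i\}$, which does contain $x_i$ since $k\geq 4$. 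The trade-off is that your argument needs Lemma~\ref{cycle_cross_mod}, which the paper's does not, while the paper's needs the auxiliary chain statement, which yours avoids; both are legitimate, and the paper's chain lemma is not reused elsewhere, so nothing is lost either way.
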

\begin{proof}
Suppose that there exist vertices $x_1,x_2,\ldots, x_k\in V(D)$ such
that $x_1 F x_2 F\cdots$ $Fx_k F x_1$. Then by the definition of $F$, there is a special cycle $C=x_1\twoheadrightarrow x_2\twoheadrightarrow x_3\twoheadrightarrow\cdots\twoheadrightarrow x_k\twoheadrightarrow x_1$. By Corollary~\ref{propcycle}\ref{missonespecial}, we know that there exists $i\in\{3,4,\ldots,k-1\}$ such that $x_1\miss x_i$. Then by the definition of $F$, the fact that $x_1Fx_2$ implies that the missing edge $x_1\miss x_i$ is forced in the direction $x_1$ to $x_i$ and not forced in the direction $x_i$ to $x_1$. But since the only non-neighbor of $x_i$ is $x_1$, the fact that $x_iFx_{i+1}$ similarly implies that the missing edge $x_1\miss x_i$ is forced in the direction $x_i$ to $x_1$, which is a contradiction.
\end{proof}

\begin{lemma}\label{relation_miss}
Let $T$ be a safe completion of $D$ and let $x$ be the feed vertex of a median order $L$ of $T$. Suppose that $x$ is prime and there exists $y\in V(D)$ such that $xFy$. Then:
	\begin{myenumerate}
	\item\label{y'exists} there exists $y'\in V(D)$ such that $y\dashrightarrow y'\rightarrow x$ in $T$,
	\item\label{newmo} there exists a median order of $T$ having feed vertex $y$, and
	\item\label{yprime} $y$ is prime.
	\end{myenumerate} 
\end{lemma}
\begin{proof}
	By the definition of $xFy$, we have that $x\twoheadrightarrow y$ and that there exists $x'\in V(D)$ such that $y\rightarrow x'$, and $x\miss x'$ is a singly-forced missing edge that is forced in the direction from $x$ to $x'$. As $T$ is a safe completion, we then have $x\dashrightarrow x'$. Since $x$ is the feed vertex of $L$, $x\twoheadrightarrow y$ is a reverse special arc in $(T,L)$, where the only missing edge $x\miss x'$ that is incident on $x$ is oriented as $x\dashrightarrow x'$. This implies that condition~\ref{mis_in} of Lemma~\ref{rev_specl_arc} does not hold. Therefore, condition~\ref{mis_out} of the lemma must be true, i.e. there should exist $y'\in V(T)$ such that $y\dashrightarrow y'\rightarrow x$ in $T$. This proves~\ref{y'exists}. Now as exactly one of the conditions of the lemma is satisfied, if $L=(x_1,x_2,\ldots,x_n=x)$ and $y=x_i$, for some $i\in\{1,2,\ldots,n-1\}$, we have by the lemma that $L'= (x_1,x_2,\ldots, x_{i-1},x_{i+1},\ldots, x_n,x_i=y)$ is also a median order of $T$. Thus we have~\ref{newmo}.
	We shall now prove~\ref{yprime}. If $y$ is not prime, then we have that, $y\in \Gamma(Q)$ for some $Q\in \mathcal{C}$. Therefore by Corollary~\ref{gamma}\ref{missingingamma}, $y'\in \Gamma(Q)$. But we have a vertex $x\in V(D)\setminus \Gamma(Q)$ (as $x$ is prime) such that $x\rightarrow y$ and $y'\rightarrow x$. As $y,y'\in\Gamma(Q)$, this contradicts the fact that $\Gamma(Q)$ forms a module in $D$ (by Corollary~\ref{gamma}\ref{gammamodule}). Therefore we can conclude that $y$ is prime.
\end{proof}

\subsubsection*{Safe completions having maximum value}
We say that the \emph{value} of a tournament is the number of forward arcs in any median order of it. A \emph{maximum value safe completion} of $D$ is a safe completion $T$ of $D$ having the largest value among the safe completions of $D$. In other words, $T$ is a safe completion of $D$ with smallest \emph{feedback arc set} (a set of arcs whose removal makes $T$ acyclic). By Remark~\ref{safeexists}, we know that there always exists a maximum value safe completion of $D$.

\begin{lemma}\label{unforced}
Let $T$ be a maximum value safe completion of $D$. If $L=(x_1,x_2,\ldots,x_n)$ is a median order of $T$, then there cannot exist a missing edge $x_j\dashrightarrow x_i$, where $i<j$, such that $\{x_i,x_j\}$ is an isolated vertex of $\Delta_D$ and $x_i\miss x_j$ is unforced.
\end{lemma}
\begin{proof}
	Let $T'$ be the tournament obtained from $T$ by reversing the arc $x_j\dashrightarrow x_i$. By Lemma~\ref{reversing}, $L$ is a median order of $T'$ as well. Also, as $x_i\miss x_j$ is unforced and $\{x_i,x_j\}$ is an isolated vertex in $\Delta_D$, the tournament $T'$ is also a safe completion of $D$. This contradicts the fact that $T$ is a maximum value safe completion as $T'$ has higher value than $T$.
\end{proof}

\begin{lemma} \label{dingamma}
Let $T$ be a maximum value safe completion of $D$. Let $d$ be the feed vertex of a median order $L$ of $T$ and $d'\in V(T)$ be such that $d\dashrightarrow d'$ in $T$. If $d$ has no Type-I special in-neighbor then $d$ is not prime.
\end{lemma}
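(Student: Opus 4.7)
The plan is to suppose for contradiction that $d$ is prime and construct a sequence $d=d_0,d_1,d_2,\ldots$ of prime vertices, each the feed vertex of some median order of $T$ and each with an outgoing oriented missing edge in $T$, such that $d_iFd_{i+1}$ at every step; then show that extending the chain is always forced, which contradicts the fact that $F$ is acyclic (Lemma~\ref{relation_cycle}). Write $y_i'$ for the non-neighbor of $d_i$ (so that $d_i\dashrightarrow y_i'$ in $T$, with $y_0'=d'$). At each $d_i$, Lemma~\ref{noin-neighbor} will guarantee that $\{d_i,y_i'\}$ has no in-neighbor in $\Delta(G)$, so it is either isolated or starts a path in $\mathcal{P}$.

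The step from $d_i$ to $d_{i+1}$ splits into two cases. \textbf{Case A:} $\{d_i,y_i'\}$ is isolated in $\Delta(G)$. I will invoke Lemma~\ref{unforced} to rule out $d_i\miss y_i'$ being unforced, and Proposition~\ref{duallyforced} combined with Lemma~\ref{noin-neighbor} to rule out its being forced in both directions; hence $d_i\miss y_i'$ is singly-forced, and since $T$ is a safe completion orienting it as $d_i\dashrightarrow y_i'$, it must be forced in the direction $d_i$ to $y_i'$. This supplies a witness $v$ with $d_i\twoheadrightarrow v\rightarrow y_i'$, so $d_iFv$. Lemma~\ref{new_median} then provides a median order of $T$ with feed vertex $v$, and Lemma~\ref{relation_miss} shows $v$ is prime with $v\dashrightarrow y_v'\rightarrow d_i$ in $T$, letting the chain continue. \textbf{Case B:} $\{d_i,y_i'\}$ has an out-neighbor $\{c,c'\}$ in $\Delta(G)$; without loss of generality $(d_i,y_i')R(c,c')$. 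Since $d_i$ is prime, $\{d_i,y_i'\}$ is not in any cycle of $\mathcal{C}$, and neither is $\{c,c'\}$ (being on the same path in $\mathcal{P}$), so the safe completion rule will force $c\dashrightarrow c'$ in $T$. This gives $d_i\rightarrow c\dashrightarrow c'\twoheadrightarrow d_i$, exhibiting $c'$ as a Type-I special in-neighbor of $d_i$.

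For $d_0$, Case B directly contradicts the hypothesis that $d$ has no Type-I special in-neighbor. For $d_i$ with $i\geq 1$, I plan to convert Case B into a directed-triangle contradiction. Applying Lemma~\ref{uniqueab} to $d_i$ yields $a\in V(T)$ with $d_i\rightarrow a\dashrightarrow c'\twoheadrightarrow d_i$ and a vertex $w$ satisfying $a\twoheadrightarrow w\rightarrow c'$ and $w\miss d_i$; since $y_i'$ is the unique non-neighbor of $d_i$, $w=y_i'$, giving $a\twoheadrightarrow y_i'\rightarrow c'$ in $G$. I can rule out $a=d_{i-1}$ (which would give $d_i\rightarrow d_{i-1}$ clashing with $d_{i-1}\rightarrow d_i$) and $a=d_{i-1}'$ (which would force the non-neighbor $c'$ of $a$ to equal $d_{i-1}$, making $a\dashrightarrow c'$ equal to $d_{i-1}'\dashrightarrow d_{i-1}$ and contradicting $d_{i-1}\dashrightarrow d_{i-1}'$). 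Hence $a$ is a neighbor of $d_{i-1}$. Because $a\twoheadrightarrow y_i'$ is special, no vertex $z$ satisfies $y_i'\rightarrow z\rightarrow a$; using $y_i'\rightarrow d_{i-1}$ (from Lemma~\ref{relation_miss}), $d_{i-1}\rightarrow a$ is impossible, so $a\rightarrow d_{i-1}$. Combined with $d_{i-1}\twoheadrightarrow d_i\rightarrow a$, this produces the directed triangle $d_{i-1}\twoheadrightarrow d_i\rightarrow a\rightarrow d_{i-1}$ in $G$ containing the special arc $d_{i-1}\twoheadrightarrow d_i$, which is impossible.

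Therefore Case B cannot occur at any $d_i$, so Case A occurs at every step and the chain extends forever. Since $F$ is acyclic (Lemma~\ref{relation_cycle}), the vertices of the chain must be distinct, contradicting the finiteness of $V(G)$. This yields the desired contradiction, so $d$ is not prime. The hard part of the argument is the triangle step in Case B for $i\geq 1$: one has to carefully exclude $a=d_{i-1}$ and $a=d_{i-1}'$, and then leverage the specialness of $a\twoheadrightarrow y_i'$ together with the auxiliary arc $y_i'\rightarrow d_{i-1}$ (from Lemma~\ref{relation_miss}) to pin down the orientation $a\rightarrow d_{i-1}$.
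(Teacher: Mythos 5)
Your proposal is correct and follows essentially the same route as the paper: both build a chain of prime feed vertices linked by the relation $F$ using Lemmas~\ref{noin-neighbor}, \ref{unforced}, \ref{new_median} and \ref{relation_miss}, and both kill the ``out-neighbor in $\Delta(G)$'' case with the same pair of directed triangles through the previous vertex of the chain (you merely use one triangle to pin down $a\rightarrow d_{i-1}$ and the other to conclude, where the paper case-splits on that orientation directly). The only cosmetic differences are that the paper fixes a maximal $F$-chain and derives the contradiction at its endpoint rather than arguing the greedy chain never terminates, and that your detour through Lemma~\ref{uniqueab} is unnecessary since the four-cycle witnessing $(d_i,y_i')R(c,c')$ already supplies $c\twoheadrightarrow y_i'\rightarrow c'$.
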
 
\begin{proof}
Suppose for the sake of contradiction that $d$ has no Type-I special in-neighbor and $d$ is prime. Let $(y_1,y_2,\ldots,y_k)$ be a maximum length sequence of vertices in $D$ such that $y_1=d$ and $y_1Fy_2F\cdots Fy_k$ (note that we allow $k=1$, and therefore such a sequence always exists).
Since $F$ is acyclic as shown in Lemma~\ref{relation_cycle}, each vertex in $(y_1,y_2,\ldots,y_k)$ is distinct. By Lemma~\ref{relation_miss}\ref{yprime} and~\ref{newmo}, we know that if for some $i\in\{1,2,\ldots,k-1\}$, $y_i$ is prime and is the feed vertex of a median order $L_i$ of $T$, then $y_{i+1}$ is prime and is the feed vertex of a median order $L_{i+1}$ of $T$. Thus, since $y_1=d$ is prime and is the feed vertex of a median order $L_1=L$ of $T$, we have by induction on $i$ that for each $i\in\{1,2,\ldots,k\}$, $y_i$ is prime and is the feed vertex of a median order $L_i$ of $T$. We first show that $k\geq 2$.

By Lemma~\ref{noin-neighbor}, the missing edge $\{d,d'\}$ does not have an in-neighbor in $\Delta_D$. Now, suppose that $\{d,d'\}$ has an out-neighbor, say $\{a,a'\}$ in $\Delta_D$. Without loss of generality, by the definition of $\Delta_D$, we can assume that $(d,d')R(a,a')$. That is, there exists the four cycle $d\rightarrow a\twoheadrightarrow d'\rightarrow a'\twoheadrightarrow d$. As $d$ is prime, the missing edge $\{d,d'\}$ does not lie on any cycle in $\mathcal{C}$. Since $T$ is a safe completion, $d\dashrightarrow d'$ and $(d,d')R(a,a')$, we have $a\dashrightarrow a'$ in $T$. Then, $d\rightarrow a\dashrightarrow a'\twoheadrightarrow d$, implying that $a'$ is a Type-I special in-neighbor of $d$, which is a contradiction. Therefore, we can conclude that the missing edge $\{d,d'\}$ is an isolated vertex in $\Delta_D$. Then, since $d\dashrightarrow d'$ and $d'$ occurs before $d$ in $L$, by Lemma~\ref{unforced}, we have that $d\miss d'$ is not an unforced missing edge. Now, if $d\miss d'$ is forced in both directions, by Lemma~\ref{duallyforced}, we have that $\{d,d'\}$ has an in-neighbor in $\Delta_D$, which is a contradiction. Therefore, we can conclude that $d\miss d'$ is singly-forced in $D$.	
As $d\dashrightarrow d'$ in $T$ and $T$ is a safe completion, it should be the case that $d\miss d'$ is singly-forced in the direction $d$ to $d'$ in $D$. Then by Definition~\ref{singly_forced}, there exists $v\in V(D)$ such that, $d\twoheadrightarrow v\rightarrow d'$. This together with the assumption that $d$ is prime implies that $dFv$. This shows that $k\geq 2$ (recall that $y_1=d$), which implies that $y_{k-1}$ exists.

As $y_{k-1}$ is the feed vertex of $L_{k-1}$ and $y_{k-1}Fy_k$, by Lemma~\ref{relation_miss}\ref{y'exists}, we have that there exists $y_k'\in V(T)$ such that $y_k\dashrightarrow y_k'\rightarrow y_{k-1}$ in $T$.
Since $y_k$ is the feed vertex of $L_k$, $y_k$ is prime, and $y_k\dashrightarrow y_k'$, by Lemma~\ref{noin-neighbor}, we have that $\{y_k,y_k'\}$ has no in-neighbor in $\Delta_D$. Now, suppose that $\{y_k,y_k'\}$ has an out-neighbor $\{b,b'\}$ in $\Delta_D$. Then, without loss of generality, we can assume that $(y_k,y_k')R(b,b')$, i.e. there exists the four cycle $y_k\rightarrow b\twoheadrightarrow y_k'\rightarrow b'\twoheadrightarrow y_k$. As $y_k$ is prime, $\{y_k,y'_k\}$ does not lie on any cycle in $\mathcal{C}$. Therefore, since $T$ is a safe completion, $y_k\dashrightarrow y_k'$, and $(y_k,y_k')R(b,b')$, we have that $b\dashrightarrow b'$. Since $y_{k-1}Fy_k$, there exists a vertex $y'_{k-1}$ such that $y_{k-1}\miss y'_{k-1}$ is singly-forced in the direction from $y_{k-1}$ to $y'_{k-1}$. As $T$ is a safe completion, this means that $y_{k-1}\dashrightarrow y'_{k-1}$. As the missing edge incident on $b'$ is oriented towards $b'$ in $T$, this implies that $b'\neq y_{k-1}$. Clearly, $b\neq y_{k-1}$ (as $y_{k-1}\twoheadrightarrow y_k$, but $y_k\rightarrow b$). Recalling that $b\miss b'$, we now have that either $b\rightarrow y_{k-1}$ or $y_{k-1}\rightarrow b$. Now if $b\rightarrow y_{k-1}$, then $b\rightarrow y_{k-1} \twoheadrightarrow y_k\rightarrow b$ would form a directed triangle containing a special arc and if $y_{k-1}\rightarrow b$, then $y_{k-1}\rightarrow b\twoheadrightarrow y_k'\rightarrow y_{k-1}$ would form a directed triangle containing a special arc. As we have a contradiction in both cases, $\{y_k,y'_k\}$ has no out-neighbor in $\Delta_D$. Therefore, $\{y_k,y_k'\}$ is an isolated vertex in $\Delta_D$.
	
By Lemma~\ref{unforced} applied on the median order $L_k$ of $T$, we have that $y_k\miss y_k'$ is not an unforced missing edge. As  $\{y_k,y_k'\}$ has no in-neighbor in $\Delta_D$, by Lemma~\ref{duallyforced}, $y_k\miss y_k'$ is not forced in both directions. Therefore, we can conclude that, $y_k\miss y_k'$ is singly-forced. As $y_k\dashrightarrow y'_k$ and $T$ is a safe completion, we know that $y_k\miss y'_k$ is forced in the direction $y_k$ to $y_k'$, i.e. there exists a vertex $u$ such that $y_k\twoheadrightarrow u\rightarrow y_k'$. As $y_k$ is prime, this further implies that $y_k F u$.
Then $(y_1,y_2,\ldots,y_k,y_{k+1}=u)$ is a sequence longer than $(y_1,y_2,\ldots,y_k)$ with the property that $y_1=d$ and $y_1Fy_2F\cdots Fy_{k+1}$. This contradicts our choice of the sequence $(y_1,y_2,\ldots,y_k)$.

\end{proof} 
\begin{lemma}\label{stable2}
Let $T$ be a maximum value safe completion of $D$ and let $L$ be a median order of $T$ having feed vertex $d$. If $|N^{++}_T(d)\setminus I(d)|> |N^+_T(d)\setminus I(d)|$, then $d$ has a large second neighborhood in $H$.
\end{lemma}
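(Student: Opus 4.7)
The plan is to dispose of all but one case using Lemma~\ref{stable1}, and to handle the residual case by combining Lemma~\ref{dingamma} with the module structure on $I(d)$. First, if $z \notin N^+_H(d)$ we are done at once by Lemma~\ref{twovertices}\ref{xnoutneigh}, so we may assume $z \in N^+_H(d)$. Second, Lemma~\ref{stable1} already treats both the subcase in which $d$ has no special in-neighbor and the subcase in which $d$ has a Type-I special in-neighbor, so the only remaining situation is that $d$ has at least one special in-neighbor, all of which are Type-II.

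In that residual case, Definition~\ref{property} (applied to a Type-II special in-neighbor) supplies a vertex $d'$ with $d \dashrightarrow d'$ in $T$; necessarily $d'$ is the unique non-neighbor of $d$ in $G$. Since $d$ has no Type-I special in-neighbor, Lemma~\ref{dingamma} now forces $d$ to be non-prime. Consequently $d \in \Gamma(Q)$ for some $Q \in \mathcal{C}$, and $I(d) = \Gamma(Q)$ is a module in $G$ of size at least $2$ by Corollary~\ref{gamma}\ref{gammamodule}. Lemma~\ref{feedcycle} therefore gives $N^{++}_T(d) \setminus I(d) \subseteq N^{++}_G(d) \setminus I(d)$, which is the key ingredient needed to transfer information from $T$ to $G$ outside the module.

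To finish, I would split $|N^+_H(d)|$ along the partition $\{I(d), V(G) \setminus I(d)\}$. Inside $I(d)$, Remark~\ref{sncmodule} trades $|N^+_G(d) \cap I(d)|$ for $|N^{++}_G(d) \cap I(d)|$ at no cost. Outside $I(d)$, the chain $|N^+_G(d) \setminus I(d)| \leq |N^+_T(d) \setminus I(d)| < |N^{++}_T(d) \setminus I(d)| \leq |N^{++}_G(d) \setminus I(d)|$ follows from the containment $N^+_G(d) \subseteq N^+_T(d)$, the hypothesis of the lemma, and Lemma~\ref{feedcycle}; the strict middle inequality produces a surplus of at least one, precisely enough to absorb the extra arc $d \to z$ reflected in $|N^+_H(d)| = |N^+_G(d)| + 1$. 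Summing the inside and outside contributions and using $N^{++}_G(d) \subseteq N^{++}_H(d)$ yields $|N^+_H(d)| \leq |N^{++}_H(d)|$.

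The only non-routine step is the prime/non-prime dichotomy supplied by Lemma~\ref{dingamma}: without it, a Type-II special in-neighbor of a \emph{prime} $d$ would break the argument, because for prime $d$ the module $I(d) = \{d\}$ carries no information outside, so there is no way to benefit from the strict gap in the hypothesis. Once that lemma rules out the prime case, the remainder is an arithmetic assembly of already established facts, and the key leverage is exactly the unit of slack in $|N^{++}_T(d) \setminus I(d)| > |N^+_T(d) \setminus I(d)|$ absorbing the missing vertex $z$.
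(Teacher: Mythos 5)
Your proposal is correct and follows essentially the same route as the paper: reduce to the all-Type-II case via Lemma~\ref{stable1}, use Lemma~\ref{dingamma} to rule out primeness, and then split the count along $I(d)$ using Remark~\ref{sncmodule} inside and the strict hypothesis outside to absorb the arc to $z$. The one small difference is that you obtain $N^{++}_T(d)\setminus I(d)\subseteq N^{++}_G(d)\setminus I(d)$ by citing Lemma~\ref{feedcycle} directly, whereas the paper re-derives this inline (showing every special in-neighbor of $d$ lies in the module $\Gamma(Q)$ and then invoking Lemma~\ref{good_vetices}); your citation is legitimate and slightly cleaner.
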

\begin{proof}
	If $z\notin N^+_H(d)$, then we are done by Lemma~\ref{twovertices}\ref{xnoutneigh}. So we can assume that $z\in N^+_H(d)$.
	
	If $d$ has no special in-neighbors or has a special in-neighbor of Type-I, then we are done by Lemma~\ref{stable1}. Therefore, we shall assume that $d$ has no Type-I special in-neighbors but has at least one Type-II special in-neighbor.
	Let $x$ be any Type-II special in-neighbor of $d$. Then there exists $d'\in V(T)$ such that $d\dashrightarrow d'\rightarrow x\twoheadrightarrow d$ in $T$. As $d$ has no Type-I special in-neighbors, by Lemma~\ref{dingamma}, we get that $d$ is not prime, i.e.  $I(d)=\Gamma(Q)$ for some $Q\in \mathcal{C}$. Therefore, by Corollary~\ref{gamma}\ref{missingingamma}, $d'\in \Gamma(Q)$. Now suppose that $x\notin \Gamma(Q)$. Then since $d'\rightarrow x$, $x\rightarrow d$ and $d,d'\in \Gamma(Q)$, we have a contradiction to the fact that $\Gamma(Q)$ is a module in $D$ (by Corollary~\ref{gamma}\ref{gammamodule}). Therefore, every special in-neighbor of $d$ is contained in $\Gamma(Q)=I(d)$; in other words, there are no special in-neighbors of $d$ in $N^{++}_T(d)\setminus I(d)$. Then by Lemma~\ref{good_vetices}, we have that $N^{++}_T(d)\setminus I(d) \subseteq N^{++}_D(d)\setminus I(d)$. By Remark~\ref{sncmodule}, we have $|N^+_D(d)\cap I(d)|= |N^{++}_D(d)\cap I(d)|$.
	Combining our observations, we get
	\begin{eqnarray*}
		|N^+_H(d)| &=& |N^+_D(d)|+1\mbox{ (as $z\in N^+_H(d)$)}\\
		&=& |N^+_D(d)\setminus I(d)|+ |N^+_D(d)\cap I(d)|+1 \\
		&\leq& |N^+_T(d)\setminus I(d)|+ |N^+_D(d)\cap I(d)|+1\quad (\text{since }N^+_D(d)\subseteq N^+_T(d))\\
		&\leq& |N^{++}_T(d)\setminus I(d)| + |N^{++}_D(d)\cap I(d)| \quad (\text{as } |N^{++}_T(d)\setminus I(d)|> |N^+_T(d)\setminus I(d)|)\\
		&\leq& |N^{++}_D(d)\setminus I(d)|+ |N^{++}_D(d)\cap I(d)| \quad (\text{as } N^{++}_T(d)\setminus I(d) \subseteq N^{++}_D(d)\setminus I(d))\\
		&=& |N^{++}_D(d)|\\
		&\leq& |N^{++}_H(d)|
	\end{eqnarray*}
	Therefore, $d$ has a large second neighborhood in $H$.
\end{proof}
\medskip

We are now ready to give a formal proof of Theorem~\ref{finalthm}.

\subsubsection*{Proof of Theorem~\ref{finalthm}}
\begin{proof}
Let $T$ be a maximum value safe completion of $D$.
By Lemma~\ref{module}, there exists a good median order $L$ of $T$ with respect to $\mathcal{I}(D)$.
If $L$ is periodic, then we are done by Lemma~\ref{periodic}. Therefore, we can assume that $L$ is stable.
Then, by the definition of a stable median order, there exists an integer $q\geq 0$ such that $Sed^{q+1}_{\mathcal{I}(D)}(L) = Sed^q_{\mathcal{I}(D)}(L)$. By Theorem~\ref{sedimentation}, $L'=Sed^q_{\mathcal{I}(D)}(L)$ is a median order of $T$. Let $d$ be the feed vertex of $L'$. By Theorem~\ref{mainthm}, $d$ has a large second neighborhood in $D$. As $Sed_{\mathcal{I}(D)}(L')=L'$, we have $|N^{++}_T(d)\setminus I(d)|> |N^+_{T}(d)\setminus I(d)|$. We can then conclude by Lemma~\ref{stable2} that $d$ has a large second neighborhood in $H$ as well.
\end{proof}
\begin{corollary}\label{cornearmatching}
Every oriented graph whose missing edges can be partitioned into a matching and a star contains a vertex with a large second neighborhood.
\end{corollary}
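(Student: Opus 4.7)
The plan is to derive Corollary~\ref{cornearmatching} as a short case analysis reducing to the previously established results Theorem~\ref{finalthm} and Corollary~\ref{maincor}. Let $H$ be an oriented graph whose missing edges can be partitioned into a matching $M$ and a star $S$.

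First, I would dispose of the trivial case in which $H$ contains a sink: a sink $v$ satisfies $|N_H^+(v)|=0\leq |N_H^{++}(v)|$, so we are done immediately. Thus we may assume $H$ has no sink. Next, if the star $S$ is empty then the missing edges of $H$ form the matching $M$, and Corollary~\ref{maincor} (the Fidler--Yuster result reproved in Section~\ref{sec:main}) directly provides a vertex with a large second neighborhood.

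The main case is when $S$ is non-empty; let $z$ be the center of $S$. The key observation is that every edge of $S$ is incident with $z$, so in the induced oriented graph $G=H-\{z\}$ all star-type missing edges disappear. The missing edges of $G$ are therefore exactly the edges of $M$ that are not incident with $z$, which still form a matching. Hence $G$ is a tournament missing a matching in the precise sense required by Theorem~\ref{finalthm}. Since $H$ has no sink, Theorem~\ref{finalthm} applied to $(H,z)$ yields a vertex $v\in V(G)$ which has a large second neighborhood in $H$ (and in $G$), completing the proof.

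There is essentially no obstacle here; the whole work lies in Theorem~\ref{finalthm}. The only minor subtlety to double-check is the structural claim that $G=H-\{z\}$ is indeed a tournament missing a matching when $z$ is chosen as the center of the star, but this is immediate from the partition $M\cup S$ of the missing edges together with the fact that every element of $S$ uses $z$ as an endpoint. Once that is observed, the corollary follows in one line from Theorem~\ref{finalthm}.
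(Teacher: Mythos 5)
Your proposal is correct and matches the paper's intended derivation: the paper states immediately after Theorem~\ref{finalthm} that one disposes of the sink case trivially and otherwise applies Theorem~\ref{finalthm} with $z$ taken to be the center of the missing star, exactly as you do. Your separate treatment of the empty-star case via Corollary~\ref{maincor} is a harmless (and valid) extra branch, and your observation that the missing edges of $H-\{z\}$ are a subset of $M$ and hence still a matching is the same structural point the paper relies on.
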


\begin{corollary}\label{cortwogood}
	Every oriented graph whose missing edges form a matching and does not contain a sink contains at least two vertices with large second neighborhoods.
\end{corollary}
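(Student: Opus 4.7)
The plan is to derive Corollary~\ref{cortwogood} as an essentially immediate double application of Theorem~\ref{finalthm}. Let $H$ be an oriented graph whose missing edges form a matching $M$ and which contains no sink. The key observation is that for any vertex $w \in V(H)$, the graph $H - \{w\}$ is itself an oriented graph whose missing edges form a matching, namely the subset of $M$ whose endpoints both lie in $V(H) \setminus \{w\}$. So the hypotheses of Theorem~\ref{finalthm} are in place for every choice of $z$.

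First, I would apply Theorem~\ref{finalthm} to $H$ directly with any choice of $z \in V(H)$ (for instance, pick any $z$ whatsoever). Setting $G = H - \{z\}$, the theorem produces a vertex $v_1 \in V(G) \subseteq V(H)$ that has a large second neighborhood in $H$. (Alternatively, one could obtain such a $v_1$ from Corollary~\ref{maincor} applied to $H$; either route suffices.)

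Next I would apply Theorem~\ref{finalthm} a second time, but now taking $z = v_1$. This is legal because $H$ still has no sink (a hypothesis on $H$ alone, untouched by the choice of $z$) and because $G' = H - \{v_1\}$ is once again an oriented graph whose missing edges form a matching. The theorem then yields a vertex $v_2 \in V(G')$ having a large second neighborhood in $H$. Since $v_2 \in V(G') = V(H) \setminus \{v_1\}$, the vertex $v_2$ is distinct from $v_1$, and together $v_1, v_2$ are two vertices with large second neighborhoods in $H$, as required.

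There is really no obstacle here beyond checking that the two hypotheses of Theorem~\ref{finalthm} (no sink in $H$, and $H$ minus one vertex being a tournament missing a matching) survive when we substitute $z = v_1$. Both survive trivially: removing a vertex from a matching leaves a matching, and the ``no sink'' condition is a property of $H$ itself and is not affected by which vertex we name $z$. Thus the corollary follows from two straightforward invocations of the main theorem.
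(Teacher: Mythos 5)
Your proposal is correct and follows essentially the same route as the paper: the paper obtains the first vertex $v_1$ via Corollary~\ref{maincor} (the alternative you mention) and then applies Theorem~\ref{finalthm} with $z=v_1$ to extract a second, distinct vertex, exactly as you do. Your observation that deleting any vertex from a graph whose missing edges form a matching again leaves a matching is precisely the point the paper relies on, so no further justification is needed.
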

\begin{proof}
	Let $H$ be an oriented graph whose missing edges form a matching and does not contain a sink. By Theorem~\ref{mainthm}, we know that there exists a vertex $z$ in $H$ with a large second neighborhood. As $H-\{z\}$ is an oriented graph whose missing edges form a matching, by Theorem~\ref{finalthm}, we can infer that there exists a vertex $z'\in V(H)\setminus\{z\}$ that has a large second neighborhood in $H$.
\end{proof}

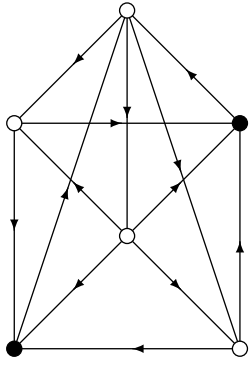
\begin{figure}[h]
\newcommand{\myptr}{{\arrow{stealth}}}
\renewcommand{\vertexset}{(a,0,0,black),(b,3,0),(c,3,3,black),(d,0,3),(x,1.5,1.5),(z,1.5,4.5)}
\renewcommand{\edgeset}{(b,a),(d,a),(x,a),(a,z),(b,c),(x,b),(z,b),(d,c),(x,c),(c,z),(x,d),(z,d),(z,x)}
\renewcommand{\defradius}{.1}
\renewcommand{\isdirected}{myptr}
\renewcommand{\arrowplacement}{0.475}
\begin{center}
\begin{tikzpicture}
\drawgraph
\end{tikzpicture}
\end{center}
\caption{A tournament missing a matching with no sink and exactly two vertices with large second neighborhoods (shown in black).}\label{fig}
\end{figure}

The graph shown in Figure~\ref{fig} is a tournament missing a matching without a sink that contains exactly two vertices with large second neighborhoods. Therefore, Corollary~\ref{cortwogood} is tight.

\section{Conclusion}

The question of whether there exists two vertices with large second neighborhoods in any oriented graph without a sink seems to be open.

\begin{conjecture}\label{suresh1}
Any oriented graph without a sink contains at least two vertices with large second neighborhoods.
\end{conjecture}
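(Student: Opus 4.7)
The plan is a strong induction on $|V(G)|$, relying on the Second Neighborhood Conjecture (or a restricted version of it) to supply one LSN vertex at each inductive step and then producing a second LSN vertex from local structural considerations. Note that Conjecture~\ref{suresh1} implies Conjecture~\ref{ssnc}: if $G$ has a sink it is an LSN vertex trivially, and otherwise Conjecture~\ref{suresh1} hands us two LSN vertices. So the approach is necessarily conditional on access to an LSN vertex, and the novelty of the argument would lie in extracting a \emph{second} one. Small cases can be dealt with by direct inspection.

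For the inductive step, let $G$ be an oriented graph without a sink and let $v$ be an LSN vertex of $G$ (which exists by Conjecture~\ref{ssnc}). Set $G' = G - \{v\}$. If $G'$ contains a sink $u$, then the only possible out-neighbor of $u$ in $G$ lies in $\{v\}$, and since $G$ has no sink we must have $u \to v$ with $|N_G^+(u)| = 1$. Since $G$ is an oriented graph, $v \not\to u$, hence every out-neighbor of $v$ in $G$ is a second out-neighbor of $u$, yielding $|N_G^{++}(u)| \geq |N_G^+(v)| \geq 1 = |N_G^+(u)|$. Thus $u$ is a second LSN vertex distinct from $v$ and we are done.

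The substantive case is when $G'$ has no sink, in which the induction hypothesis yields two LSN vertices $u_1, u_2$ of $G'$. A key observation is that whenever $u_i \not\to v$ in $G$, we have $N_G^+(u_i) = N_{G'}^+(u_i)$ and $N_G^{++}(u_i) \supseteq N_{G'}^{++}(u_i)$, so $u_i$ is automatically LSN in $G$; similarly, if $u_i \to v$ but $u_i$ is LSN in $G'$ with strict inequality $|N_{G'}^{++}(u_i)| > |N_{G'}^+(u_i)|$, then $u_i$ retains enough slack to remain LSN in $G$. The only genuine obstruction is when both $u_1, u_2 \in N_G^-(v)$ and each is LSN in $G'$ only with equality. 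To handle this residual case one would re-choose $v$ (picking a different LSN vertex of $G$ that is disjoint from $N_G^+(u_1) \cup N_G^+(u_2)$ when one exists), iterate the argument on $G - \{v, u_1\}$, or strengthen the inductive hypothesis so that $G'$ supplies enough LSN vertices to guarantee that some of them lie outside $N_G^-(v)$.

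The main obstacle is precisely this tight sub-case, together with the broader issue that the median-order and safe-completion machinery of Section~\ref{sec:nearmatching} would have to be extended well beyond the matching-plus-star regime. A more structural approach would be to pick a tournament completion $T$ of $G$, apply the Havet--Thomasse result that a sinkless tournament admits two LSN feed vertices obtained through sedimentation, and then certify that these feed vertices remain LSN in $G$. Without a matching-like restriction on the missing edges, the relation $R$, the digraph $\Delta(G)$, and the very notion of a safe completion have no obvious analogue, and devising the right replacement is where most of the work would concentrate.
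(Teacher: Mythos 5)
This statement is Conjecture~\ref{suresh1}, which the paper poses as an \emph{open problem}: it offers no proof of it (it proves only the special case of oriented graphs whose missing edges form a matching, Corollary~\ref{cortwogood}, and shows that Conjecture~\ref{suresh} is equivalent to Conjecture~\ref{suresh1} and implies Conjecture~\ref{ssnc}). So there is no argument in the paper to compare yours against, and your proposal must be judged on its own terms --- and on those terms it is not a proof. Two gaps are fatal. First, the induction is conditional on Conjecture~\ref{ssnc} to supply the vertex $v$ at every step; since Conjecture~\ref{ssnc} is itself open, the most your scheme could possibly yield is the implication ``Conjecture~\ref{ssnc} implies Conjecture~\ref{suresh1}'', not the unconditional statement. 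Second, even that implication is not established: the residual case in which both $u_1,u_2\in N^-_G(v)$ and each satisfies $|N^{++}_{G'}(u_i)|=|N^+_{G'}(u_i)|$ with equality is left open, and none of your three proposed repairs goes through as stated. Re-choosing $v$ requires a second LSN vertex of $G$ lying outside $N^+_G(u_1)\cup N^+_G(u_2)$, which is essentially the conclusion you are trying to prove; iterating on $G-\{v,u_1\}$ changes which pair of LSN vertices the induction hypothesis delivers, so the obstruction simply recurs with new actors; and ``strengthening the inductive hypothesis'' is named but never formulated.

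The parts of your argument that do work are correct and worth recording: the case where $G-\{v\}$ has a sink $u$ is handled cleanly (then $N^+_G(u)=\{v\}$ and $N^+_G(v)\subseteq N^{++}_G(u)$, the same computation the paper uses inside its proof that Conjecture~\ref{suresh} implies Conjecture~\ref{ssnc}), as is the observation that an LSN vertex of $G-\{v\}$ survives into $G$ whenever it does not dominate $v$, or dominates $v$ but has slack. But the tight sub-case is exactly where the difficulty of the conjecture lives --- it is the analogue of the equality situations that force the paper to develop sedimentation, safe completions and the relation $F$ even in the matching-plus-star regime --- and your proposal does not contain an idea that resolves it. You correctly identify in your last paragraph that the completion-based machinery has no obvious analogue for general oriented graphs; that diagnosis is accurate, but it is a description of the open problem rather than a step toward solving it.
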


Clearly, Conjecture~\ref{suresh1} implies Conjecture~\ref{ssnc} (the Second Neighborhood Conjecture). We propose the following conjecture, which though apparently weaker at first sight, can be shown to be equivalent to Conjecture~\ref{suresh1}.

\begin{conjecture}\label{suresh}
If an oriented graph contains exactly one vertex with a large second neighborhood, then that vertex is a sink.
\end{conjecture}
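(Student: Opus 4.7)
The plan is to prove the conjecture by contradiction. Suppose $G$ is an oriented graph in which $v$ is the unique vertex with a large second neighborhood, and suppose $v$ is not a sink. A contradiction will be derived by producing a second vertex $w \neq v$ with $|N^{++}(w)| \geq |N^+(w)|$. As a preliminary reduction, any sink trivially has a large second neighborhood, so uniqueness forces it to equal $v$; since $v$ is not a sink by assumption, $G$ has no sink at all.

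The natural strategy mimics the template of Theorem~\ref{finalthm}. Choose a safe completion $T$ of $G$ (suitably generalized beyond the matching case) and a good median order $L$ of $T$ with respect to a partition $\mathcal{I}(G)$ of $V(G)$ into appropriately defined modules. The feed vertex $d$ of $L$ has a large second neighborhood in $T$ by Theorem~\ref{thmht}; controlling the gap between $N^{++}_T(d)$ and $N^{++}_G(d)$ via a generalization of the special-arc machinery of Section~\ref{sec:main} should show that $d$ has a large second neighborhood in $G$, forcing $d = v$ by uniqueness. Then apply sedimentation. If the sequence $L, \mathrm{Sed}_{\mathcal{I}(G)}(L), \mathrm{Sed}^2_{\mathcal{I}(G)}(L), \ldots$ is periodic, successive distinct feed vertices each have a large second neighborhood in $G$, immediately contradicting uniqueness. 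If the sequence is stable with limit $L^*$ whose feed vertex $d^* = v$, then $|N^{++}_T(d^*) \setminus I(d^*)| > |N^+_T(d^*) \setminus I(d^*)|$, and an analogue of Lemma~\ref{stable2} applied to $G - \{v\}$ (which is sink-free, or has its unique sink forced to be a specific out-neighbor of $v$) should produce a second large-second-neighborhood vertex and close the contradiction.

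The principal obstacle is that the structural machinery of Section~\ref{sec:main}---Lemmas~\ref{indegree}, \ref{cycle_cross_mod}, and \ref{uniqueab}, the auxiliary digraph $\Delta(G)$, the definition of a safe completion, and the Type-I/Type-II classification of special in-neighbors---depends critically on each vertex of $G$ having at most one non-neighbor, i.e., on the missing edges forming a matching. In full generality, $\Delta(G)$ can have unbounded in- and out-degrees, safe completions are not canonically defined, and a single vertex may host many Type-I special in-neighbors of different origins. Extending these structural lemmas to arbitrary oriented graphs appears to require genuinely new ideas. Moreover, Conjecture~\ref{suresh} is at least as strong as the Second Neighborhood Conjecture itself, so any full proof must in particular resolve Conjecture~\ref{ssnc}. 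A realistic staged version of this plan is therefore to first prove Conjecture~\ref{suresh} in each class where Conjecture~\ref{ssnc} is already known: tournaments (where Havet-Thomasse already yields two such vertices in the sink-free case), the minimum-out-degree-at-most-six class of Kaneko-Locke, tournaments missing a generalized star, the $2$-degenerate-plus-independent class of Section~\ref{2-deg}, and the matching-plus-star class of the present paper, where Corollary~\ref{cortwogood} and Theorem~\ref{finalthm} already handle the desired statement.
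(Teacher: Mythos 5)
The statement you were asked to prove is not a theorem of the paper at all: it is Conjecture~\ref{suresh}, posed as an open problem in the Conclusion. The paper proves only that it implies Conjecture~\ref{ssnc} (and hence that it is equivalent to Conjecture~\ref{suresh1}); the authors explicitly state that they do not know whether it holds even for the class studied in Section~\ref{2-deg}. So there is no paper proof to compare against, and your proposal---which you candidly present as a plan rather than a proof---does not close the gap either. Your own diagnosis identifies the obstruction correctly: every structural ingredient you want to reuse (the auxiliary digraph $\Delta(G)$ with in- and out-degree at most one, safe completions, the uniqueness of the Type-I special in-neighbor in Lemma~\ref{uniqueab}, the module structure of special cycles) is proved under the hypothesis that every vertex has at most one non-neighbor, i.e., that the missing edges form a matching, with one star center handled separately. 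None of this survives in an arbitrary oriented graph, and since any proof of Conjecture~\ref{suresh} would resolve the Second Neighborhood Conjecture itself, no routine extension of the paper's machinery can suffice; the "genuinely new ideas" you invoke are exactly what is missing.

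One concrete overclaim in your fallback plan should also be flagged. Knowing Conjecture~\ref{ssnc} for a class yields only \emph{one} vertex with a large second neighborhood, which is not enough: under your reduction (uniqueness plus ``not a sink'' forces the graph to be sink-free), what you need is a \emph{two}-vertex theorem for sink-free members of the class. Such a theorem is available exactly where the paper or Havet--Thomass\'e supplies it: tournaments, tournaments missing a matching (Corollary~\ref{cortwogood}), and the matching-plus-star setting via Theorem~\ref{finalthm}. For the Kaneko--Locke class of minimum out-degree at most six and for the $2$-degenerate-plus-independent class of Section~\ref{2-deg}, no two-vertex theorem is known---Section~\ref{2-deg} produces a single vertex by a counting argument that gives no control over a second one---and the paper explicitly records that the conjecture is open for that class. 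So even your staged version remains a list of open problems outside the classes the paper already covers.
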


It is easy to see that Conjecture~\ref{suresh1} implies Conjecture~\ref{suresh}.

\begin{proposition}
Conjecture~\ref{suresh} implies Conjecture~\ref{ssnc}.
\end{proposition}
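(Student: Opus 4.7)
The plan is to prove the contrapositive: assume that the Second Neighborhood Conjecture fails, and construct from any counterexample an oriented graph that contains exactly one vertex with a large second neighborhood but in which that vertex is not a sink, thereby contradicting Conjecture~\ref{suresh}.

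First I would observe that any counterexample $G$ to Conjecture~\ref{ssnc} cannot contain a sink, because a sink has empty out-neighborhood and hence trivially a large second neighborhood. In particular, every vertex of $G$ has out-degree at least one. I would then construct $G'$ by attaching a single new vertex $w$ to $G$ with exactly one out-arc $w \to u$, where $u$ is an arbitrarily chosen vertex of $G$, and with no other arcs incident to $w$. Since $w$ is a fresh vertex, $G'$ is clearly an oriented graph.

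The key verification has two parts. For the new vertex $w$, we have $N^+_{G'}(w)=\{u\}$, and since $w$ has no in-arcs and $w \notin N^+_G(u)$, the set $N^{++}_{G'}(w)$ equals $N^+_G(u)$, which is nonempty by the previous observation; hence $|N^{++}_{G'}(w)|\geq 1=|N^+_{G'}(w)|$, so $w$ has a large second neighborhood in $G'$. For any $v\in V(G)$, no arcs out of $v$ were added, so $N^+_{G'}(v)=N^+_G(v)$; moreover, since $w$ has in-degree zero in $G'$, $w$ cannot appear in $N^{++}_{G'}(v)$, and the out-neighborhoods of vertices in $V(G)$ are unchanged, so $N^{++}_{G'}(v)=N^{++}_G(v)$. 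Because $v$ fails to have a large second neighborhood in $G$, it fails to have one in $G'$.

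Consequently $w$ is the unique vertex of $G'$ with a large second neighborhood, yet $w$ is not a sink since $w \to u$, directly contradicting Conjecture~\ref{suresh}. There is no genuine obstacle in this argument; the only point requiring care is the trivial-but-necessary observation that $w$ has in-degree zero, which ensures both that it does not contaminate the second out-neighborhoods of vertices of $G$ and that out-neighbors of $V(G)$ inherit exactly their $G$-out-neighborhoods.
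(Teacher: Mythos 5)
Your proof is correct, and it takes a genuinely different route from the paper's. The paper argues via a \emph{minimal} counterexample $G$ to Conjecture~\ref{ssnc}: it deletes an arc $(u,v)$, uses minimality to get a vertex with a large second neighborhood in $G-\{(u,v)\}$, invokes Conjecture~\ref{suresh} to upgrade this to \emph{two} such vertices (after ruling out the possibility that the deletion created a sink, which would force $u$ to already have a large second neighborhood in $G$), and then transfers one of them back to $G$. You instead take an \emph{arbitrary} counterexample and \emph{add} a fresh source vertex $w$ of out-degree one; since $w$ has in-degree zero it perturbs neither the out-neighborhoods nor the second out-neighborhoods of the old vertices, while $w$ itself acquires a large second neighborhood precisely because $G$ is sink-free, so the augmented graph has exactly one such vertex and it is not a sink. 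Your construction is more elementary -- it needs no minimality hypothesis and no case analysis about accidentally creating a sink -- and it isolates the single fact that matters, namely that a source of out-degree one in a sink-free graph always has a large second neighborhood. The paper's deletion argument is slightly more self-contained in spirit (it stays inside subgraphs of the counterexample rather than enlarging it), and it incidentally exhibits the equivalence with Conjecture~\ref{suresh1}, but for the implication as stated both arguments are complete and your verification of the two key facts (that $w$ gains a large second neighborhood and that no old vertex does) is carried out correctly.
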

\begin{proof}
Suppose that Conjecture~\ref{suresh} is true but Conjecture~\ref{ssnc} is not. Let $D$ be a minimal counterexample to Conjecture~\ref{ssnc}: i.e., $D$ is an oriented graph with minimum number of vertices and edges in which no vertex has a large second neighborhood. In particular, $D$ cannot have a sink. Let $(u,v)\in E(D)$. Consider the graph $D'$ obtained by removing the edge $(u,v)$ from $D$. As $D$ is a minimal counterexample to Conjecture~\ref{ssnc}, we know that $D'$ contains at least one vertex with a large second neighborhood. We claim that $D'$ contains at least two vertices with large second neighborhoods. Suppose not. Then by Conjecture~\ref{suresh}, $D'$ has a sink in it. As there is no sink in $D$ and every vertex other than $u$ has the same out-neighborhood in both $D$ and $D'$, this means that $u$ must be a sink in $D'$. Then, $N^+_D(u)=\{v\}$ and $N^+_D(v)\subseteq N^{++}_D(u)$. Since $v$ is not a sink in $D$, we also have that $|N^+_D(v)|\geq 1$, which gives us $|N^{++}_D(u)|\geq 1$. Therefore, $u$ has a large second neighborhood in $D$, which is a contradiction. This proves that $D'$ contains at least two vertices with large second neighborhoods. Then there exists a vertex $w\neq u$ in $D'$ such that $|N^+_{D'}(w)|\leq |N^{++}_{D'}(w)|$. As $w\neq u$, by the definition of $D'$, we have that $N^+_D(w)=N^+_{D'}(w)$ and $N^{++}_{D'}(w)\subseteq N^{++}_D(w)$. Combining this with the previous observation, we get $|N^+_D(w)|\leq |N^{++}_D(w)|$, implying that $w$ has a large second neighborhood in $D$, which is a contradiction. 
\end{proof} 

By the above proposition, if Conjecture~\ref{suresh} is true, then Conjecture~\ref{ssnc} is true, and these together imply that Conjecture~\ref{suresh1} is true. Thus Conjectures~\ref{suresh1} and~\ref{suresh} are equivalent. We do not know if these conjectures are equivalent to Conjecture~\ref{ssnc} or if they hold for the class of graphs studied in Section~\ref{2-deg}.
\section{Acknowledgements}
Part of this work was done when the first author was a postdoc at the Institute of Mathematical Sciences, Chennai. The first author would also like to thank the support provided by the NPIU TEQIP-III grant GO/TEQIP-III/EO/17-18/57.
\bibliographystyle{abbrv}

\newpage
\appendix
\section{Omitted proofs}\label{sec:omitted}

We first prove a proposition that will be used in the proof of Lemma~\ref{module} and Theorem~\ref{sedimentation}.
\begin{proposition}\label{compaction}
Let $(x_1,x_2,\ldots,x_n)$ be a median order of a tournament $T$. Let $i,j\in\{1,2,\ldots,n\}$ such that $i<j-1$ and $x_i$ and $x_j$ belong to a module in $T$ and every vertex in $\{x_{i+1},\ldots,x_{j-1}\}$ is outside this module. Then:
\begin{myenumerate}
\item\label{rightcompact} $(x_1,x_2,\ldots,x_{i-1},x_{i+1},x_{i+2},\ldots,x_{j-1},x_i,x_j,x_{j+1},\ldots,x_n)$ is a median order of $T$, and
\item\label{leftcompact} $(x_1,x_2,\ldots,x_i,x_j,x_{i+1},x_{i+2},\ldots,x_{j-1},x_{j+1},\ldots,$ $x_n)$ is a median order of $T$.
\end{myenumerate} 
\end{proposition}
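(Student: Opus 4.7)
The plan is to prove both parts simultaneously by comparing the forward arc counts of the two proposed orderings against that of $L=(x_1,\ldots,x_n)$, and using the module hypothesis to tie the two comparisons together. Let me abbreviate $A=|N^+(x_i)\cap\{x_{i+1},\ldots,x_{j-1}\}|$ and $B=|N^-(x_i)\cap\{x_{i+1},\ldots,x_{j-1}\}|$. Since $x_i$ and $x_j$ lie in a common module $M$ in $T$, and every $x_k$ with $i<k<j$ lies outside $M$, the module property gives $N^+(x_j)\cap\{x_{i+1},\ldots,x_{j-1}\}=N^+(x_i)\cap\{x_{i+1},\ldots,x_{j-1}\}$ and analogously for in-neighbors. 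Thus $|N^+(x_j)\cap\{x_{i+1},\ldots,x_{j-1}\}|=A$ and $|N^-(x_j)\cap\{x_{i+1},\ldots,x_{j-1}\}|=B$ as well.

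For part \ref{rightcompact}, let $L'$ be the proposed ordering. The only pairs of vertices whose relative order differs between $L$ and $L'$ are the pairs $\{x_i,x_k\}$ with $k\in\{i+1,\ldots,j-1\}$ (in particular, $x_i$ and $x_j$ appear in the same relative order in both, as do $x_j$ and each $x_k$). For each such $k$, the arc between $x_i$ and $x_k$ is forward in $L$ iff $x_k\in N^+(x_i)$, while it is forward in $L'$ iff $x_k\in N^-(x_i)$. Hence the number of forward arcs in $L'$ equals that in $L$ plus $(B-A)$. For part \ref{leftcompact}, let $L''$ be the corresponding ordering. By the same reasoning, the only pairs whose relative order differs are $\{x_j,x_k\}$ with $k\in\{i+1,\ldots,j-1\}$, and the change in forward arc count is $|N^+(x_j)\cap\{x_{i+1},\ldots,x_{j-1}\}|-|N^-(x_j)\cap\{x_{i+1},\ldots,x_{j-1}\}|$, which by the module-property identity above equals $A-B$.

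Now, since $L$ is a median order of $T$, no reordering of $V(T)$ can have strictly more forward arcs than $L$. Applying this to $L'$ yields $B-A\le 0$, i.e.\ $A\ge B$; applying it to $L''$ yields $A-B\le 0$, i.e.\ $A\le B$. Combined, $A=B$, so both $L'$ and $L''$ have exactly as many forward arcs as $L$. Therefore both are themselves median orders of $T$, establishing \ref{rightcompact} and \ref{leftcompact}.

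The only subtle step is the identification of which pairs change their relative order under the two reorderings; everything else is an elementary count, and the crucial cancellation $A=B$ falls out for free from the symmetry between $x_i$ and $x_j$ provided by the module hypothesis. No use of the feedback property \ref{feedback} or of Proposition~\ref{movexi} is needed beyond the very definition of a median order.
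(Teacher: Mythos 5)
Your proof is correct. The identification of exactly which pairs change relative order under each reordering is right, the resulting forward-arc counts $B-A$ and $A-B$ are computed correctly, and the squeeze $A=B$ from the maximality of $L$ is valid; the module hypothesis is used exactly where it is needed, namely to equate the counts for $x_j$ over $\{x_{i+1},\ldots,x_{j-1}\}$ with those for $x_i$.

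The paper reaches the same conclusion by the same underlying squeeze, but packages it through previously established tools: it applies Proposition~\ref{feedback}\ref{xjf} to $x_{i+1}$ and $x_j$ (giving $A\leq B$ after transferring to $x_i$ via the module property), Proposition~\ref{feedback}\ref{xif} to $x_i$ and $x_{j-1}$ (giving $A\geq B$), and then invokes Proposition~\ref{movexi}\ref{xi} to perform the move; part~\ref{leftcompact} is dispatched by symmetry. Your argument is essentially an inlined re-derivation of those two propositions for this particular configuration: the direct count of forward arcs is precisely how the feedback property and Proposition~\ref{movexi} are themselves proved. What your version buys is self-containedness and a pleasant simultaneity---both parts fall out of a single comparison, and the cancellation $A=B$ is visibly forced by the $x_i$/$x_j$ symmetry of the module. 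What the paper's version buys is brevity given that the auxiliary propositions are already on record and are reused elsewhere. Either proof is acceptable.
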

\begin{proof}
Consider the set of vertices $X=\{x_{i+1},x_{i+2},\ldots,x_{j-1}\}$. Suppose that $|N^+(x_i)\cap X|>|N^-(x_i)\cap X|$. As $x_i$ and $x_j$ belong to a module in $T$ and every vertex of $X$ is outside this module, we have $N^+(x_j)\cap X=N^+(x_i)\cap X$ and $N^-(x_j)\cap X=N^-(x_i)\cap X$. This gives us $|N^+(x_j)\cap X|>|N^-(x_j)\cap X|$, which contradicts Proposition~\ref{feedback}\ref{xjf} applied on $x_{i+1}$ and $x_j$. Therefore, $|N^+(x_i)\cap X|\leq |N^-(x_i)\cap X|$. Then by Proposition~\ref{feedback}\ref{xif} applied on $x_i$ and $x_{j-1}$, we have $|N^+(x_i)\cap X|=|N^-(x_i)\cap X|$. Applying Proposition~\ref{movexi}\ref{xi} on $x_i$ and $x_{j-1}$, we now get that $(x_1,x_2,\ldots,x_{i-1},x_{i+1},x_{i+2},\ldots,x_{j-1},x_i,x_j,x_{j+1},\ldots,$ $x_n)$ is a median order of $T$. This proves~\ref{rightcompact}. It is easy to see, by repeating the same arguments for $x_j$ and $X$, that \ref{leftcompact} is also true.
\end{proof}

We are now ready to prove Lemma~\ref{module}, which is a slight variation of an observation of Ghazal~\cite{ghazal2015remark}.
\setcounter{lemma}{2}
\begin{lemma}
Let $\mathcal{I} = \{I_1, I_2,\ldots, I_r\}$ be a partition of the vertex set of a tournament $T$ into modules and let $L$ be a median order of $T$. Then there is a good median order $L'$ of $T$ with respect to $\mathcal{I}$ such that $L$ and $L'$ have the same feed vertex.
\end{lemma}
\begin{proof}
Given an ordering $P$ of the vertices of $T$ and a module $I\in\mathcal{I}$, a maximal subset of $I$ that is consecutive in $P$ is said to be a ``fragment'' of $I$ in $P$. Clearly, the fragments of a module $I\in\mathcal{I}$ are ordered from left to right in $P$. We define the ``weight'' of a vertex $v\in I$ with respect to $P$ to be the number of fragments of $I$ that occur after the fragment of $I$ containing $v$. The weight of $P$ is defined to be the sum of the weights of all the vertices with respect to $P$. Note that the median orders of $T$ with zero weight are exactly the good median orders of $T$ with respect to $\mathcal{I}$. Now suppose that $P$ is a median order of $T$ with non-zero weight. Then there exists $I\in\mathcal{I}$ and $u,v\in I$ such that they are not consecutive in $P$ and no vertex between them in $P$ belongs to $I$. Let $P'$ be the median order of $T$ obtained from $P$ by applying Proposition~\ref{compaction}\ref{rightcompact} to $P$, $u$ and $v$. It can be verified that the weight of $P'$ is strictly less than the weight of $P$ and that $P$ and $P'$ have the same feed vertex. This means that by applying the above procedure repeatedly to the median order $L$ of $T$, we can obtain a median order $L'$ of $T$ with zero weight (hence, it is a good median order of $T$ with respect to $\mathcal{I}$) having the same feed vertex as $L$.
\end{proof}

We shall now prove the following proposition and theorem which are adapted from the proof of Havet and Thomasse so as to incorporate our slightly changed definition of sedimentation.
\begin{proposition}\label{sed}
Let $T$ be a tournament and $L=(x_1,x_2,\ldots,x_n)$ be a median order of $T$ such that $|N^+(x_n)|=|N^{++}(x_n)|$.
\begin{myenumerate}
\item\label{movefeed} If $N^-(x_n)=N^{++}(x_n)$, then $(x_n,x_1,x_2,\ldots,x_{n-1})$ is a median order of $T$, and
\item\label{movebad} If $N^-(x_n)\setminus N^{++}(x_n)\neq\emptyset$ and $x_i$ is the vertex in $N^-(x_n)\setminus N^{++}(x_n)$ that occurs first in $L$, then $(x_i,x_1,x_2,\ldots,x_{i-1},x_{i+1},\ldots,x_n)$ is a median order of $T$.
\end{myenumerate}
\end{proposition}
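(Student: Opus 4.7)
Both statements follow by applying Proposition~\ref{movexi}\ref{xj}, which allows $x_j$ to be moved to position $1$ provided $|N^+(x_j)\cap\{x_1,\ldots,x_{j-1}\}|=|N^-(x_j)\cap\{x_1,\ldots,x_{j-1}\}|$. For part~(a) I take $j=n$: since every neighbor of $x_n$ lies in $\{x_1,\ldots,x_{n-1}\}$, the hypotheses $|N^+(x_n)|=|N^{++}(x_n)|$ and $N^-(x_n)=N^{++}(x_n)$ together yield $|N^+(x_n)|=|N^-(x_n)|$, so Proposition~\ref{movexi}\ref{xj} applies directly and delivers the claimed median order.

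For part~(b) I take $j=i$, the position of $x_i$. The $\le$ direction of the required balance is Proposition~\ref{feedback}\ref{xjf}, so the task is to show $|N^+(x_i)\cap\{x_1,\ldots,x_{i-1}\}|\ge|N^-(x_i)\cap\{x_1,\ldots,x_{i-1}\}|$. Write $A=N^+(x_n)$, $B=N^{++}(x_n)$, $C=N^-(x_n)\setminus N^{++}(x_n)$, so $A\sqcup B\sqcup C=V(T)\setminus\{x_n\}$ and by hypothesis $|A|=|B|$. Because $x_i$ is the first vertex of $C$ in $L$, the prefix $\{x_1,\ldots,x_{i-1}\}$ sits inside $A\cup B$; and because $x_i\in C$ is not a second out-neighbor of $x_n$, we have $x_i\to a$ for every $a\in A$. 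Setting $A_0=A\cap\{x_1,\ldots,x_{i-1}\}$ and $B_0=B\cap\{x_1,\ldots,x_{i-1}\}$, this forces $A_0\subseteq N^+(x_i)$ and $N^-(x_i)\cap\{x_1,\ldots,x_{i-1}\}\subseteq B_0$.

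Consider next the prefix subtournament $T_i:=T[\{x_1,\ldots,x_i\}]$: by Proposition~\ref{subtournament}\ref{suborder}, its median order is $(x_1,\ldots,x_i)$ with feed vertex $x_i$. Theorem~\ref{thmht} applied to $T_i$, combined with the general inclusion $N^{++}_{T_i}(x_i)\subseteq N^-_{T_i}(x_i)$ in any tournament, yields the chain
\[
\bigl|N^+_{T_i}(x_i)\bigr|\ \le\ \bigl|N^{++}_{T_i}(x_i)\bigr|\ \le\ \bigl|N^-_{T_i}(x_i)\bigr|.
\]
The plan is to show that both of these inequalities are actually equalities; this is precisely the statement that $T_i$ with feed vertex $x_i$ satisfies the hypotheses of part~(a) of the very proposition we are proving. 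Granting this, part~(a) applied to $T_i$ shows that $(x_i,x_1,\ldots,x_{i-1})$ is a median order of $T_i$, and Proposition~\ref{subtournament}\ref{replace} then lifts it to the desired median order $(x_i,x_1,\ldots,x_{i-1},x_{i+1},\ldots,x_n)$ of $T$.

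The main obstacle is verifying those two equalities, which I would treat by contradiction. Strictness of the second inequality would produce some $x_k\in B_0$ with $x_k\to x_i$ which dominates every vertex of $N^+(x_i)\cap\{x_1,\ldots,x_{i-1}\}$, and in particular all of $A_0$; since $x_k\in B$ some $y\in A$ dominates $x_k$, and the fact that $x_k$ dominates $A_0$ forces $y\in A\setminus A_0$, so $y$ appears after $x_i$ in $L$. The plan from there is to combine the relative positions of $x_k$, $x_i$ and $y$ with the global balance $|A|=|B|$ to construct a reordering of $L$ with strictly more forward arcs, contradicting the maximality of~$L$; a symmetric analysis should rule out strictness of the first inequality. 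It is at this step that the hypothesis $|N^+(x_n)|=|N^{++}(x_n)|$ is genuinely used, and I expect the bulk of the technical work to lie in extracting the contradictory reordering from the interaction between $B_0$, $A\setminus A_0$, and $x_i$.
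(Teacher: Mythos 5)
Your part (a) is exactly the paper's argument and is fine. In part (b) you correctly identify the target (balance of $x_i$ over the prefix $D=\{x_1,\ldots,x_{i-1}\}$ so that Proposition~\ref{movexi}\ref{xj} applies), and you correctly extract the two inclusions $A_0\subseteq N^+(x_i)\cap D$ and $N^-(x_i)\cap D\subseteq B_0$, together with $C\cap D=\emptyset$. But the proof then stalls: the inequality $|N^+(x_i)\cap D|\geq|N^-(x_i)\cap D|$ is never established. Applying Theorem~\ref{thmht} to the \emph{prefix} tournament $T_i=T[\{x_1,\ldots,x_i\}]$ only reproduces the feedback inequality $|N^+_{T_i}(x_i)|\leq|N^-_{T_i}(x_i)|$, i.e.\ the direction you already have; it gives no purchase on the reverse direction, and your plan to upgrade the chain to equalities by ``constructing a contradictory reordering'' from a single vertex $x_k\in B_0$ is left entirely unexecuted and is not obviously realizable. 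This is a genuine gap, not a presentational one.

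The missing idea is to apply Theorem~\ref{thmht} to the \emph{suffix} instead: by Proposition~\ref{subtournament}\ref{suborder}, $(x_i,\ldots,x_n)$ is a median order of $T[U]$ with $U=\{x_i,\ldots,x_n\}$ and feed vertex $x_n$, so $|N^+(x_n)\cap U|\leq|N^{++}(x_n)\cap U|$. Combined with the global hypothesis $|N^+(x_n)|=|N^{++}(x_n)|$ this yields $|A_0|=|N^+(x_n)\cap D|\geq|N^{++}(x_n)\cap D|=|B_0|$, and then your own inclusions give
$|N^+(x_i)\cap D|\geq|A_0|\geq|B_0|\geq|N^-(x_i)\cap D|$, which together with the feedback property forces equality. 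At that point Proposition~\ref{movexi}\ref{xj} finishes the proof directly; the detour through part (a) applied to $T_i$ and the lift via Proposition~\ref{subtournament}\ref{replace} are unnecessary. This is precisely how the paper argues, so once the suffix application of Havet--Thomass\'e is inserted, your write-up collapses to the paper's proof.
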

\begin{proof}
\ref{movefeed} Since $|N^+(x_n)|=|N^{++}(x_n)|$ and $N^-(x_n)=N^{++}(x_n)$, we have $|N^+(x_n)|=|N^-(x_n)|$. Therefore, by Proposition~\ref{movexi}\ref{xj} applied on $x_1$ and $x_n$, we have that $(x_n,x_1,x_2,\ldots,x_{n-1})$ is a median order of $T$.

\ref{movebad} Let $D=\{x_1,x_2,\ldots,x_{i-1}\}$ and $U=\{x_i,x_{i+1},\ldots,x_n\}$.
By Proposition~\ref{subtournament}\ref{suborder}, $(x_i,x_{i+1},\ldots,x_n)$ is a median order of the subtournament $T[U]$ of $T$. Applying Theorem~\ref{thmht} to this median order of the tournament $T[U]$, we have $|N^+_T(x_n)\cap U|=|N^+_{T[U]}(x_n)|\leq|N^{++}_{T[U]}(x_n)|\leq |N^{++}_T(x_n)\cap U|$. This together with the fact that, $|N^+_T(x_n)|= |N^+_T(x_n)\cap D|+|N^+_T(x_n)\cap U|$, $|N^{++}_T(x_n)| = |N^{++}_T(x_n)\cap D|+|N^{++}_T(x_n)\cap U|$ and $|N^+_T(x_n)|=|N^{++}_T(x_n)|$ (assumption of the lemma) implies that $|N^+_T(x_n)\cap D|\geq |N^{++}_T(x_n)\cap D|$. As $x_i\in N^-_T(x_n)\setminus N^{++}_T(x_n)$, we have that $N^+_T(x_n)\cap D\subseteq N^+_T(x_i)\cap D$ and $N^-_T(x_i)\cap D\subseteq N^-_T(x_n)\cap D$. As $x_i$ is the first vertex in $L$ that belongs to $N^-_T(x_n)\setminus N^{++}_T(x_n)$, we also have that $N^-_T(x_n)\cap D=N^{++}_T(x_n)\cap D$. By Proposition~\ref{feedback}\ref{xif} applied to $x_1$ and $x_i$, we get $|N^+_T(x_i)\cap D|\leq |N^-_T(x_i)\cap D|$. Combining everything, we have $|N^+_T(x_n)\cap D|\leq |N^+_T(x_i)\cap D|\leq |N^-_T(x_i)\cap D|\leq |N^-_T(x_n)\cap D|=|N^{++}_T(x_n)\cap D|$. Recalling our previous observation that $|N^+_T(x_n)\cap D|\geq |N^{++}_T(x_n)\cap D|$, we then have $|N^+_T(x_i)\cap D|=|N^-_T(x_i)\cap D|$. Now from Proposition~\ref{movexi}\ref{xj} applied on $x_1$ and $x_i$, we get that $(x_i,x_1,x_2,\ldots,x_{i-1},x_{i+1},\ldots,x_n)$ is a median order of $T$.
\end{proof}

Following is the theorem from~\cite{havet2000median} that we need.

\begin{theorem}[Havet-Thomasse~\cite{havet2000median}]\label{thmhtsed}
Let $T$ be a tournament and $L=(x_1,x_2,\ldots,x_n)$ be a median order of it such that $|N^+(x_n)|=|N^{++}(x_n)|$. Let $b_1,b_2,\ldots,b_k$ be the vertices in $N^-(x_n)\setminus N^{++}(x_n)$ and $v_1,v_2,\ldots,v_{n-k-1}$ be the vertices in  $N^+(x_n)\cup N^{++}(x_n)$, both enumerated in the order in which they appear in $L$. Then $(b_1,b_2,\ldots,b_k,x_n,$ $v_1,v_2,\ldots,v_{n-k-1})$ is a median order of $T$.
\end{theorem}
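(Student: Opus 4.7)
The plan is to prove this by induction on $k$, the number of vertices in $N^-(x_n)\setminus N^{++}(x_n)$. The base case $k=0$ means $N^-(x_n)=N^{++}(x_n)$ (since $N^{++}(x_n)\subseteq N^-(x_n)$ in any tournament: a second out-neighbor of $x_n$ is by definition not an out-neighbor of $x_n$, and it is not $x_n$ itself), so Proposition~\ref{sed}\ref{movefeed} directly gives that $(x_n,v_1,v_2,\ldots,v_{n-1})$ is a median order of $T$, which is exactly the desired conclusion when $k=0$.

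For the inductive step, suppose $k\geq 1$ and the statement holds for all tournaments with fewer bad vertices. Since $b_1$ is the first vertex of $N^-(x_n)\setminus N^{++}(x_n)$ appearing in $L$, Proposition~\ref{sed}\ref{movebad} tells us that $L_1=(b_1,x_1,x_2,\ldots,x_{i_1-1},x_{i_1+1},\ldots,x_n)$ is a median order of $T$, where $x_{i_1}=b_1$. Let $T'=T-\{b_1\}$ and consider the suborder $L'=(x_1,x_2,\ldots,x_{i_1-1},x_{i_1+1},\ldots,x_n)$ occupying positions $2$ through $n$ of $L_1$. By Proposition~\ref{subtournament}\ref{suborder}, $L'$ is a median order of $T'$ with feed vertex $x_n$.

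The crucial observation is that the hypotheses of the theorem transfer cleanly to $(T',L')$. Indeed, $b_1\in N^-_T(x_n)$ so $b_1\notin N^+_T(x_n)$, giving $N^+_{T'}(x_n)=N^+_T(x_n)$. Moreover, since $b_1\notin N^+_T(x_n)$, the vertex $b_1$ cannot be the middle vertex of any directed path of length two starting from $x_n$, and $b_1\notin N^{++}_T(x_n)$ by choice, so $N^{++}_{T'}(x_n)=N^{++}_T(x_n)$. Hence $|N^+_{T'}(x_n)|=|N^{++}_{T'}(x_n)|$. Consequently, the bad vertices with respect to $T'$ are exactly $b_2,\ldots,b_k$ (in the same relative order), and the good vertices are still $v_1,\ldots,v_{n-k-1}$. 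Applying the induction hypothesis to $(T',L')$ yields that $(b_2,b_3,\ldots,b_k,x_n,v_1,v_2,\ldots,v_{n-k-1})$ is a median order of $T'$. Substituting this suborder back into positions $2$ through $n$ of $L_1$ via Proposition~\ref{subtournament}\ref{replace} produces the desired median order $(b_1,b_2,\ldots,b_k,x_n,v_1,v_2,\ldots,v_{n-k-1})$ of $T$.

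The main obstacle is verifying that the move from $L$ to $L_1$ preserves the inductive hypothesis—specifically, checking that removing $b_1$ does not disturb either $N^+(x_n)$ or $N^{++}(x_n)$, so that the equality of their sizes persists in $T'$. Once this transfer is established, the induction runs smoothly since each application of Proposition~\ref{sed}\ref{movebad} strips off the leftmost bad vertex and shifts it to the front, and the use of Proposition~\ref{subtournament}\ref{suborder},\ref{replace} handles the bookkeeping between $T$ and $T'$.
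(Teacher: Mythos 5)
Your proof is correct and follows essentially the same route as the paper's: induction on $k$ using Proposition~\ref{sed}\ref{movefeed} for the base case, Proposition~\ref{sed}\ref{movebad} to move $b_1$ to the front, and Proposition~\ref{subtournament}\ref{suborder},\ref{replace} to pass to $T-\{b_1\}$ and back after checking that $N^+(x_n)$ and $N^{++}(x_n)$ are unchanged. The only difference is that you spell out the verification that removing $b_1$ preserves $N^{++}(x_n)$, which the paper leaves as "easy to see."
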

\begin{proof}
We prove this by induction on $|N^-(x_n)\setminus N^{++}(x_n)|$.
If $|N^-(x_n)\setminus N^{++}(x_n)|=0$, then we are done by Proposition~\ref{sed}\ref{movefeed}. So let us assume that $N^-(x_n)\setminus N^{++}(x_n)\neq\emptyset$ and that $b_1,b_2,\ldots,b_k,v_1,v_2,\ldots,v_{n-k-1}$ are the vertices as defined in the statement of the theorem. By Proposition~\ref{sed}\ref{movebad}, we know that $\hat{L}=(x_i,x_1,\ldots,x_{i-1},x_{i+1},$ $\ldots,x_n)$ is a median order of $T$, where $x_i=b_1$. By Proposition~\ref{subtournament}\ref{suborder}, we know that $L'=(x_1,x_2,\ldots,x_{i-1},x_{i+1},$ $\ldots,x_n)$ is a median order of $T'=T-\{b_1\}$. It is easy to see that $N^+_{T'}(x_n)=N^+_T(x_n)$, $N^-_{T'}(x_n)=N^-_T(x_n)\setminus\{b_1\}$ and $N^{++}_{T'}(x_n)=N^{++}_T(x_n)$. Therefore, $|N^+_{T'}(x_n)|=|N^{++}_{T'}(x_n)|$ and $N^-_{T'}(x_n)\setminus N^{++}_{T'}(x_n)=\{b_2,b_3,\ldots,b_k\}$. By the induction hypothesis applied on the tournament $T'$ and the median order $L'$, we get that $(b_2,b_3,\ldots,b_k,x_n,v_1,v_2,$ $\ldots,v_{n-k-1})$ is a median order of $T'$. Now by Proposition~\ref{subtournament}\ref{replace}, we can replace the subsequence $(x_1,\ldots,x_n)$ of $\hat{L}$ with any median order of $T'$ to obtain a median order of $T$. Therefore, $(b_1,b_2,\ldots,b_k,x_n,v_1,v_2,\ldots,v_{n-k-1})$ is a median order of $T$.
\end{proof}

We are now ready to give a proof of Theorem~\ref{sedimentation}, which is again only a slight variation of a result of Ghazal~\cite{ghazal2015remark}.
\setcounter{theorem}{3}
\begin{theorem}
Let $T$ be a tournament. If $\mathcal{I}$ is a partition of $V(T)$ into modules and $L$ is a good median order of $T$ with respect to $\mathcal{I}$, then $Sed_{\mathcal{I}}(L)$ is also a good median order of $T$ with respect to $\mathcal{I}$.
\end{theorem}
\begin{proof}
Let $L=(x_1,x_2,\ldots,x_n)$ and let $I\in\mathcal{I}$ be the module containing $x_n$. If $|N^+_T(x_n)\setminus I|<|N^{++}_T(x_n)\setminus I|$, then $Sed_{\mathcal{I}}(L)=L$ and there is nothing to prove. Therefore, by Proposition~\ref{modht}, we can assume that $|N^+_T(x_n)\setminus I|=|N^{++}_T(x_n)\setminus I|$. Let $t=|I|$. Then $I=\{x_{n-t+1},x_{n-t+2},\ldots,x_n\}$. Let $b_1,b_2,\ldots,b_k$ be the vertices outside $I$ that are in-neighbors of $x_n$ but not its second out-neighbors (i.e., $\{b_1,b_2,\ldots,b_k\}=(N^-_T(x_n)\setminus N^{++}_T(x_n))\setminus I$), where $0\leq k\leq n-t$, and $v_1,v_2,\ldots,v_{n-t-k}$ the vertices in $(N^+_T(x_n)\cup N^{++}_T(x_n))\setminus I$, both enumerated in the order in which they appear in $L$.

For ease of notation, we denote $x_{n-t+i}$ by $u_i$, for each $i\in\{1,2,\ldots,t\}$. Then $u_1=x_{n-t+1}$ and $u_t=x_n$. By Proposition~\ref{subtournament}\ref{suborder}, $L'=(x_1,x_2,\ldots,x_{n-t+1}=u_1)$ is a median order of $T'=T-\{u_2,u_3,\ldots,u_t\}$. As $u_1$ and $x_n$ belong to the module $I$ of $T$, $N^+_{T'}(u_1)=N^+_T(u_1)\setminus I=N^+_T(x_n)\setminus I$ and $N^-_{T'}(u_1)=N^-_T(u_1)\setminus I=N^-_T(x_n)\setminus I$. By Proposition~\ref{modulesec}, we further have $N^{++}_{T'}(u_1)=N^{++}_T(u_1)\setminus I=N^{++}_T(x_n)\setminus I$. Since $|N^+_T(x_n)\setminus I|=|N^{++}_T(x_n)\setminus I|$, it then follows that $|N^+_{T'}(u_1)|=|N^{++}_{T'}(u_1)|$ and that $N^-_{T'}(u_1)\setminus N^{++}_{T'}(u_1)=\{b_1,b_2,\ldots,b_k\}$.

By Theorem~\ref{thmhtsed} applied on $T'$ and $L'$, we get that $(b_1,b_2,\ldots,b_k,u_1,v_1,v_2,\ldots,v_{n-t-k})$ is a median order of $T'$. From Proposition~\ref{subtournament}\ref{replace}, we know that we can replace the subsequence $(x_1,x_2,\ldots,x_{n-t+1}=u_1)$ of $L$ with this new median order of $T'$ to get another median order $(b_1,b_2,\ldots,b_k,u_1,v_1,v_2,\ldots,v_{n-t-k},u_2,u_3,\ldots,u_t)$ of $T$. By repeatedly applying Proposition~\ref{compaction}\ref{leftcompact} on the median order $(b_1,b_2,\ldots,b_k,u_1,u_2,\ldots,u_i,v_1,v_2,\ldots,$ $v_{n-t-k},u_{i+1},u_{i+2},\ldots,u_t)$ of $T$ and the vertices $u_i$ and $u_{i+1}$, for each value of $i$ from 1 to $t-1$, we can conclude that $Sed_{\mathcal{I}}(L)=(b_1,b_2,\ldots,b_k,u_1,u_2,\ldots,u_t,v_1,v_2,\ldots,v_{n-t-k})$ is a median order of $T$.

It only remains to be proven that $Sed_{\mathcal{I}}(L)$ is a good median order of $T$ with respect to $\mathcal{I}$. It can be easily seen that for any $J\in\mathcal{I}$, if there exists $u\in J$ such that $u\in N^-_T(x_n)\setminus N^{++}_T(x_n)=\{b_1,b_2,\ldots,b_k\}$, then $J\subseteq N^-_T(x_n)\setminus N^{++}_T(x_n)$. As $\{u_1,u_2,\ldots,u_t\}=I\in\mathcal{I}$, this implies that every other module in $\mathcal{I}$ is a subset of either $\{b_1,b_2,\ldots,b_k\}$ or $\{v_1,v_2,\ldots,v_{n-t-k}\}$. Since the vertices in each set in $\mathcal{I}$ occur in $Sed_{\mathcal{I}}(L)$ in the same order as they occur in $L$, and $L$ is a good median order of $T$ with respect to $\mathcal{I}$, we can conclude that the vertices in each module in $\mathcal{I}$ occur consecutively in $Sed_{\mathcal{I}}(L)$ too.
\end{proof}

\end{document}